\NewDocumentCommand{\dslash}{s}{%
  \IfBooleanTF{#1}
    {\big/\mkern-7mu\big/}
    {/\mkern-6mu/}%
}
\def\be{\begin{equation}}
\def\ee{\end{equation}}
\def\tilde{\widetilde}
\def\frak{\mathfrak}
\def\R{{\mathbb R}}
\def\C{{\mathbb C}}
\def\[{\bigl [}
\def\]{\bigr ]}
\def\tr{{\mathrm {tr}}}
\def\Z{{\mathbb Z}}
\def\symbg{\textbf{Sym}\left(\Sigma,\textbf{B}{\cal G}\right)}
\def\tilde{\widetilde}
\def\bar{\overline}
\def\suu{\String{U(1)_k}}
\font\teneurm=eurm10 \font\seveneurm=eurm7  \font\fiveeurm=eurm5
\font\tencmmib=cmmib10 \skewchar\tencmmib='177
\font\sevencmmib=cmmib7 \skewchar\sevencmmib='177
\font\fivecmmib=cmmib5 \skewchar\fivecmmib='177
\newcommand{\shape}{
  \raisebox{1pt}{\rm\normalfont\textesh}
}
\newcommand{\aut}[1]{\textrm{AUT}\left({#1}\right)}
\newcommand{\out}[1]{\textrm{OUT}\left({#1}\right)}
\newcommand{\String}[1]{\text{String}({#1})}
\theoremstyle{definition}
\newtheorem{definition}{Definition}[section]
\newtheorem*{fact*}{Fact}
\newtheorem*{propo*}{Proposition}
\newtheorem{propo}{Proposition}[section]
\newtheorem{remark}{Remark}[section]
\title{
\fontsize{18pt}{20pt}\selectfont\bfseries
 Higher-form symmetries as higher automorphism bundles
}
\author[a]{Alonso Perez-Lona}
\affiliation[a]{Department of Physics MC 0435, 850 West Campus Drive, Virginia Tech, Blacksburg, VA 24061}
\emailAdd{aperezl@vt.edu}
\abstract{The notion of global higher-form symmetries has received much attention, but leaves room for a more systematic mathematical formulation. In this article, we highlight the concept of higher automorphism bundles from the field of higher categorical differential geometry and higher gauge theory,
and we demonstrate that this neatly reproduces and clarifies many examples and phenomena discussed in the literature. We rigorously construct the higher-form symmetries of pure gauge theory of a general strict Lie $2$-group, featuring center higher-form symmetries. We then apply this explicitly to several physically-relevant examples, such as $U(1)$ bundles, bundle gerbes, and certain string $2$-groups related to $SU(n)$ instanton restriction, and $5d$ supergravity. We elaborate on the nontrivial interplay between global higher-form symmetries, connection data, and symmetry gauging.}
\begin{document}
\setlength{\abovedisplayskip}{2pt}
\setlength{\belowdisplayskip}{2pt}
\tikzset{Rightarrow/.style={double equal sign distance,>={Implies},->},
triple/.style={-,preaction={draw,Rightarrow}},
quadruple/.style={preaction={draw,Rightarrow,shorten >=0pt},shorten >=1pt,-,double,double
distance=0.2pt}}

\maketitle

\flushbottom

\section{Introduction}\label{sec:intro}

The notion of symmetry, essential for Physics as a whole, has received renewed attention due to \cite{Gaiotto:2014kfa}. There, symmetries in Quantum Field Theory were reconceptualized as topological quantum operators, which are quantum operators supported at submanifolds of spacetime in a way that is invariant under deformations of such submanifolds. One motivation comes from Noether currents, which describe what are now known as \textit{invertible 0-form} (smooth) global symmetries. The Noether current, in the case the symmetry generated can be integrated to an action by the group $U(1)$, for example, is a differential $(n-1)$-form $j$ that is closed \textit{on-shell}
\begin{equation}
    d\,j\vert_{\rm on-shell} =0.
\end{equation}
One posits the existence of a quantum operator in \textit{correspondence} with the function
\begin{equation}
    U_{\alpha}\left(S^{(d-1)}\right) \longleftrightarrow \exp\left(\alpha\oint_{S^{(d-1)}} j \right),
\end{equation}
for $\alpha$ the symmetry parameter, supported at a closed $(d-1)$-dimensional submanifold $S^{(d-1)}\subset M$ of spacetime. The quantum operator is called  \textit{topological} in the sense that
\begin{equation}
    U_{\alpha}\left(S^{(d-1)}\right) = U_{\alpha}\left(S'^{(d-1)}\right)
\end{equation}
whenever $S^{(d-1)},S'^{(d-1)}$ are cobordant, as a consequence of $j$ being closed on-shell. 

The generalization proposed in \cite{Gaiotto:2014kfa} features \textit{higher-form} symmetries as topological operators supported on submanifolds of \textit{higher} \textit{co}dimension. In particular, smooth symmetries are thought to give rise to \textit{higher Noether currents} (as rigorously derived in \cite{Sati:2015yda} cf. \cite{schreiber-noether}).

This perspective has led to significant insights in QFT, as can be seen from e.g. \cite{Cordova:2022ruw,Bhardwaj:2023kri,Luo:2023ive} and references therein. However, from the mathematical perspective, there is a significant drawback. The envisioned correspondence between higher quantum operators, on the one hand, and functions/classical observables, on the other hand (and defects, on the third), relies on the path-integral formulation \cite[p.2]{Gaiotto:2014kfa}. As a consequence, it is seldom clear how to construct the actions that should define quantum operators. Further, the operators thought to be ``charged'' under these generalized symmetries are not only supposed to be \textit{nonlocal} but generally non-topological. Unlike the case of QFT with only local operators (e.g. \cite{Haag:1963dh,Henriques:2016ajg,Costello:2023knl}), currently no definition of QFT with extended non-topological operators exists. Some mathematical proposals and frameworks have therefore emerged to ameliorate these issues, including \cite{Gripaios:2022yjy,Gripaios:2023ups,Gwilliam:2025vdu}, which require a complete axiomatization of the field theory of interest, and \cite{Carqueville:2017aoe,Carqueville:2023qrk,Freed:2022qnc}, which focus on the topological, as opposed to smooth, aspect, of symmetries. 

In this paper, we explain how certain \textit{invertible higher-form symmetries}, in great generality, admit a description as \textit{automorphisms} in the context of \textit{smooth higher geometry}, without needing a full characterization of the corresponding \textit{fully quantum} theory. We furthermore discuss the role of connections for these symmetries. This formulation gives a straightforward understanding of \textit{gauging}. After presenting the general proposal in Section~\ref{sec:general}, we construct the explicit symmetry action for principal $2$-group bundles in Section~\ref{sec:2-action}. In subsequent sections, we apply this to concrete examples of pure (higher) gauge theory, discussing the incorporation of connections in several cases. The general outline of these examples consists on first computing the higher-form symmetries described by higher automorphisms, then describing the action on the ``charged'' objects, often involving connections, and finally describing the gauging of these symmetries. In Section~\ref{sec:puregauge}, we discuss how known results of pure $G=U(1)$ gauge theory fit our general framework. In Section~\ref{sec:bfields}, we analyze the simplest case of a smooth $2$-group that is not a Lie group, that of B fields as bundle gerbes. Finally, in Section~\ref{sec:string2groups}, we derive specific higher-form symmetries of pure gauge theory of some String $2$-groups as nontrivial examples of principal $2$-group bundles, explaining the subtleties that can arise when incorporating connections into the discussion. Section~\ref{sec:conclusion} mentions several directions of further work. Appendix~\ref{app:adjbundles} is a compilation of mathematical definitions we use to describe principal bundles of smooth $2$-groups with nontrivial connections. Appendix~\ref{app:2grpds} collects definitions and results on the theory of $2$-groupoids used for the construction presented in Section~\ref{sec:2-action}. Appendix~\ref{app:string2gps} summarizes how String $2$-groups are understood as higher central extensions, and reviews their smooth Drinfeld centers. Appendix~\ref{app:principal-3-bundles} collects the Čech data necessary to define principal bundles of certain smooth $3$-groups (without connection).

We note that the interplay of higher group automorphisms and symmetries of geometric structures has also been highlighted in \cite{Bunk:2023sov}, where the authors elaborate in particular on the extension of spacetime diffeomorphisms by higher group automorphisms, and provide some suggestions as to how these extensions could be related to the higher-form symmetries of concrete examples of physical relevance.

\section{General statement}\label{sec:general}

\paragraph{$\sigma$-models and symmetries.} The fields of a $\sigma$-model, by definition, are smooth maps $\phi:\Sigma\to X$ from an $n$-dimensional spacetime manifold $\Sigma$ to some other smooth space $X$. The symmetries are generated by a subgroup $H\leq \text{Diff}(X)$ of the diffeomorphism group $\text{Diff}(X)$ of the target space. The subgroup $H$ is determined by the dynamics proper to the $\sigma$-model. For example, the most common case is the dynamics dictated by the kinetic action
\begin{equation}
    S = \int_{\Sigma}  \, g_{ij}(\phi(x))\, d\phi^i\wedge\star d\phi^j,
\end{equation}
for $g$ a metric on $X$. The corresponding subgroup is $\text{Isom}(X,g)\leq \text{Diff}(X)$,  the group of isometries of $(X,g)$. 

In this sense, ${\rm Diff}(X)$ gives the most general possible symmetries coming from transformations on $X$. In the context of generalized symmetries of $\sigma$-models, these have sometimes been called \textit{electric symmetries} \cite{Sheckler:2025rlk}. In what follows we will work with ${\rm Diff}(X)$, which afterwards can be restricted to the relevant subgroup determined by the specific dynamics.

The global ${\rm Diff}(X)$ symmetries are more precisely \textit{parameterized} by \textit{flat} smooth maps
\begin{equation}\label{eq:diffglobalsymmetry}
    \text{Map}_{\flat}\left(\Sigma, \text{Diff}(X)\right) = \{ f:\Sigma \to \text{Diff}(X) \vert \, df=0\}.
\end{equation}
The composition law of $\text{Diff}(X)$ induces a composition law on $\text{Map}_{\flat}(\Sigma, \text{Diff}(X))$. In the (frequent) case $\Sigma$ is connected, the smooth collection of such locally-constant maps is simply $\text{Diff}(X)$ itself, equipped with its composition law, which is why one can think of the symmetry as parameterized by the Lie group itself. 

Concretely, this is a symmetry of the $\sigma$-model because it acts on the \textit{smooth space} of fields 
\begin{equation}\label{eq:osigmamodel}
    \text{Fields}(\Sigma):= \text{Map}\left(\Sigma,X\right),
\end{equation}
by composition. In other words, the smooth space of fields becomes a \textit{module} of
\begin{equation}
    \text{Sym}(\Sigma,X):= {\rm Map}_{\flat}(\Sigma,{\rm Diff}(X)).
\end{equation}

This group controls the symmetries of the $\sigma$-model ${\rm Map}(\Sigma,X)$, in the sense that symmetries described by some other Lie group $G$ are defined via a homomorphism
\begin{equation}\label{eq:gaction}
    \alpha: G \to {\rm Diff}(X),
\end{equation}
or, more precisely, a diagram
\begin{equation}
    \begin{tikzcd}
                               &  & G \arrow[dd, "\alpha"] \\
                               &  &                        \\
\Sigma \arrow[rruu] \arrow[rr] &  & \text{Diff}(X)        
\end{tikzcd},\label{eq:preliminary-global-symmetry}
\end{equation}
where the maps $\Sigma \to G$, $\Sigma\to {\rm Diff}(X)$ are flat maps. Since $X$ is a $\text{Diff}(X)$-module, then a homomorphism (\ref{eq:gaction}) makes $X$ a $G$-module. Note, in particular, that the action homomorphism (\ref{eq:gaction}) does \textit{not} need to be injective, meaning its \textit{kernel acts trivially} on $X$, and thus on the field theory. 

Now, given a global symmetry (\ref{eq:preliminary-global-symmetry}), one is often interested in \textit{gauging} it. Nevertheless, except for very restricted cases, smooth manifolds are \textit{insufficient} for describing the fields of the $G$\textit{-gauged} theory. The $G$-gauged $\sigma$-model has fields
\begin{equation}\label{eq:ex-fields}
    {\rm Fields}_G = {\rm Map}(\Sigma, X\dslash G),
\end{equation}
where $X\dslash G$ is the \textit{smooth homotopy quotient}, which can be thought of as $X$ equipped with \textit{equivalences} between points related by the $G$-action. For example, for $G$ a finite group, this homotopy quotient is known as an \textit{orbifold} \cite{satake1956generalization}, a certain generalization of a manifold. From a physical point of view, the fields (\ref{eq:ex-fields}) valued in the smooth homotopy quotient $X\dslash G$ incorporates the so-called \textit{twisted sectors} in 2d QFT (cf. \cite{Dixon:1985jw,Pantev:2005wj,Hellerman:2006zs}). In particular, the original fields (\ref{eq:osigmamodel}) become \textit{equivalent} to their image under the $G$-action in the $G$-gauged theory (\ref{eq:ex-fields}).

\paragraph{$\sigma$-models of smooth higher stacks.} An insight of \cite{Schreiber:2013pra} is that a wide variety of physically-relevant field theories can be realized as $\sigma$-models, provided the context is sufficiently general. The smooth maps of (\ref{eq:osigmamodel}), for example, take place in the category of (infinite-dimensional) smooth manifolds. In the present article, by contrast, the ``context'' we will work in is the \textit{cohesive $\infty$-topos}
\begin{equation}
    \textbf{H}:= {\rm Sh}\left(\text{CartSp},\infty{\rm Grpd} \right)
\end{equation}
of sheaves on the site of cartesian spaces valued in $\infty$-groupoids. This is understood as the ``context'' of \textit{smooth} $\infty$-groupoids, or \textit{smooth higher stacks}, for simplicity. Among the advantages of \textbf{H} for our purposes are the existence of:
\begin{enumerate}
    \item \textit{smooth higher groups} $\cal G$ (and their actions),
    \item \textit{moduli stacks of principal $\cal G$ bundles} $\textbf{B}{\cal G}$, and, with some subtleties, \textit{moduli stacks of principal $\cal G$ bundles with connections} $\textbf{B}_{\nabla}{\cal G}$, where maps $\Sigma\to \textbf{B}_{\nabla}{\cal G}$ correspond to a \textit{particular}\footnote{As opposed to an equivalence class thereof.} connection on a principal $\cal G$ higher bundle over $\Sigma$,
    \item and \textit{quotient stacks} $\cal X\dslash G$ which always exist without imposing additional conditions on the $\cal G$ action.
\end{enumerate}

By ``field theory'' we are not claiming a \textit{fully quantum} field theory, but at most a \textit{pre}quantum theory yet to undergo quantization. A completely general quantization process is yet to be constructed, but see \cite{nuiten2013cohomological,bongers2014geometric,Schreiber:2014xva}. As mentioned in Section~\ref{sec:intro}, the fact that a fully general quantization process is still missing serves as a motivation for the present paper to rigorously discuss global higher-form symmetries without referring to quantum aspects of field theories.

Generalizing (\ref{eq:osigmamodel}) is conceptually straightforward. Still assuming $\Sigma$ is an $n$-dimensional smooth manifold, we now consider the $\sigma$-model with fields
\begin{equation}
    \textbf{Fields}:= \textbf{H}(\Sigma,\mathcal{X})
\end{equation}
for ${\cal X}$ a higher smooth stack, an object of \textbf{H}. Note that \textit{the collection} \textbf{Fields} \textit{of fields is a  higher smooth stack itself,} and therefore a proper definition of symmetry should respect this structure.

Similarly, the symmetries are controlled by the smooth higher stack
\begin{equation}
{\rm Aut}\left({\cal X}\right)\subset \textbf{H}\left({\cal X,X}\right)    ,
\end{equation}
the \textit{automorphism $\infty$-group} of $\cal X$. 

From the point of view of the field theory on $\Sigma$, we propose that the \textbf{global higher-form symmetries} are parameterized by the \textit{higher smooth stack of }\textit{flat maps}
\begin{equation}\label{eq:flatsymmetryparameters}
    \textbf{Sym}(\Sigma,{\cal X}):=\textbf{H}_{\flat}(\Sigma,{\rm Aut}({\cal X})) = \textbf{H}\left(\shape_n \left(\Sigma\right), \text{Aut}({\cal X})\right) = \textbf{H}\left( \Sigma, \flat {\rm Aut}\left( {\cal X} \right) \right),
\end{equation}
where $\shape_n\left( \Sigma \right)$ is the $n$-groupoid obtained from the smooth path $n$-groupoid $\mathcal{P}_n\Sigma$ of $\Sigma$ by dividing out full homotopy of $n$-disks $\Delta^n\to\Sigma$, relative boundary (see e.g. \cite[Proposition 1.2.122]{Schreiber:2013pra} cf. \cite{SW11,Schreiber:2008kcv}), and $\flat {\rm Aut}\left({\cal X}\right)$ is the image of the smooth stack ${\rm Aut}\left({\cal X}\right)$ under the flat modality (cf. \cite[Observation 5.2.27]{Schreiber:2013pra}). 

The stack of global higher-form symmetries comes equipped with a canonical action provided by the \textit{evaluation map}
\begin{equation}
    \textbf{ev}: {\rm Aut}({\cal X})\times {\cal X} \to {\cal X}
\end{equation}
  \begin{equation}\label{eq:evaluation-diagram-general}
    \begin{tikzcd}[
      column sep={between origins,80pt}
    ]
      \mathbf{H}\big(
        \Sigma
        ,\,
        \flat \mathrm{Aut}({\cal X})
      \big)
      \times
      \mathbf{H}\big(
        \Sigma
        ,\,
        {\cal X}
      \big)
      \ar[
        rr,
        "{
          (\mbox{-})\cdot(\mbox{-})
        }"
      ]
      \ar[
        dr,
        "{ \sim }"{sloped}
      ]
      &[-30pt]&
      \mathbf{H}\big(
        \Sigma
        ,\,
        {\cal X}
      \big)
      \mathrlap{\,.}
      \\
      &
      \mathbf{H}\big(
        \Sigma        ,\,
       \left(\flat{\rm Aut}({\cal X})\right)
        \times
        {\cal X}
      \big)
      \ar[
        ur,
        "{
          \mathbf{H}(\Sigma,\textbf{ev})
        }"{sloped}
      ]
    \end{tikzcd}
  \end{equation}

Technically, the maps (\ref{eq:flatsymmetryparameters}) describe \textit{flat principal smooth $\infty$-groupoid bundles}. Indeed, as highlighted in \cite[Remark 5.1.149]{Schreiber:2013pra}, for $\alpha: * \to {\rm Aut}\left( {\cal X} \right)$, $s:* \to \Sigma$, and $f:\Sigma \to {\rm Aut}\left( {\cal X}\right)$ the smooth homotopy fiber is
\begin{equation}
\begin{tikzcd}
\Omega_{\alpha} {\rm Aut}\left( {\cal X} \right) \arrow[rr] \arrow[dd] &  & P \arrow[rr] \arrow[dd] &  & * \arrow[dd, "\alpha"]           \\
                                                                       &  &                         &  &                                  \\
* \arrow[rr, "s"']                                                     &  & \Sigma \arrow[rr, "f"'] &  & {\rm Aut}\left( {\cal X} \right)
\end{tikzcd},
\end{equation}
which shows ${\rm Aut}\left({\cal X}\right)$ acts as a moduli stack for the higher smooth stack defined by the looping $\Omega_z{\rm Aut}\left({\cal X}\right)$ at $z$.

Importantly, flat principal smooth $\infty$-groupoid bundles are equivalent to principal smooth $\infty$-groupoid bundles \textit{with flat connection} \cite{Schreiber:2013pra}.

The information encoded by an object in $\textbf{Sym}(\Sigma,{\cal X})$ is perhaps most easily understood in terms of \textit{Čech data}, of which we will make frequent use in this paper. For this, we first replace $\Sigma$ with a cofibrant object, the smooth higher stack $U(\Sigma)$ provided by a good cover $\{ U_i \}_{i\in{\cal I}}$. This stack can be presented in terms of \textit{Kan complexes} as
\begin{itemize}
    \item objects $(x \in U_i, i\in {\cal I})$,
    \begin{equation}
        \begin{tikzcd}
            x_{i}
        \end{tikzcd}
    \end{equation}
    \item $1$-morphisms $(x \in U_i\cap U_j, i, j): (x,i) \to (x,j)$,
    \begin{equation}
        \begin{tikzcd}
            x_i
            \ar[rr, "{x_{ij}}"] && x_j
        \end{tikzcd}
    \end{equation}
    \item $2$-morphisms $(x\in U_i\cap  U_j \cap U_j, i,j,k):(x,j,k)\circ (x,i,j) \to (x,i,k)$,
    \begin{equation}
        \begin{tikzcd}
                                                 &    & x_k                                                         \\
                                                 & {} &                                                             \\
x_i \arrow[rr, "x_{ij}"'] \arrow[rruu, "x_{ik}"] &    & x_j \arrow[uu, "x_{jk}"'] \arrow[lu, "x_{ijk}", Rightarrow]
\end{tikzcd}
    \end{equation}
    \item $3$-morphisms $(x,i,j,k,l)$:
    \begin{equation}
        \begin{tikzcd}
x_j \arrow[rr, "x_{jk}"] \arrow[rd, "x_{ijk}" description, Rightarrow]            &    & x_k \arrow[dd, "x_{kl}"] \arrow[ldd, "x_{ikl}" description, Rightarrow] &                          &    & x_j \arrow[rr, "x_{jk}"] \arrow[rrdd, "x_{jl}" description] \arrow[rdd, "x_{ijl}" description, Rightarrow] &    & x_k \arrow[dd, "x_{kl}"] \arrow[ld, "x_{jkl}" description, Rightarrow] \\
                                                                                  & {} &                                                                         & {} \arrow[r, "x_{ijkl}",triple] & {} &                                                                                                            & {} &                                                                        \\
x_i \arrow[uu, "x_{ij}"] \arrow[rr, "x_{il}"'] \arrow[rruu, "x_{ik}" description] & {} & x_l                                                                     &                          &    & x_i \arrow[uu, "x_{ij}"] \arrow[rr, "x_{il}"']                                                             & {} & x_l                                                                   
\end{tikzcd},
    \end{equation}
    \item $\cdots$
\end{itemize}

Then, an object
\begin{equation}
    {\cal F}: U(\Sigma) \to {\rm Aut}({\cal X}),
\end{equation}
in $\textbf{Sym}\left(\Sigma, {\cal X}\right)$ is a functor of smooth $\infty$-groupoids, a flat assignment :
\begin{itemize}
    \item to each patch $U_i$ a (weakly) invertible smooth functor $F_i: {\cal X} \to {\cal X}$ 
    \begin{equation}
        \begin{tikzcd}
            F_{i}
        \end{tikzcd}
    \end{equation}
    \item to each intersection $U_{ij}:=U_i \cap U_j$ a smooth natural isomorphism $\eta_{ij}:F_i \to F_j$
    \begin{equation}
        \begin{tikzcd}
            F_i
            \ar[rr, "{\eta_{ij}}"] && F_j
        \end{tikzcd}
    \end{equation}
    \item to each triple intersection $U_{ijk}:=U_i\cap U_j \cap U_k$ a smooth modification \cite{JohnsonYau2021} $m_{ijk}:\eta_{jk}\circ \eta_{ij} \to \eta_{ik}$
    \begin{equation}
        \begin{tikzcd}
                                                 &    & F_k                                                         \\
                                                 & {} &                                                             \\
F_i \arrow[rr, "\eta_{ij}"'] \arrow[rruu, "\eta_{ik}"] &    & F_j \arrow[uu, "\eta_{jk}"'] \arrow[lu, "m_{ijk}", Rightarrow]
\end{tikzcd}
    \end{equation}
    \item to each quadruple intersection $U_{ijkl}:=U_i\cap U_j \cap U_k \cap U_l$ a smooth perturbation \cite{gurski2006algebraic} $\sigma_{ijkl}:\eta_{ikl}\eta_{ijk} \Rrightarrow \eta_{ijl}\eta_{jkl}$
    \begin{equation}
        \begin{tikzcd}
F_j \arrow[rr, "\eta_{jk}"] \arrow[rd, "m_{ijk}" description, Rightarrow]            &    & F_k \arrow[dd, "\eta_{kl}"] \arrow[ldd, "m_{ikl}" description, Rightarrow] &                          &    & F_j \arrow[rr, "\eta_{jk}"] \arrow[rrdd, "\eta_{jl}" description] \arrow[rdd, "m_{ijl}" description, Rightarrow] &    & F_k \arrow[dd, "\eta_{kl}"] \arrow[ld, "m_{jkl}" description, Rightarrow] \\
                                                                                  & {} &                                                                         & {} \arrow[r, "\sigma_{ijkl}",triple] & {} &                                                                                                            & {} &                                                                        \\
F_i \arrow[uu, "\eta_{ij}"] \arrow[rr, "\eta_{il}"'] \arrow[rruu, "\eta_{ik}" description] & {} & F_l                                                                     &                          &    & F_i \arrow[uu, "\eta_{ij}"] \arrow[rr, "\eta_{il}"']                                                             & {} & F_l                                                                   
\end{tikzcd},
    \end{equation}
    \item and higher smooth \textit{transfors} \cite{crans2003localizations,crans1999tensor} for higher intersections assigned in the same consistent manner.
\end{itemize}

With this proposed generalization at hand, we can define an \textit{action} of some \textit{smooth higher group} $\cal G$, a group object in \textbf{H}, on the field theory via a homomorphism
\begin{equation}
    \alpha: {\cal G} \to {\rm Aut}({\cal X}),
\end{equation}
and these global higher-form symmetries are parameterized by
\begin{equation}
    \begin{tikzcd}
                               &    & \cal G \arrow[dd, "\alpha"] \arrow[ldd, "\eta", Rightarrow] \\
                               &    &                                                             \\
\Sigma \arrow[rruu] \arrow[rr] & {} & {\rm Aut}({\cal X})                                        
\end{tikzcd},
\end{equation}
where now the diagram commutes \textit{up to a smooth homotopy}, and by (\ref{eq:flatsymmetryparameters}) the morphisms from $\Sigma$ are flat.

This perspective allows us to describe the \textit{gauging} of these symmetries, as follows. The fields of the $\cal G$-gauged theory are
\begin{equation}
    \textbf{Fields}_{\cal G}:= \textbf{H}\left(\Sigma, {\cal X\dslash G}\right),
\end{equation}
where $\cal X\dslash G$ is the \textit{quotient smooth higher stack} fitting in the sequence
\begin{equation}\label{eq:gauging-fibration}
    {\cal X} \to  {\cal X\dslash G}\to \textbf{B} {\cal G}
\end{equation}
for $\textbf{B} {\cal G}$ the \textit{delooping} of the higher smooth group $\cal G$. Unlike the case of smooth manifolds, the quotient $\cal X\dslash G$ \textit{always exists} as a higher smooth stack, independently of the properties of the $\cal G$-action on $\cal X$.

An immediate consequence of this description of gauging is that the original stack of fields maps into (but does \textit{not} equal)
\begin{equation}
    \textbf{Fields} \to \textbf{Fields}_{\cal G}
\end{equation}
the stack of fields of the $\cal G$-gauged theory, in such a way that any field becomes \textit{equivalent} to any of its images under the $\cal G$-action, as befits a quotient up to isomorphism.

An example of this kind was explored in \cite{Perez-Lona:2023llv}, where ${\cal X}=X$ is a smooth manifold, and $\cal G$ a finite $2$-group.

\paragraph{Higher gauge theory.} Of special importance are the higher smooth stacks ${\cal X}=\textbf{B}{\cal H}$ corresponding to the delooping $\textbf{B}\cal H$ of some $n$-truncated smooth $\infty$-group  $\cal H$, equivalently a smooth $n$-group. The higher smooth stack $\textbf{B}\cal H$ is the moduli stack of principal $\cal H$ bundles. In this case, there is a \textit{truncation}\footnote{Since ``truncation'' commonly refers to the operation $\tau_{\leq n}: \infty{\rm Grpd}\to n{\rm Grpd}\to \infty{\rm Grpd}$, Eq.~(\ref{eq:aut-trunaction-general}) is technically an equivalence of objects in \textbf{H} between the $n$-truncation $\tau_{\leq n}{\rm Aut}\left( \textbf{B}{\cal H}\right)$, and $\aut{\cal H}$, an object in \textbf{H} that is already $n$-truncated.}
\begin{equation}\label{eq:aut-trunaction-general}
    \tau_{\leq n}{\rm Aut}({\textbf{B}{\cal H}}) = \aut{\cal H}
\end{equation}
of ${\rm Aut}({\textbf{B}{\cal H}})$, the smooth \textit{automorphism} $(n+1)$\textit{-group} $\aut{\cal H}$ of $\cal H$. This was introduced in \cite{nikolaus2015principal} to provide a cohomological classification of $\cal H$-gerbes. In more detail, a smooth morphism
\begin{equation}
    f: \Sigma \to \textbf{B}\aut{\cal H}
\end{equation}
gives rise to a \textit{fiber} bundle over $\Sigma$ whose fiber is the delooping $\textbf{B}{\cal H}$. This is a generalization of Giraud's nonabelian cohomology as a classification of gerbes \textit{with a band} \cite{giraud}.

The automorphism smooth $(n+1)$-group $\aut{\cal H}$ fits in the exact sequence of smooth $(n+1)$-groups
\begin{equation}\label{eq:higheraut}
    \textbf{B}{\cal Z(H)} \to \aut{\cal H}\to \out{\cal H},
\end{equation}
Here, $\out{\cal H}$ is the smooth $n$-group of outer automorphisms (regarded as an $(n+1)$-group with trivial information at the top degree), and, in Giraud's terms, provides the band of the $\cal H$-gerbe. Further, $ \textbf{B}{\cal Z(H)}$ is the delooping of the \textit{center} of $\cal H$. This is ``abelian'' in the sense of smooth $(n+1)$-groups, and is also often referred to as a \textit{braided} $(n+1)$-group \cite{Schreiber:2013pra}. The $n=1$ case this is the familiar automorphism $2$-group. The $n=2$ case is explored in \cite{SR07}. 

While for general smooth $n$-groups, the automorphism $(n+1)$-group and its $n$-group of outer automorphisms become more complicated, one should note the inclusion of smooth $(n+1)$-groups $\textbf{B}{\cal Z}({\cal H})\to \text{AUT}({\cal H})$, which tells us that among the global higher symmetries there are those parameterized by  principal ${\cal Z}({\cal H})$ (higher) bundles with a \textit{flat} connection over $\Sigma$. As we will see exemplified, the canonical action of these center higher bundles provided by the evaluation map (\ref{eq:evaluation-diagram-general}) corresponds to the \textbf{tensoring} of flat principal ${\cal Z}({\cal H})$ bundles with principal ${\cal H}$ bundles:
\begin{equation}\label{eq:tensorbundles}
    \left( \textit{flat} \, \text{principal} \, {\cal Z}({\cal H}) \, \text{bundles} \right) \otimes \left( \text{principal} \, {\cal H} \, \text{bundles}  \right)  \to \left( \text{principal} \, {\cal H} \, \text{bundles}  \right),
\end{equation}
which is realized by smooth higher \textit{functors} $\textbf{Fields}\to\textbf{Fields}$.
That these center higher bundles admit a product crucially depends on the fact that ${\cal Z}({\cal H})$ is, by definition, a braided smooth $n$-group, so that $\textbf{B}{\cal Z}({\cal H})$ is itself a group object in \textbf{H}. The well-definedness of the action, on the other hand, depends on ${\cal Z}({\cal H})$ being central in $\cal H$. We refer to this subsymmetry simply as the \textbf{center higher-form symmetry.}

Sometimes, the center higher-form symmetry of higher principal bundles can be \textit{promoted} to the \textit{connective} picture in an obvious way, namely, a symmetry of higher principal bundles \textit{with connection}. In such a case, the action (\ref{eq:tensorbundles}) becomes a tensoring action of principal ${\cal Z}({\cal H})$ bundles with flat connection with principal $\cal H$ bundles \textit{with connection}, not necessarily flat,
\begin{gather}\label{eq:tensorbundles-wconn}
    \left( \, \text{principal} \, {\cal Z}({\cal H}) \, \text{bundles w/ \textit{flat} conn.} \right) \otimes \left( \text{principal} \, {\cal H} \, \text{bundles w/ conn.}  \right) \\ \to \left( \text{principal} \, {\cal H} \, \text{bundles w/ conn.}  \right) ,\nonumber
\end{gather}
which may be called the \textit{connective} center higher-form symmetry.

This is, for example, what is called the $U(1)$ $1$-form symmetry of $U(1)$ gauge theory (e.g. \cite{Gaiotto:2014kfa, Schafer-Nameki:2023jdn}, cf. Section~\ref{sec:puregauge}). However, as concretely exemplified in Section~\ref{sec:string2groups}, one needs to be more careful when doing gauge theory of higher groups, the main reason being that \textit{nontrivial} connections of truly nonabelian higher bundles require additional choices of structure, called \textit{adjustments} \cite{FSS12,Schreiber:2013pra,Rist:2022hci,Gagliardo:2025oio}, which are not necessarily respected by the action (\ref{eq:tensorbundles}) \textit{strictly}. Therefore, in general, one ought to distinguish between symmetries of (higher) principal bundles, and of principal bundles \textit{with connection}. It is important to note, however, that the acting stack (\ref{eq:flatsymmetryparameters}) is the same since, as mentioned above, flat higher principal bundles are equivalently higher principal bundles with flat connection. The difference stems solely from the introduction of generally non-flat connections on the acted bundles.

In Field theory, $\sigma$-models with stacky target spaces $\cal X$ are by now frequent. A common scenario, alluded to above, is ${\cal X}=\textbf{B}_{\nabla}{\cal H}$ the moduli stack of principal $\cal H$ bundles with connection, for $\cal H$ some smooth ($\infty$-)group. These are the natural target spaces for fields that satisfy certain Bianchi identities/quantization conditions. These field theories are thus better understood as \textit{pre}quantum field theories (e.g. \cite{Sati:2023mta}). 

A streamlined approach we will take in the present note to the symmetries these field theories with fields $\Sigma \to \textbf{B}_{\nabla}{\cal H}$ is as follows:
\begin{enumerate}
    \item First, we will compute the automorphisms of the object $\textbf{B}{\cal H}$.  These are the symmetries of principal $\cal H$ bundles, without connection.
    \item Then, we discuss a possible refinement to connective symmetries, which, by twisting the connections appropriately, are automorphisms of $\textbf{B}_{\nabla}{\cal H}$. This, in particular, indicates what are the ``charged'' objects under such transformations.

    \item Afterwards, for a given symmetry ${\cal G}\to \aut{\cal H}$, we compute the weak quotient $(\textbf{B}{\cal H})\dslash {\cal G}\cong \textbf{B}\cal K$ for $\cal K$ some other smooth $\infty-$group (in the examples we consider, $\cal K$ can be quotient or a split extension of $\cal H$).
 
    \item Finally, we compute the fields of the $\cal G$-gauged theory as $\Sigma\to \textbf{B}_{\nabla}{\cal K}$.
\end{enumerate}
The reader should note from the simplest example, that of ${\cal H}=1$, that the resulting fields of the $\cal G$-gauged theory are smooth maps $\Sigma\to \textbf{B}_{\nabla}{\cal G}$, principal $\cal G$-bundles with connection. Therefore, this process describes the so-called \textit{dynamical gaugings} \cite{Wang:2017loc,Hidaka:2021kkf,Karasik:2022kkq,Antinucci:2024zjp,Gagliano:2024off}. The other kind of gaugings, ``topological gaugings'', are also possible in this formalism by appropriately discretizing the otherwise smooth higher groups, connective data, and actions, but our focus is on smooth structures, and connective data.

In the following section, we rigorously construct the action for the case $\cal G$ a $2$-group, which is the basis for the examples explored in the rest of the paper.

\section{Explicit construction for principal $2$-bundles}\label{sec:2-action}

The goal of this section is to realize the general formulation of higher-form symmetries presented in Section~\ref{sec:general} for the special case of principal $\cal G$ bundles, for $\cal G$ a smooth strict $2$-group. While the propositions, in principle, follow from the abstract properties of \textbf{H}, to compute examples we need concrete formulae, which we derive here. 

According to (\ref{eq:flatsymmetryparameters}), the fields are the higher smooth stack
\begin{equation}
    \textbf{Fields} = \textbf{H}\left(\Sigma, \textbf{B}{\cal G}\right),
\end{equation}
and the global higher-form symmetries are given by
\begin{equation}\label{eq:2gpsymmetrystack}
    \textbf{Sym}\left(\Sigma, \textbf{B}{\cal G}\right) = \textbf{H}_{\flat}\left(\Sigma, {\rm Aut}(\textbf{B}{\cal G})\right),
\end{equation}
which has a composition law under which \textbf{Fields} is a module.

To realize these statements, we will model this setting in terms of maps of $2$-groupoids. For this, we replace $\Sigma$ by $U_2(\Sigma)$ the $2$-groupoid defined by a good cover $\{U_i\}_{i \in {\cal I}}$, described in general in Section~\ref{sec:general}. 

We model $\textbf{B}{\cal G}$ as the one-object \textit{strict} $2$-groupoid whose unique hom-category is $\cal G$. We will be using the \textbf{FR} convention \cite{nLabstrict2gp}. Thus, the diagram
\begin{equation}
      \begin{tikzcd}
    \bullet
    \ar[
      rr,
      bend left=40,
      "{g_1}"{description, name=s}
    ]
    \ar[
      rr,
      bend right=40,
      "{g_2}"{description, name=t}
    ]
    \ar[
      from=s, to=t,
      Rightarrow,
      "{h}"
    ]
    &&
    \bullet
  \end{tikzcd}
\end{equation}
corresponds to the equation
\begin{equation}
    g_2 = \delta(h)g_1,
\end{equation}
and compositions are computed as
\begin{equation}
  \begin{tabular}{ccc}
     \begin{tikzcd}
    \bullet
    \ar[
      rr,
      bend left=40,
      "{g_1}"{description, name=s}
    ]
    \ar[
      rr,
      bend right=40,
      "{g_2}"{description, name=t}
    ]
    \ar[
      from=s, to=t,
      Rightarrow,
      "{ h_1}"
    ]
    &&
    \bullet
    \ar[
      rr,
      bend left=40,
      "{g_1'}"{description, name=s}
    ]
    \ar[
      rr,
      bend right=40,
      "{g_2'}"{description, name=t}
    ]
    \ar[
      from=s, to=t,
      Rightarrow,
      "{ h_2}"
    ]
    &&
   \bullet
  \end{tikzcd}    &  $=$ & \begin{tikzcd}
      \bullet
    \ar[
      rrr,
      bend left=40,
      "{g_1 g_1'}"{description, name=s}
    ]
    \ar[
      rrr,
      bend right=40,
      "{g_2g_2'}"{description, name=t}
    ]
    \ar[
      from=s, to=t,
      Rightarrow,
      "{h_1\left( g_1 \triangleright h_2\right)}"
    ]
    &&&
 \bullet
  \end{tikzcd} 
  \end{tabular}.
\end{equation}

We will also model ${\rm Aut}(\textbf{B}{\cal G})$ by the sub-$2$-groupoid of invertible \textit{strict} $2$-endofunctors
\begin{equation}
    {\rm Aut}(\textbf{B}{\cal G}) \subset {\rm Hom}\left( \textbf{B}{\cal G}, \textbf{B}{\cal G}\right) \subset {\rm Lax2Fun}\left(\textbf{B}{\cal G}, \textbf{B}{\cal G}\right).
\end{equation}
Notice that what we are calling ${\rm Aut}(\textbf{B}{\cal G})$ here is not the full $2$-groupoid of invertible \textit{lax} endofunctors of the one-object $2$-groupoid $\textbf{B}{\cal G}$, but only of \textit{strict} $2$-functors ${\rm Hom}\left( \textbf{B}{\cal G}, \textbf{B}{\cal G}\right)$. Yet, because we are mainly interested in how the symmetries act from the point of view of the field theory on $\Sigma$, which are captured by maps (\ref{eq:2gpsymmetrystack}) from $\Sigma$ to ${\rm Aut}(\textbf{B}{\cal G})$, rather than by ${\rm Aut}({\textbf{B}{\cal G}})$ on its own, the end result is equivalent, even for smooth $2$-functors. This is due to the fact that $U_2(\Sigma)$ is a cofibrant replacement in the local projective model structure on simplicial sheaves \textbf{H} \cite[Example 4.3.42]{SS25-Bund}, \cite{FSS12}, \cite[Proposition 6.3.15]{Schreiber:2013pra}, so that up to equivalence strict $2$-functors capture the most general smooth $2$-functors out of $U_2(\Sigma)$. That being said, the computation of the full ${\rm Aut}({\textbf{B}{\cal G}})$ $2$-groupoid based on lax invertible endofunctors can also be physically relevant for different purposes. In \cite{Waldorf:2022lzs}, for example, certain kinds of smooth lax $2$-functors that are not necessarily strict are considered in the context of T-folds in T-duality.

With this in mind, we first compute the automorphism $2$-groupoid  ${\rm Aut}({\textbf{B}{\cal G}})$:

\begin{definition}[Automorphism $2$-groupoid of a crossed module]\label{def:aut2gp}

Let $\textbf{B}{\cal G}$ be the one-object strict $2$-groupoid defined by the delooping of a $2$-group $\cal G$ presented as a crossed module ${\cal G} = \left( G, H,\delta,\triangleright \right)$.

Its automorphism $2$-groupoid ${\rm Aut}({\textbf{B}{\cal G}})$ of strict invertible $2$-endofunctors, natural isomorphisms, and modifications is
\begin{equation}
  \label{UnderlyingGroupoidOfAutomorphism2GroupXMod}
 {\rm Aut}(\mathbf{B}{\cal G})
  =
  \left\{
  \begin{tikzcd}
    \left(\phi,\psi \right)
    \ar[
      rr,
      bend left=40,
      "{(\eta_{\bullet},\beta_{\eta})}"{description, name=s}
    ]
    \ar[
      rr,
      bend right=40,
      "{(\eta'_{\bullet},\beta_{\eta'})}"{description, name=t}
    ]
    \ar[
      from=s, to=t,
      Rightarrow,
      "{ m_{\bullet} }"
    ]
    &&
   \left(\phi',\psi' \right)
  \end{tikzcd}
  \middle\vert\;
  \begin{aligned}
    \left(\phi,\psi \right), \left(\phi',\psi' \right) & \ (\ref{XModFunctor1}), (\ref{XModFunctor2})
    \\
    \left(\eta_{\bullet},\beta_{\eta}\right), \left(\eta'_{\bullet},\beta_{\eta'}\right) & (\ref{XModTrans1})-(\ref{XModTrans3})
    \\
    m_{\bullet} &  \ (\ref{XModModif})
  \end{aligned}
  \right\}
 ,
\end{equation}
where 
\begin{itemize}
    \item the objects are the invertible homomorphisms \cite{bams/1183513797} of the crossed module $(G,H,\delta,\triangleright)$, described by pairs
\begin{equation}\label{eq:bgendofunctor}
    F=(\phi \in {\rm Aut}(G), \psi\in {\rm Aut}(H)),
\end{equation}
such that
\begin{equation}\label{XModFunctor1}
    \phi \circ \delta = \delta \circ \psi,
\end{equation}
\begin{equation}\label{XModFunctor2}
    \psi\left( g \triangleright h\right) = \phi(g) \triangleright \psi(h),
\end{equation}
\item the $1$-morphisms are natural isomorphisms
\begin{equation}
    \eta= \left(\eta_{\bullet},\beta_{\eta} \right): \left(\phi,\psi\right)\to \left(\phi',\psi'\right),
\end{equation}
which consists of a choice of group element $\eta_{\bullet}\in G$, and a function $\beta_{\eta}: G \to H$ such that
\begin{equation}\label{XModTrans1}
  \phi(f)\eta_{\bullet} =\delta\left(\beta_{\eta}(f)\right) \eta_{\bullet}\phi'(f),
\end{equation}
\begin{equation}\label{XModTrans2}
    \beta_{\eta}(fg) = \left(\phi(f)\triangleright\beta_{\eta}(g)\right)\beta_{\eta}(f),
\end{equation}
\begin{equation}\label{XModTrans3}
   \psi(h)\beta_{\eta}(f) = \beta_{\eta}(\delta(h)f) \left(\eta_{\bullet}\triangleright \psi'(h)\right),
\end{equation}
\item the $2$-morphisms are modifications
\begin{equation}
    m=m_{\bullet}: \left(\eta_{\bullet},\beta_{\eta}\right) \to \left(\eta'_{\bullet},\beta_{\eta'}\right)
\end{equation}
which is determined by a group element $m_{\bullet}\in H$ such that
\begin{equation}
    \eta_{\bullet}' = \delta(m_{\bullet})\eta_{\bullet},
\end{equation}
\begin{equation}\label{XModModif}
   \left(\phi(f)\triangleright m_{\bullet} \right) \beta_{\eta}(f) = \beta_{\eta'}(f) m_{\bullet}.
\end{equation}
\end{itemize}
\end{definition}
\begin{remark}[Delooping of Drinfeld center $\cal Z(G)$ of $\cal G$]\label{rmk:drinfeld-center}
Of particular importance is the one-object sub-$2$-groupoid of ${\rm Aut}(\textbf{B}{\cal G})$ at the identity $2$-functor 
${\rm Id}_{\textbf{B}{\cal G}}$. This $2$-groupoid consists of
\begin{itemize}
    \item A single object 
    \begin{equation}{\rm Id}_{\textbf{B}{\cal G}}=\left({\rm id}_G,{\rm id}_H\right),\end{equation}
    \item $1$-morphisms
    \begin{equation}\label{eq:zeta-drinfeld-object}
        \zeta = \left(\zeta_{\bullet},\beta_{\zeta}\right): \left({\rm id}_G,{\rm id}_H\right) \to \left({\rm id}_G,{\rm id}_H\right),
    \end{equation}
    consisting of a choice of group element $\zeta_{\bullet}\in G$ and a function $\beta_{\zeta}: G \to H$ such that
    \begin{equation}\label{ZXModTrans1}
  f\zeta_{\bullet} =\delta\left(\beta_{\zeta}(f)\right) \eta_{\bullet}f,
\end{equation}
\begin{equation}\label{ZXModTrans2}
    \beta_{\zeta}(fg) = \left(f\triangleright\beta_{\zeta}(g)\right)\beta_{\zeta}(f),
\end{equation}
\begin{equation}\label{ZXModTrans3}
   h\beta_{\zeta}(f) = \beta_{\zeta}(\delta(h)f) \left(\zeta_{\bullet}\triangleright h\right),
\end{equation}
\item the $2$-morphisms are modifications
\begin{equation}\label{eq:drinfeldmorphism}
    m=m_{\bullet}: \left(\zeta_{\bullet},\beta_{\zeta}\right) \to \left(\zeta'_{\bullet},\beta_{\zeta'}\right)
\end{equation}
which is determined by a group element $m_{\bullet}\in H$ such that
\begin{equation}
    \zeta_{\bullet}' = \delta(m_{\bullet})\zeta_{\bullet},
\end{equation}
\begin{equation}\label{ZXModModif}
   \left(f\triangleright m_{\bullet} \right) \beta_{\zeta}(f) = \beta_{\zeta'}(f) m_{\bullet}.
\end{equation}
\end{itemize}
It is straightforward to see that the $1$- and $2$-morphisms, along with their identities, coincide with the objects and morphisms of the \textit{Drinfeld center} $\cal Z(G)$ of $\cal G$, when the latter is regarded as a monoidal category \cite{etingof2015tensor}. Indeed, a pair $(\zeta_{\bullet},\beta_{\zeta})$ as in (\ref{eq:zeta-drinfeld-object}) describes an object $\zeta_{\bullet}$ of $\cal G$ along with a \textit{half-braiding} $\beta_{\zeta}$, and the $2$-morphisms (\ref{eq:drinfeldmorphism}) are the morphisms of the monoidal category $\cal G$ compatible with the half-braidings.

Therefore, the \textit{delooping} of the Drinfeld center $\cal Z(G)$ of $\cal G$ is a sub-$2$-groupoid of ${\rm Aut}(\textbf{B}{\cal G})$
\begin{equation}
    \textbf{B}{\cal Z(G)} \to {\rm Aut}(\textbf{B}{\cal G}),
\end{equation}
which is the inclusion of the center referred to in Equation~(\ref{eq:higheraut}).

\end{remark}

The key objects for the present discussion are, of course, the $2$-groupoids $\textbf{Fields}$ and $\textbf{Sym}(\Sigma,\textbf{B}{\cal G})$, which we now define.

\begin{definition}[$2$-groupoid of principal $\cal G$ bundles (cf. Appendix~\ref{app:adjbundles}, e.g. \cite{Rist:2022hci})]

The $2$-groupoid of principal $\cal G$ bundles over $\Sigma$ with respect to a good cover $\{U_i\}_{i\in {\cal I}}$ is the mapping $2$-groupoid of strict $2$-functors
\begin{equation}
  \label{UnderlyingGroupoidOfGBundles}
\textbf{Fields}= {\rm Hom}(U_2(\Sigma), \textbf{B}{\cal G})
  =
  \left\{
  \begin{tikzcd}
    \left(g_{ij},h_{ijk}\right)
    \ar[
      rr,
      bend left=40,
      "{(a_i,b_{ij})}"{description, name=s}
    ]
    \ar[
      rr,
      bend right=40,
      "{(\tilde{a}_i,\tilde{b}_{ij})}"{description, name=t}
    ]
    \ar[
      from=s, to=t,
      Rightarrow,
      "{m_i}"
    ]
    &&
   \left(g'_{ij},h'_{ijk}\right)
  \end{tikzcd}
  \middle\vert\;
  \begin{aligned}
    (g_{ij}, h_{ijk}), (g'_{ij}, h'_{ijk}) & \ (\ref{eq:gftn})-(\ref{eq:hcocycle})
    \\
   (a_i, b_{ij}), (\tilde{a}_i, \tilde{b}_{ij}) & (\ref{eq:aftn})-(\ref{eq:bcocycle})
    \\
    m_{i} &  \ (\ref{eq:mftn})-(\ref{eq:mcocycle})
  \end{aligned}
  \right\}
 ,
\end{equation}
described as follows
\begin{itemize}
    \item The objects $(g_{ij},h_{ijk})$ are functions at the intersections
    \begin{eqnarray}
        g_{ij} &:& U_{ij} \to G,\label{eq:gftn}
        \\
        h_{ijk} &:& U_{ijk} \to H,
    \end{eqnarray}
    satisfying the identities
    \begin{eqnarray}
        g_{ik} &=& \delta(h_{ijk}) g_{ij} g_{jk},
        \\
        h_{ikl}h_{ijk} &=& h_{ijl}\left( g_{ij} \triangleright h_{jkl}\right) \label{eq:hcocycle},
    \end{eqnarray}
    \item the $1$-morphisms 
    \begin{equation}
        \left(a_i,b_{ij}\right): \left(g_{ij}, h_{ijk}\right) \xrightarrow{\sim}  \left(g'_{ij}, h'_{ijk}\right)
    \end{equation}
    are functions
    \begin{eqnarray}
        a_{i} &:& U_{i} \to G,\label{eq:aftn}
        \\
        b_{ij} &:& U_{ij} \to H,
    \end{eqnarray}
    satisfying
    \begin{eqnarray}
    a_i g_{ij}' &=& \delta(b_{ij}) g_{ij} a_j,
    \\
        \left(a_i\triangleright h'_{ijk}\right) b_{ij} b_{jk} &=& b_{ik} h_{ijk},
         \label{eq:bcocycle}
    \end{eqnarray}
    \item the $2$-morphisms
    \begin{equation}
        m_i: \left( a_i, b_{ij} \right) \xrightarrow{\sim} \left( \tilde{a}_i, \tilde{b}_{ij} \right),
    \end{equation}
    is a function
    \begin{eqnarray}
        m_i &:& U_i \to H \label{eq:mftn},
    \end{eqnarray}
    satisfying
    \begin{eqnarray}
        \tilde{a}_i &=& \delta(m_i) a_i,
        \\
        \tilde{b}_{ij} \left( g_{ij} \triangleright m_j \right) &=& m_i b_{ij} \label{eq:mcocycle}.
    \end{eqnarray}
\end{itemize}
\end{definition}
\begin{definition}[Symmetry $2$-groupoid $\textbf{Sym}\left(\Sigma,\textbf{B}{\cal G}\right)$]\label{def:sym2grpd}

Given a good cover $\{U_i\}_{i \in {\cal I}}$ of $\Sigma$, and ${\rm Aut}(\textbf{B}{\cal G})$ the automorphism $2$-groupoid from Definition~\ref{def:aut2gp}, the $2$-groupoid $\symbg$ of global higher-form symmetries is
\begin{equation}
  \label{UnderlyingGroupoidOfAutomorphismValuedMaps}
 \textbf{Sym}\left(\Sigma,\textbf{B}{\cal G} \right)
  =
  \left\{
  \begin{tikzcd}
    \Phi_{ijk}
    \ar[
      rr,
      bend left=40,
      "{(\nu_i,\mu_{ij})}"{description, name=s}
    ]
    \ar[
      rr,
      bend right=40,
      "{\left(\tilde{\nu}_i,\tilde{\mu}_{ij}\right)}"{description, name=t}
    ]
    \ar[
      from=s, to=t,
      Rightarrow,
      "{ M_i}"
    ]
    &&
   \Phi'_{ijk}
  \end{tikzcd}
  \middle\vert\;
  \begin{aligned}
   \Phi_{ijk}, \Phi'_{ijk} & \ (\ref{XModFunctor1}), (\ref{XModFunctor2})
    \\
    (\nu_i,\mu_{ij}), \left(\tilde{\nu}_i,\tilde{\mu}_{ij}\right) & (\ref{XModTrans1})-(\ref{XModTrans3})
    \\
    M_i &  \ (\ref{XModModif}),
  \end{aligned}
  \right\},
\end{equation}
described as follows
\begin{itemize}
    \item the objects
    \begin{equation}
        \Phi_{ijk} = \left((\phi_i,\psi_i), (\eta_{\bullet,ij},\beta_{\eta_{ij}}),m_{\bullet,ijk} \right),
    \end{equation}
    consist of
    \begin{enumerate}
        \item $(\phi_i,\psi_i): \textbf{B}{\cal G}\to \textbf{B}{\cal G}$ a strict invertible $2$-endofunctor of $\textbf{B}{\cal G}$ (Eq's.~(\ref{eq:bgendofunctor})-(\ref{XModFunctor2})) for each patch $U_i$,
        \item $(\eta_{\bullet,ij},\beta_{\eta_{ij}})$ a natural isomorphism (Eq's.~(\ref{XModTrans1})-(\ref{XModTrans3}))
        \begin{equation}
            (\eta_{\bullet,ij},\beta_{\eta_{ij}}): (\phi_i,\psi_i) \xrightarrow{\sim} (\phi_j,\psi_j),
        \end{equation}
        at each intersection $U_{ij}$
        \item $m_{\bullet,ijk}$ a modification (Eq.~(\ref{XModModif}))
        \begin{equation}
            m_{\bullet,ijk}: (\eta_{\bullet,jk},\beta_{\eta_{jk}})\circ (\eta_{\bullet,ij},\beta_{\eta_{ij}}) \xrightarrow{\sim} (\eta_{\bullet,ik},\beta_{\eta_{ik}}),
        \end{equation}
        at each triple intersection $U_{ijk}$.
    \end{enumerate}
    Together, they satisfy the cocycle condition
    \begin{equation}
\begin{tikzcd}
{(\phi_j,\psi_j)} \arrow[rrrr, "{(\eta_{\bullet,jk},\beta_{\eta_{jk}})}" description] \arrow[rrdd, "{m_{\bullet,ijk}}" description, Rightarrow]                                                                        &  &    &  & {(\phi_k,\psi_k)} \arrow[dddd, "{(\eta_{\bullet,kl},\beta_{\eta_{kl}})}" description] \arrow[lldddd, "{m_{\bullet,ikl}}" description, Rightarrow] &   & {(\phi_j,\psi_j)} \arrow[rrrr, "{(\eta_{\bullet,jk},\beta_{\eta_{jk}})}" description] \arrow[rrrrdddd, "{(\eta_{\bullet,jl},\beta_{\eta_{jl}})}" description] \arrow[rrdddd, "{m_{\bullet,ijl}}" description, Rightarrow] &  &    &  & {(\phi_k,\psi_k)} \arrow[dddd, "{(\eta_{\bullet,kl},\beta_{\eta_{kl}})}" description] \arrow[lldd, "{m_{\bullet,jkl}}" description, Rightarrow] \\
                                                                                                                                                                                                                       &  &    &  &                                                                                                                                                   &   &                                                                                                                                                                                                                           &  &    &  &                                                                                                                                                 \\
                                                                                                                                                                                                                       &  & {} &  &                                                                                                                                                   & = &                                                                                                                                                                                                                           &  & {} &  &                                                                                                                                                 \\
                                                                                                                                                                                                                       &  &    &  &                                                                                                                                                   &   &                                                                                                                                                                                                                           &  &    &  &                                                                                                                                                 \\
{(\phi_i,\psi_i)} \arrow[uuuu, "{(\eta_{\bullet,ij},\beta_{\eta_{ij}})}" description] \arrow[rrrr, "{(\eta_{\bullet,il},\beta_{\eta_{il}})}"'] \arrow[rrrruuuu, "{(\eta_{\bullet,ik},\beta_{\eta_{ik}})}" description] &  & {} &  & {(\phi_l,\psi_l)}                                                                                                                                 &   & {(\phi_i,\psi_i)} \arrow[uuuu, "{(\eta_{\bullet,ij},\beta_{\eta_{ij}})}" description] \arrow[rrrr, "{(\eta_{\bullet,il},\beta_{\eta_{il}})}"']                                                                            &  & {} &  & {(\phi_l,\psi_l)}                                                                                                                              
\end{tikzcd},
    \end{equation}
\item the $1$-morphisms
\begin{equation}
    \left(\nu_i, \mu_{ij}\right) = \left( (\nu_{\bullet,i}, \beta_{\nu_i}), \mu_{\bullet,ij}\right): \Phi_{ijk} \xrightarrow{\sim} \Phi'_{ijk},
\end{equation}
consist of
\begin{enumerate}
    \item a natural isomorphism
    \begin{equation}
        (\nu_{\bullet,i}, \beta_{\nu_i}): \left(\phi_i,\psi_i\right) \xrightarrow{\sim} \left(\phi'_i,\psi'_i\right),
    \end{equation}
    defined at each patch $U_i$,
    \item a modification
    \begin{equation}
        \mu_{\bullet,ij}:(\nu_{\bullet,j}, \beta_{\nu_j}) \circ (\eta_{\bullet,ij}, \beta_{\eta_{ij}}) \xrightarrow{\sim} (\eta'_{\bullet,ij}, \beta_{\eta'_{ij}}) \circ (\nu_{\bullet,i}, \beta_{\nu_i}),
    \end{equation}
    at each intersection $U_{ij}$.
\end{enumerate}
Together, they satisfy the relation

\begin{equation}
\begin{tikzcd}
                                                                                                                                      &    & {(\phi'_k,\psi'_k)}                                                                                                                              &                                                                                                                                                                                                                                         \\
                                                                                                                                      & {} &                                                                                                                                                  &                                                                                                                                                                                                                                         \\
{(\phi'_i,\psi'_i)} \arrow[rr, "{(\eta'_{\bullet,ij},\beta_{\eta_{ij}'})}"] \arrow[rruu, "{(\eta'_{\bullet,ik},\beta_{\eta_{ik}'})}"] &    & {(\phi'_j,\psi'_j)} \arrow[uu, "{(\eta'_{\bullet,jk},\beta_{\eta_{jk}'})}" description] \arrow[lu, "{m_{\bullet,ijk}'}" description, Rightarrow] & {(\phi_k,\psi_k)} \arrow[luu, "{(\nu_{\bullet,k},\beta_{\nu_k})}"']                                                                                                                                                                     \\
                                                                                                                                      &    &                                                                                                                                                  &                                                                                                                                                                                                                                         \\
{(\phi_i,\psi_i)} \arrow[uu, "{(\nu_{\bullet,i},\beta_{\nu_i})}"] \arrow[rrr, "{(\eta_{\bullet,ij},\beta_{\eta_{ij}})}"']             &    &                                                                                                                                                  & {(\phi_j,\psi_j)} \arrow[luu, "{(\nu_{\bullet,j},\beta_{\nu_j})}" description] \arrow[llluu, "\mu_{ij}", Rightarrow] \arrow[uu, "{(\eta_{\bullet,jk},\beta_{\eta_{jk}})}"'] \arrow[luuuu, "{\mu_{\bullet,jk}}" description, Rightarrow]
\end{tikzcd} =
\begin{tikzcd}
                                                                                                                                                                                 & {(\phi'_k,\psi'_k)} &                                                                                                                                  \\
{(\phi'_i,\psi'_i)} \arrow[ru, "{(\eta'_{\bullet,ik},\beta_{\eta_{ik}'})}"]                                                                                                      &                     & {(\phi_k,\psi_k)} \arrow[lu, "{(\nu_{\bullet,k},\beta_{\nu_k})}"'] \arrow[ll, "{\mu_{\bullet,ik}}" description, Rightarrow]      \\
                                                                                                                                                                                 & {}                  &                                                                                                                                  \\
{(\phi_i,\psi_i)} \arrow[uu, "{(\nu_{\bullet,i},\beta_{\nu_i})}"] \arrow[rr, "{(\eta_{\bullet,ij},\beta_{\eta_{ij}})}"'] \arrow[rruu, "{(\eta_{\bullet,jk},\beta_{\eta_{jk}})}"] &                     & {(\phi_j,\psi_j)} \arrow[uu, "{(\eta_{\bullet,jk},\beta_{\eta_{jk}})}"'] \arrow[lu, "{m_{\bullet,ijk}}" description, Rightarrow]
\end{tikzcd},
\end{equation}
\item the $2$-morphisms
\begin{equation}
    M_i=M_{\bullet,i}: \left( (\nu_{\bullet,i}, \beta_{\nu_i}), {\mu}_{\bullet,ij}\right) \xrightarrow{\sim} \left( (\tilde{\nu}_{\bullet,i}, \beta_{\tilde{\nu}_i}), \tilde{\mu}_{\bullet,ij}\right)
\end{equation}
are modifications
\begin{equation}
    M_{\bullet,i}: (\nu_{\bullet,i}, \beta_{\nu_i}) \xrightarrow{\sim} (\tilde{\nu}_{\bullet,i}, \beta_{\tilde{\nu}_i})
\end{equation}
defined at each patch $U_i$, satisfying the identity
\begin{equation}
      \begin{tikzcd}
    (\phi_i,\psi_i)
    \ar[
      dd,
      bend left=50,
      "{ \nu_i }"{description, pos=.4, name=s}
    ]
    \ar[
      dd,
      bend right=50,
      "{ \tilde{\nu}_i}"{description, pos=.6, name=t}
    ]
    \ar[
      from=s, to=t,
      Rightarrow,
      "{ M_i }"{pos=1}
    ]
    \ar[
      rr,
      "{ (\eta_{\bullet,ij},\beta_{\eta_{ij}}) }"{description}
    ]
    &&
    (\phi_j,\psi_j)
    \ar[
      dd,
      bend left=50,
      "{\nu_j}"{description}
    ]
    \ar[
      ddll,
      Rightarrow,
      shorten=15pt,
      shift right=5pt,
      bend left=20,
      "{ \mu_{ij} }"{description}
    ]
    \\
    \\
    (\phi'_i,\psi'_i)
    \ar[
      rr,
      "{ (\eta'_{\bullet,ij},\beta_{\eta'_{ij}}) }"{description}
    ]
    &&
    (\phi'_j,\psi'_j)
  \end{tikzcd} 
  =
 \begin{tikzcd}
    (\phi_i,\psi_i)
    \ar[
      dd,
      bend right=50,
      "{  \tilde{\nu}_i }"{description, pos=.6, name=t}
    ]
    \ar[
      rr,
      "{ (\eta_{\bullet,ij},\beta_{\eta_{ij}}) }"{description}
    ]
    &&
    (\phi_j,\psi_j)
    \ar[
      dd,
      bend left=50,
      "{ \nu_j }"{description, pos=.4, name=s}
    ]
    \ar[
      dd,
      bend right=50,
      "{ \tilde{\nu}_j }"{description, pos=.6, name=t}
    ]
    \ar[
      from=s, to=t,
      Rightarrow,
      "{ M_j }"{pos=1}
    ]
    \ar[
      ddll,
      Rightarrow,
      shorten=15pt,
      shift left=5pt,
      bend right=20,
      "{ \tilde{\mu}_{ij} }"{description}
    ]
    \\
    \\
    (\phi'_i,\psi'_i)
    \ar[
      rr,
      "{ (\eta'_{\bullet,ij},\beta_{\eta'_{ij}}) }"{description}
    ]
    &&
    (\phi'_j,\psi'_j)
  \end{tikzcd}.
\end{equation}
\end{itemize}
\end{definition}
\begin{remark}[Flat center principal $\cal Z(G)$ bundles]
In light of the inclusion
\begin{equation}
    \textbf{B}{\cal Z(G)} \to {\rm Aut}\left(\textbf{B}{\cal G}\right)
\end{equation}
described in Remark~\ref{rmk:drinfeld-center}, we observe that amongst the symmetries in $\textbf{Sym}\left(\Sigma,\textbf{B}{\cal G}\right)$ we have as a sub-$2$-groupoid the $2$-groupoid of \textit{flat} principal $\cal Z(G)$ bundles over $\Sigma$. These describe the \textit{center higher-form symmetry}, as we will see explicitly below.
\end{remark}

To exhibit an action of $\textbf{Sym}(\Sigma, \textbf{B}{\cal G})$ on \textbf{Fields}, we first need to show the $2$-groupoid $\textbf{Sym}(\Sigma, \textbf{B}{\cal G})$ has a composition law.
\begin{propo}[$\textbf{Sym}(\Sigma, \textbf{B}{\cal G})$ is a Gray monoid]\label{propo:gray-monoid}

Let $\textbf{Sym}(\Sigma, \textbf{B}{\cal G})$ be the mapping $2$-groupoid from Definition~\ref{def:sym2grpd}. It is a Gray monoid under which all objects are invertible.  
\end{propo}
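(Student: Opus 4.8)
Abstractly the statement is immediate: $\mathrm{Aut}(\textbf{B}\mathcal{G})$ is a group object in $\textbf{H}$, the flat modality preserves group objects, and $\textbf{H}(\Sigma,-)$ carries group objects to group objects, so $\textbf{Sym}(\Sigma,\textbf{B}\mathcal{G})=\textbf{H}_{\flat}(\Sigma,\mathrm{Aut}(\textbf{B}\mathcal{G}))$ inherits a group structure; the content of the Proposition is to produce this structure explicitly in the Čech model of Definition~\ref{def:sym2grpd}, where it is naturally presented as a Gray monoid. The plan is therefore to equip $\textbf{Sym}(\Sigma,\textbf{B}\mathcal{G})$ with the \emph{patchwise} tensor product coming from composition of strict $2$-endofunctors of $\textbf{B}\mathcal{G}$. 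Concretely: given cocycles $\Phi_{ijk}=\big((\phi_i,\psi_i),(\eta_{\bullet,ij},\beta_{\eta_{ij}}),m_{\bullet,ijk}\big)$ and $\Phi'_{ijk}$ as in Definition~\ref{def:sym2grpd}, let $\Phi\otimes\Phi'$ have on $U_i$ the composite endofunctor $(\phi_i\circ\phi'_i,\psi_i\circ\psi'_i)$, on $U_{ij}$ the horizontal composite of $(\eta_{\bullet,ij},\beta_{\eta_{ij}})$ and $(\eta'_{\bullet,ij},\beta_{\eta'_{ij}})$ --- defined by a fixed whiskering convention using the composition of $\textbf{B}\mathcal{G}$ --- and on $U_{ijk}$ the corresponding composite of $m_{\bullet,ijk}$ and $m'_{\bullet,ijk}$; the tensor on the $1$- and $2$-morphisms of $\textbf{Sym}$ is defined the same way. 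The first thing to check is that this patchwise composite again satisfies the cocycle conditions (\ref{XModFunctor1})--(\ref{XModModif}): since left and right composition by a fixed strict $2$-endofunctor are strict $2$-functors on $\mathrm{Aut}(\textbf{B}\mathcal{G})$ --- composition of $2$-endofunctors of $\textbf{B}\mathcal{G}$ being strictly functorial in each slot --- they carry the relations for $\Phi$ and for $\Phi'$ patch-by-patch to those for $\Phi\otimes\Phi'$, and likewise for the $1$- and $2$-morphism data.

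Next I would read off the Gray structure. On objects the tensor is strictly associative and strictly unital: composition of crossed-module endomorphisms $(\phi,\psi)$ is ordinary composition of group automorphisms, with strict unit $\mathrm{Id}_{\textbf{B}\mathcal{G}}=(\mathrm{id}_G,\mathrm{id}_H)$; and for a fixed $\Phi$ the operations $\Phi\otimes(-)$ and $(-)\otimes\Phi$ are strict $2$-functors, left/right whiskering by a fixed endofunctor being strict. What fails to be strict is the middle-four interchange for the tensor of two $1$-morphisms of $\textbf{Sym}$: the two admissible whiskering orders differ by an invertible modification --- the Gray interchanger --- which I would exhibit patchwise from the crossed-module formulae of Definition~\ref{def:aut2gp} (it is governed by the $\beta$-components together with the action $\triangleright$, and is automatically invertible, its components lying in the group $H$). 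Its naturality in both arguments and the Gray coherence axioms relating it to composition of $1$-morphisms and to the unit then hold patch by patch; structurally, these are exactly the axioms making the $2$-category of strict $2$-endofunctors, strong transformations and modifications of $\textbf{B}\mathcal{G}$ a Gray monoid (cf.\ \cite{gurski2006algebraic}), here restricted to the invertible part and assembled over the cover.

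For the final clause, given $\Phi_{ijk}$ the cocycle $\Phi^{-1}_{ijk}$ assembled patchwise from $(\phi_i^{-1},\psi_i^{-1})\in\mathrm{Aut}(G)\times\mathrm{Aut}(H)$ together with the inverse natural isomorphisms and modifications is a two-sided tensor inverse --- $\Phi\otimes\Phi^{-1}$ and $\Phi^{-1}\otimes\Phi$ equal the unit cocycle on the nose, since endofunctor composition is strict --- so every object of $\textbf{Sym}(\Sigma,\textbf{B}\mathcal{G})$ is invertible; that $\Phi^{-1}$ is again a legitimate cocycle follows from the same $2$-functoriality argument (inverting automorphisms and whiskering by inverses are $2$-functorial). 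Since $\textbf{Sym}(\Sigma,\textbf{B}\mathcal{G})$ is already a $2$-groupoid, all $1$- and $2$-morphisms are invertible too, which completes the statement.

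The step I expect to be the main obstacle is the second one: identifying the interchanger $2$-cell explicitly from the crossed-module data and verifying the Gray coherence axioms --- i.e.\ genuinely exhibiting the monoidal structure in Gray rather than fully strict form --- and, more tediously, keeping a consistent whiskering convention throughout so that the interchanger, its coherences, and the inverses all descend to the Čech level. Once one grants via \cite{gurski2006algebraic} that the endomorphism $2$-category of $\textbf{B}\mathcal{G}$ is a Gray monoid, what remains is the routine but lengthy bookkeeping that this structure is stable under the patchwise construction, which is precisely the stability of the cocycle conditions (\ref{XModFunctor1})--(\ref{XModModif}) under $\otimes$ noted above.
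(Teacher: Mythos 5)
Your proposal is correct and follows essentially the same route as the paper: the tensor product is inherited from the composition structure of the hom-$2$-category ${\rm Hom}(\textbf{B}{\cal G},\textbf{B}{\cal G})$ (via \cite{gurski2006algebraic,gordon1995coherence}), realized patchwise on the Čech data of Definition~\ref{def:sym2grpd} with the Gray interchanger governed by the $\beta$-components and the action $\triangleright$ — which is exactly what the paper's explicit formulas, including the assignment $\otimes(\Sigma_{\dots}) = (\beta_{\nu_i}(\nu'_{\bullet,i}))^{-1}$, spell out, with the remaining verification likewise deferred to a direct application of the crossed-module identities. The one overreach is your claim that patchwise inverses give a two-sided inverse \emph{on the nose}: strictness of endofunctor composition only handles the $(\phi_i,\psi_i)$-components, while the $(\eta_{\bullet,ij},\beta_{\eta_{ij}})$- and $m_{\bullet,ijk}$-data compose by whiskering and do not cancel strictly without solving for them explicitly, so one should only claim (and the proposition only needs) invertibility up to equivalence in the Gray monoid.
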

\begin{proof}
   The composition functor 
   \begin{equation}
       \otimes: \textbf{Sym}(\Sigma, \textbf{B}{\cal G}) \otimes_{\rm G} \textbf{Sym}(\Sigma, \textbf{B}{\cal G}) \to \textbf{Sym}(\Sigma, \textbf{B}{\cal G})
   \end{equation}
   is constructed from that which the $2$-category ${\rm Hom}(\textbf{B}{\cal G}, \textbf{B}{\cal G})$ has by virtue of being a hom-$2$-category \cite[Proposition 6.3.1]{gurski2006algebraic}, \cite[Section 5]{gordon1995coherence}. We specify the images of all objects and morphisms in $\textbf{Sym}(\Sigma, \textbf{B}{\cal G}) \otimes_{\rm G} \textbf{Sym}(\Sigma, \textbf{B}{\cal G})$; the verification that these assignments satisfy the required axioms of a monoidal product on $\textbf{Sym}(\Sigma, \textbf{B}{\cal G})$ follows from a direct application of the identities of Definitions \ref{def:aut2gp}, \ref{def:sym2grpd}.
   \begin{itemize}
       \item On objects
       \begin{eqnarray}
           \otimes\left(\Phi_{ijk}, \Phi'_{ijk}\right) &=& \left(\Phi\Phi'\right)_{ijk}
           \\
           \left(\Phi\Phi'\right)_{ijk} &:=& \Bigg(\left(\phi_i\phi'_i,\psi_i\psi'_i\right),\left(\eta_{\bullet,ij}\phi_j\left(\eta'_{\bullet,ij}\right),\left(\beta_{\eta_{ij}}\phi_i'\right)\left(\eta_{\bullet,ij}\triangleright\left(\psi_j\beta_{\eta'_{ij}}\right)\right)\right),
           \\ & &
           \left(\eta_{\bullet,ik}\triangleright \psi_k\left(m'_{ijk}\right)\right)m_{\bullet,ijk}\left(\eta_{\bullet,ij}\triangleright \left(\beta_{\eta_{jk}}\left(\eta
           _{\bullet,ij}\right) \right)^{-1}\right)\Bigg),
       \end{eqnarray}
       \item on generating $1$-morphisms
        \begin{eqnarray}
            \otimes\left(\Phi_{ijk}\otimes \left(\left(\nu'_{\bullet,i},\beta_{\nu'_i}\right),\mu'_{ij} \right)\right) &=& \left(\left(\phi_i\left(\nu'_{\bullet,i}\right), \psi_i\left(\beta_{\nu'_i}\right)\right),\beta_{\eta_{ij}}\left(\nu'_{\bullet,i} \right) \left( \eta_{\bullet,ij}\triangleright \psi_j\left( \mu'_{\bullet,ij}\right)\right)  \right),
       \\  \otimes\left(\left(\left(\nu_{\bullet,i},\beta_{\nu_i}\right),\mu_{ij} \right) \otimes \Phi_{ijk}'\right) &=& \left( \left(\nu_{\bullet,j},\beta_{\nu_j}\phi'_j \right),\mu_{\bullet,ij}\left(\eta_{\bullet,ij}\triangleright\left(\beta_{\nu_j}\left(\eta'_{\bullet,ij}\right) \right)^{-1} \right)\right),
        \end{eqnarray}
        \item on generating $2$-cells
        \begin{eqnarray}
          \otimes \left(\Phi_{ijk} \otimes M'_{\bullet,i} \right) 
          &=&  \psi_i\left(M'_{\bullet,i}\right)
          \\
          \otimes \left( M_{\bullet,i} \otimes \Phi'_{ijk}\right) 
          &=& M_{\bullet,i},
          \end{eqnarray}
\begin{equation}\label{eq:composition-sym-sigma}
          \otimes \left( \Sigma_{\left((\nu_{\bullet,i},\beta_{\nu_i}),\mu_{ij} \right),\left((\nu'_{\bullet,i},\beta_{\nu'_i}),\mu'_{ij} \right)}\right) = \left(\beta_{\nu_i}\left(\nu_{\bullet,i}' \right) \right)^{-1}.
        \end{equation}
   \end{itemize}

The monoidal unit is the map with identity information at all levels
\begin{equation}
    I = \left(({\rm id}_G, {\rm id}_H)_i, (e_G, e_{G\to H})_{ij}, (e_H)_{ijk} \right),
\end{equation}
where $e_{G \to H}:G \to H$ is the constant map at $e_H$.  
\end{proof}

Having shown $\textbf{Sym}\left(\Sigma, \textbf{B}{\cal G}\right)$ is a Gray monoid where all objects are invertible, it makes sense to talk about $\textbf{Sym}\left(\Sigma, \textbf{B}{\cal G}\right)$-modules.

\begin{propo}[\textbf{Fields} is a $\textbf{Sym}\left(\Sigma,\textbf{B}{\cal G}\right)$-module]\label{propo:fields-mod}

Let $\textbf{Fields}={\rm Hom}\left(U_2\left(\Sigma\right),\textbf{B}{\cal G}\right)$ be the $2$-groupoid of principal $\cal G$ bundles over $\Sigma$ (with respect to a good cover $\{U_i\}_{i\in {\cal I}}$), and $\textbf{Sym}\left(\Sigma,\textbf{B}{\cal G}\right)= {\rm Hom}\left(U_2(\Sigma),{\rm Aut}\left(\textbf{B}{\cal G}\right) \right)$ the Gray monoid of Proposition~\ref{propo:gray-monoid}. Then \textbf{Fields} is a $\textbf{Sym}\left(\Sigma,\textbf{B}{\cal G}\right)$-module $2$-groupoid.
\end{propo}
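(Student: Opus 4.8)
The plan is to realize the action as the \v{C}ech-level shadow of the canonical evaluation $2$-functor $\textbf{ev}\colon{\rm Aut}(\textbf{B}{\cal G})\times\textbf{B}{\cal G}\to\textbf{B}{\cal G}$, that is, to build a $2$-functor
\begin{equation*}
\rho\colon\ \textbf{Sym}(\Sigma,\textbf{B}{\cal G})\otimes_{\rm G}\textbf{Fields}\ \longrightarrow\ \textbf{Fields}
\end{equation*}
by ``applying the symmetry patchwise''. Concretely, an object $\Phi_{ijk}=\bigl((\phi_i,\psi_i),(\eta_{\bullet,ij},\beta_{\eta_{ij}}),m_{\bullet,ijk}\bigr)$ of $\textbf{Sym}(\Sigma,\textbf{B}{\cal G})$ and a bundle $\mathcal{P}=(g_{ij},h_{ijk})$ of $\textbf{Fields}$ are both strict $2$-functors out of the cofibrant replacement $U_2(\Sigma)$; their pairing is a $2$-functor $U_2(\Sigma)\to{\rm Aut}(\textbf{B}{\cal G})\times\textbf{B}{\cal G}$, the Gray subtlety of the diagonal having been absorbed into the product $\otimes_{\rm G}$ of Proposition~\ref{propo:gray-monoid}, and post-composition with $\textbf{ev}$ yields a new bundle. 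Reading off $\textbf{ev}$ on cells, a strict endofunctor acts by $g\mapsto\phi(g)$ and $h\mapsto\psi(h)$; a natural isomorphism $(\eta_\bullet,\beta_\eta)$ evaluated against a $1$-cell $g$ gives the diagonal $\phi(g)\,\eta_\bullet=\delta\!\left(\beta_\eta(g)\right)\eta_\bullet\,\phi'(g)$ of the naturality square; and a modification evaluated against a $2$-cell combines $\psi(h)$, $\beta_\eta$ and $m_\bullet$. One thus obtains explicit functions on $U_{ij}$ and $U_{ijk}$, and analogous formulae on the $1$- and $2$-morphisms of $\textbf{Fields}$. The first task is to record these formulae and verify, using (\ref{XModFunctor1})-(\ref{XModFunctor2}), the half-braiding-type relations (\ref{XModTrans1})-(\ref{XModTrans3}), (\ref{XModModif}), and the Peiffer identity of the crossed module, that the output again obeys the bundle cocycle conditions (\ref{eq:gftn})-(\ref{eq:mcocycle}); this is precisely the assertion that $\textbf{ev}$ is a $2$-functor, transported along ${\rm Hom}(U_2(\Sigma),-)$.

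Next I would check the module-$2$-groupoid axioms. Over a Gray monoid these consist of: a coherence isomorphism $(\Phi\otimes\Phi')\cdot\mathcal{P}\xrightarrow{\sim}\Phi\cdot(\Phi'\cdot\mathcal{P})$, pseudonatural in all three arguments; a unit isomorphism $I\cdot\mathcal{P}\xrightarrow{\sim}\mathcal{P}$ for the monoidal unit $I$ of Proposition~\ref{propo:gray-monoid}; and the module pentagon together with the two module triangles, relating these to the associator of $\textbf{Sym}(\Sigma,\textbf{B}{\cal G})$. Each piece of data is the image under ${\rm Hom}(U_2(\Sigma),-)$ of the corresponding structure carried by the tautological action of ${\rm Aut}(\textbf{B}{\cal G})$ on $\textbf{B}{\cal G}$ via $\textbf{ev}$: on objects the coherence cell is read off from the composition formula for $\otimes(\Phi_{ijk},\Phi'_{ijk})$ in Proposition~\ref{propo:gray-monoid}, and its components on $1$- and $2$-cells are built from the same interchange $2$-cells exhibited there, so that in particular $I$ acts as the identity up to a coherent isomorphism assembled from the $\beta$-data.

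The substance of the argument, and the step I expect to be the main obstacle, is verifying the module pentagon and the two triangles. Unwound at the \v{C}ech level, these become equalities between products in $G$ and $H$ formed from $\phi_i,\psi_i,\eta_{\bullet,ij},\beta_{\eta_{ij}},m_{\bullet,ijk}$ and the bundle data $g_{ij},h_{ijk}$; each collapses, after repeated use of (\ref{XModFunctor2}), (\ref{XModTrans1})-(\ref{XModTrans3}), (\ref{XModModif}), (\ref{eq:bcocycle})-(\ref{eq:mcocycle}) and the Peiffer identity, to a tautology, exactly as the monoidal axioms did in the proof of Proposition~\ref{propo:gray-monoid}. I would organize the write-up in the same style: list the action of $\rho$ on all generating $0$-, $1$- and $2$-cells of $\textbf{Sym}(\Sigma,\textbf{B}{\cal G})\otimes_{\rm G}\textbf{Fields}$, then observe that the required axioms are forced by the identities of Definition~\ref{def:aut2gp} together with the defining cocycles of $\textbf{Fields}$. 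Finally, it is worth remarking that abstractly the claim is automatic: $\textbf{B}{\cal G}$ is, by the very construction of the automorphism $\infty$-group, a module over ${\rm Aut}(\textbf{B}{\cal G})$ in $\textbf{H}$ via $\textbf{ev}$, and ${\rm Hom}(U_2(\Sigma),-)$ carries monoids to monoids and modules to modules; so the genuine content of the proposition is the explicitness of the \v{C}ech formulae, which is what the subsequent sections require.
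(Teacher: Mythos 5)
Your proposal is correct and takes essentially the same route as the paper: the action is the \v{C}ech-level shadow of the evaluation functor, spelled out explicitly on the generating $0$-, $1$- and $2$-cells (including the Gray interchange cells) of $\textbf{Sym}\left(\Sigma,\textbf{B}{\cal G}\right)\otimes_G\textbf{Fields}$, with all verifications reduced to the identities of Definitions~\ref{def:aut2gp} and~\ref{def:sym2grpd} together with the bundle cocycle conditions, exactly as in the proof of Proposition~\ref{propo:gray-monoid}. The one difference is that the paper's action $2$-functor satisfies the module condition \emph{strictly} (using the Gray associator), so the pentagon and triangle coherence cells you anticipate as the ``main obstacle'' are in fact identities rather than nontrivial data.
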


\begin{proof}
  As before, we provide the strict $2$-functor. The verification that it satisfies the required identities follows from a direct application of the definitions.

The action $2$-functor 
\begin{equation}
    \cdot : \textbf{Sym}\left(\Sigma, \textbf{B}{\cal G}\right) \otimes_G \textbf{Fields} \to \textbf{Fields}
\end{equation}
has the following images:
\begin{itemize}
    \item On objects
    \begin{eqnarray}
       \cdot  \left(\Phi_{ijk},\left(g_{ij},h_{ijk} \right) \right)  &=& \left(\Phi g_{ij},\Phi h_{ijk} \right)
       \\
       \left(\Phi g_{ij},\Phi h_{ijk} \right) &:=& \left(\eta_{\bullet, ij}\phi_j\left(g_{ij}\right), \left(\eta_{\bullet,ik}\triangleright \psi_k\left(h_{ijk}\right) \right) m_{\bullet,ijk} \left(\eta_{\bullet,ij} \triangleright \left(\beta_{\eta_{jk}}\left(g_{ij}\right) \right)^{-1} \right) \right),
    \end{eqnarray}
    \item on generating $1$-morphisms
    \begin{eqnarray}
        \cdot\left(\Phi_{ijk}\otimes \left(a_i,b_{ij}\right)\right) &=& \left(\phi_i\left(a_i\right), \beta_{\eta_{ij}}\left(a_i\right)\left(\eta_{\bullet,ij} \triangleright \psi_j\left(b_{ij}\right)\right)\right)
        \\
        \cdot \left(\left(\left(\nu_{\bullet,i},\beta_{\nu_{i}}\right),\mu_{\bullet, ij}\right),\left(g_{ij},h_{ijk}\right) \right) &=& \left(\nu_{\bullet,i} , \mu_{\bullet,ij}\left( \eta_{\bullet,ij} \triangleright \left(\beta_{\nu_j}\left(g_{ij}\right)\right)^{-1} \right)\right),
    \end{eqnarray}
    \item on generating $2$-morphisms
    \begin{eqnarray}
       \cdot\left( \Phi_{ijk} \otimes m_i\right) &=& \psi_i\left(m_i\right),
        \\
        \cdot \left(M_{\bullet,i} \otimes \left( g_{ij}, h_{ijk}\right) \right) &=& M_{\bullet,i},
    \end{eqnarray}
    \begin{equation}\label{eq:action-sym-sigma}
        \cdot \left(\Sigma_{\left(\left(\nu_{\bullet,i},\beta_{\nu_{i}}\right),\mu_{\bullet, ij}\right),\left(a_i,b_{ij}\right)} \right) = \left(\beta_{\nu_i}\left(a_i\right) \right)^{-1}.
    \end{equation}
\end{itemize}

The action strict $2$-functor satisfies the module condition \textit{strictly}, meaning the diagram
\begin{equation}
\begin{tikzcd}
                                                                                                                                                                                                                                                                                                                                                                                         & {\textbf{Sym}\left(\Sigma,\textbf{B}{\cal G}\right) \otimes_G \left(\textbf{Sym}\left(\Sigma,\textbf{B}{\cal G}\right) \otimes _G \textbf{Fields}\right)} \arrow[r, "{\left({\rm id}_{\textbf{Sym}\left(\Sigma,\textbf{B}{\cal G}\right)}\right)\otimes_G \left(\cdot\right)}", shift left] & {\textbf{Sym}\left(\Sigma,\textbf{B}{\cal G}\right) \otimes_G} \arrow[dd, "\cdot"] \\
{\left( \textbf{Sym}\left(\Sigma,\textbf{B}{\cal G}\right) \otimes_G \textbf{Sym}\left(\Sigma,\textbf{B}{\cal G}\right)\right)\otimes_G \textbf{Fields}} \arrow[rd, "\left(\otimes\right) \otimes_G \left({\rm id}_{\textbf{Fields}}\right)"'] \arrow[ru, "{a_{\textbf{Sym}\left(\Sigma,\textbf{B}{\cal G}\right),\textbf{Sym}\left(\Sigma,\textbf{B}{\cal G}\right),\textbf{Fields}}}"] &                                                                                                                                                                                                                                                                                             &                                                                                    \\
                                                                                                                                                                                                                                                                                                                                                                                         & {\textbf{Sym}\left(\Sigma,\textbf{B}{\cal G}\right) \otimes_G \textbf{Fields}} \arrow[r, "\cdot"']                                                                                                                                                                                          & \textbf{Fields}                                                                   
\end{tikzcd}
\end{equation}
commutes strictly, using the associator $2$-functor 
\begin{equation}
    a_{\cal C,D,E}: \left({\cal C}\otimes_G {\cal D} \right)\otimes_G {\cal E} \to {\cal C} \otimes _G \left( {\cal D}\otimes_G {\cal E}\right)
\end{equation}
of bicategories under the Gray product \cite[Theorem 12.2.23]{JohnsonYau2021}.
  
\end{proof}

\begin{remark}[Center higher-form symmetry]
The composition law of Proposition~\ref{propo:gray-monoid} can be seen to adequatedly restrict to the sub-$2$-groupoid of flat principal $\cal Z(G)$ bundles, so that the latter is moreover a monoidal sub-$2$-groupoid of $\textbf{Sym}\left(\Sigma,\textbf{B}{\cal G}\right)$. Combined with Proposition~\ref{propo:fields-mod}, this shows that \textbf{Fields} has an action by flat principal $\cal Z(G)$ bundles. This is the \textit{center higher-form symmetry} tensoring action of Equation~\ref{eq:tensorbundles}:
\begin{equation}
    \left( \textit{flat} \, \text{principal} \, {\cal Z}({\cal G}) \, \text{bundles} \right) \otimes \left( \text{principal} \, {\cal G} \, \text{bundles}  \right)  \to \left( \text{principal} \, {\cal G} \, \text{bundles}  \right).
\end{equation}

\end{remark}

\section{Pure $G$ gauge theory}\label{sec:puregauge}

In this section, we consider pure gauge theories. A pure gauge theory with gauge Lie (1-)group $G$ can be realized as a $\sigma$-model, where the target space is the \textit{smooth moduli stack of principal $G$ bundles with connection} 
\begin{equation}
    \textbf{Fields} : = \textbf{H}(\Sigma, \textbf{B}_{\nabla}G).
\end{equation}

We first describe the symmetries without connections by applying the results of Section~\ref{sec:2-action}, and then proceed to analyze the interplay with connections, symmetries, and gaugings in the case $G=U(1)$.

\subsection{Symmetries of principal $G$ bundles}\label{ssec:aut2gp}

As already recognized in the literature (e.g. \cite{Gaiotto:2014kfa,Bhardwaj:2023kri,Bartsch:2023pzl,Santilli:2024dyz}), $G$ gauge theories have an $\text{Out}(G)$ 0-form symmetry, and a $Z(G)$ 1-form symmetry or, more precisely, a global 2-group symmetry controlled by the \textit{automorphism $2$-group} $\text{AUT}(G)$. Presented as a strict $2$-group, this is the crossed module of Lie groups (see Appendix~\ref{app:adjbundles})
\begin{equation}\label{eq:aut2gp}
    \text{AUT}(G) = \left( G \xrightarrow{\text{ad}} \text{Aut}(G) \right),
\end{equation}
where ${\rm ad}: G \to  {\rm Aut}(G)$ is the adjoint action with kernel $\text{ker}(\text{ad}) = Z(G)$, the 1-form symmetry, and cokernel $\text{coker}(\text{ad})= \text{Out}(G)$, the 0-form symmetry. Indeed, this fits into an extension of smooth 2-groups
\begin{equation}\label{eq:autsequence}
    1\to \textbf{B}Z(G) \to \text{AUT}(G)\to \text{Out}(G) \to 1,
\end{equation}
for $\textbf{B}Z(G)$ the 2-group corresponding to the delooping of the abelian group $Z(G)$.

This smooth automorphism 2-group can be computed as the collection of automorphisms of $G$, not in the category of Lie groups, but in the 2-category of smooth groupoids. In other words, and as highlighted in Section~\ref{sec:general}, we can compute this as (the $2$-truncation of) the automorphisms of $\textbf{B}G$ in \textbf{H}, meaning that as smooth 2-groups
\begin{equation}
    \aut{G}= \tau_{\leq 1}{\rm Aut}(\textbf{B}G),
\end{equation}
(cf. Eq.(\ref{eq:aut-trunaction-general})) where ${\rm Aut}(\textbf{B}G)$ is computed in \textbf{H}.

The higher-form symmetries for $G$ gauge theory, according to (\ref{eq:flatsymmetryparameters}), are parameterized by the stack of flat smooth morphisms
\begin{equation}
     \textbf{Sym}(\Sigma,{\cal X}):=\textbf{H}_{\flat}(\Sigma,\aut{G}),
\end{equation}
where $\text{AUT}(G)$ is regarded as a smooth stack.

In this special case of smooth ($1$-)groups, we can define $\cal G$-symmetries for $\cal G$ a smooth $2$-group by considering smooth $2$-group homomorphisms
\begin{equation}\label{eq:hgaction}
    \alpha: {\cal G}\to \aut{G},
\end{equation}
which can be gauged.

Given that Lie groups are special cases of smooth strict $2$-groups via the crossed module presentation
\begin{equation}
    G = \left(G, 0, \delta=0, \triangleright=0\right),
\end{equation}
we can apply the results of Section~\ref{sec:2-action} to obtain the symmetries.

\paragraph{Automorphism $2$-groupoid.} Definition~\ref{def:aut2gp} applied to this special case gives the $2$-groupoid ${\rm Aut}(\textbf{B}G)$ whose
\begin{itemize}
    \item objects are group automorphisms
    \begin{equation}
        \phi: G\xrightarrow{\sim} G,
    \end{equation}
    \item $1$-morphisms
    \begin{equation}
        \eta_{\bullet}: \phi \to \phi'
    \end{equation}
    are group elements $\eta_{\bullet}\in G$ such that
    \begin{equation}
        \phi(f)\eta_{\bullet} = \eta_{\bullet}\phi'(f),
    \end{equation}
    for all $f\in G$,
    \item $2$-morphisms are only identity $2$-morphisms,
\end{itemize}
which recovers the automorphism $2$-group (\ref{eq:aut2gp}). In particular, the center ${\cal Z}(G)$ in Remark~\ref{rmk:drinfeld-center} corresponds to the usual center $Z(G)$ of $G$, recovering the inclusion of its delooping in the sequence (\ref{eq:autsequence}).

\paragraph{Symmetry action.} The $2$-groupoid of symmetries $\textbf{Sym}\left(\Sigma,\textbf{B}G\right)$ as per Definition~\ref{def:sym2grpd} is the $2$-groupoid whose
\begin{itemize}
    \item objects are pairs $\left(\phi_i, \eta_{\bullet,ij}\right)$, consisting of $G$ isomorphisms $\phi_i\in {\rm Aut}(G)$ at each patch $U_i$, and group elements $\eta_{\bullet,ij}\in G$, together satisfying, for all $f\in G$
    \begin{eqnarray}
        \phi_i(f)\eta_{\bullet,ij} &=& \eta_{\bullet,ij}\phi_j(f),
        \\
        \eta_{\bullet,ik} &=& \eta_{\bullet,ij}\eta_{\bullet,jk},
    \end{eqnarray}
    \item $1$-morphisms
    \begin{equation}
        \nu_{\bullet,i}: \left(\phi_i, \eta_{\bullet,ij}\right) \xrightarrow{\sim } \left(\phi'_i, \eta'_{\bullet,ij}\right),
    \end{equation}
    are group elements $\nu_{\bullet,i}$ such that
    \begin{eqnarray}
        \phi_i(f) \nu_{\bullet,i} &=& \nu_{\bullet,i}\phi'_i(f),
        \\
        \nu_{\bullet,i}\eta'_{\bullet,ij} &=& \eta_{\bullet,ij} \nu_{\bullet,j},
    \end{eqnarray}
    \item $2$-morphisms are only identity $2$-morphisms.
\end{itemize}

Having shown that the composition of symmetries, as well as the action on principal $G$ bundles is functorial, it suffices to state these at the level of objects, for $\left(\phi_i,\eta_{\bullet,ij}\right) \in {\rm ob}\left(\textbf{Sym}\left(\Sigma,\textbf{B}G\right)\right)$ and $g_{ij}$ a principal $G$ bundle:
\begin{equation}
    \otimes \left(\left(\phi_i,\eta_{\bullet,ij}\right), \left(\phi'_i,\eta'_{\bullet,ij}\right) \right) = \left(\phi_i\phi'_i,\eta_{\bullet,ij}\phi_i\left(\eta'_{\bullet,ij}\right) \right),
\end{equation}
\begin{equation}
    \cdot \left(\left(\phi_i,\eta_{\bullet,ij}\right),g_{ij} \right) = \eta_{\bullet,ij}\phi_j\left(g_{ij}\right).
\end{equation}

\subsection{$G=U(1)$}\label{ssec:u1}

Let us now specialize this discussion further to $G=U(1)$, in order to see how connections interact with the symmetries described so far.

\paragraph{Symmetry $2$-group.}  Since $U(1)$ is abelian, it follows that $\text{Aut}(U(1)) = \text{Out}(U(1))$, which in this case is the finite group $ \Z_2$, and $Z(U(1))=U(1)$. The automorphism $2$-group can thus be presented as a crossed module of Lie groups
\begin{equation}\label{aut-u1}
    \text{AUT}(U(1)) = \left(U(1) \xrightarrow{0} \Z_2 \right),
\end{equation}
where $\Z_2$ acts faithfully on $U(1)$. Thus, a \textit{flat} morphism
\begin{equation}
    f: \Sigma \to \text{AUT}(U(1))
\end{equation}
in $\textbf{Sym}\left(\Sigma,\aut{U(1)} \right)$ consists of pairs $(z,P)$ where $z:\Sigma\to \Z_2$ is a globally-defined $\Z_2$ group element representing an automorphism of $U(1)$, and $P$ is a flat principal $U(1)$ bundle over $\Sigma$. This symmetry admits a connective refinement, as a symmetry of principal $U(1)$ bundles with connection. Thus, for $\{U_i\}_{i\in {\cal I}}$ a good Čech cover for $\Sigma$, we can describe these symmetries as a tuple $(z,(g_{ij},A_i))$ for $(g_{ij},A_i)$ a principal $U(1)$ bundle with a flat connection.

In Čech data, the induced composition law on $\textbf{H}_{\flat}(\Sigma, \aut{U(1)})$ is
\begin{equation}
    \left(z,\left(g_{ij},A_i\right)\right)\otimes \left(z',\left(g'_{ij},A'_i\right)\right) = \left(zz', \left(g_{ij} \left(z \left(g'_{ij}\right)\right),A_i+z^*A_i'\right)\right),
\end{equation}
where $z^*: \mathfrak{u}(1)\to \mathfrak{u}(1)$ is the differential of $z:U(1)\to U(1)$. On maps of the form $(1,(g_{ij},A_i))$, this composition corresponds to the usual \textit{tensor} product of principal $U(1)$ bundles with (flat) connection.

The action of this global $U(1)$ $1$-form symmetry is the same composition law, except that the bundle on which it acts is not necessarily flat. More precisely, for a field $\phi:\Sigma \to \textbf{B}_{\nabla}U(1)$, in Čech data $(g_{ij},a_i)$, the action of the global $\aut{U(1)}$ symmetry, parameterized by $\left(z,\left(g'_{ij},A_i\right)\right)$, is
\begin{equation}
    \left(z,\left(g'_{ij},A_i\right)\right)\cdot \left(g_{ij},a_i\right) = \left(g'_{ij}\left(z\left( g_{ij}\right)\right), A_i + z^*a_i \right).
\end{equation}
This action is readily seen to be functorial.

Note that here the center higher-form symmetry is encoded by flat principal $U(1)$ bundles themselves. From the geometric perspective, the center symmetry action is rooted on the observation that $U(1)=S^1$ can be regarded as the total space of a principal $U(1)$ bundle, which acts by left-translation. One has the sequence
\begin{equation}\label{eq:u1s1}
    U(1) \to U(1) \to U(1)\dslash U(1)=:{\rm INN}(U(1))\to \textbf{B}U(1),
\end{equation}
where the smooth homotopy quotient $U(1)\dslash U(1)={\rm INN}(U(1))$ is the \textit{inner automorphism 2-group} of $U(1)$, which can be realized as a smooth crossed module
\begin{equation}
    {\rm INN}(U(1))=\left(U(1)\xrightarrow{\rm id}U(1)\right).
\end{equation}
The relevant sequence for this discussion is simply the stacky version given by the delooping of (\ref{eq:u1s1})
\begin{equation}\label{eq:bu1s1}
    \textbf{B}U(1) \to \textbf{B}U(1) \to \textbf{B}{\rm INN}(U(1)) \to \textbf{B}^2 U(1),
\end{equation}
which exists because $\textbf{B}U(1)$ itself is a group object (a 2-group). This sequence is essential to understand gaugings, as we describe further below.

\paragraph{Action on Wilson loops and holonomies.} Let us explore the symmetry action further. As mentioned in Section~\ref{sec:intro}, the ``quantum operators'' charged under higher-form symmetries are \textit{extended} operators. For the $U(1)$ $1$-form here, these operators are the \textit{Wilson loops}\footnote{Or, more generally, infinitely-long Wilson lines \cite{Gomes:2023ahz} (assumed to vanish at infinity).}. These are operators coming from connection holonomies, which are mathematically well-defined. We can thus elaborate on these particular actions before quantizing. We will still refer to these as Wilson loops, with the understanding that we are working at the (prequantum) level of holonomies, not of fully quantum operators.

Denote a Wilson loop as
\begin{equation}
    W_n(\gamma,a) = {\rm exp}\left(in\oint_{\gamma} a   \right),
\end{equation}
for $a$ a connection on a principal $U(1)$ bundle over $\Sigma$, and $n\in \Z$ an integer\footnote{This just corresponds to computing the holonomy of the $n$th tensor power of $a$ if $n\in\Z_{+}$, or of that of the dual bundle with connection $(g_{ij}^{-1},-a)$ if $n\in\Z_-$.}. While this definition makes it seem that the Wilson loop depends on the particular connection $a$, the fact that $\gamma$ is a loop implies that if $a\sim_{\rm gauge}a'$ are two gauge-equivalent connections, then
\begin{equation}
   {\rm exp}\left(in\oint_{\gamma} a   \right) = {\rm exp}\left(in\oint_{\gamma} a'   \right).
\end{equation}
Thus, the Wilson loop $W_n(\gamma,-)$ for a given loop $\gamma\subset \Sigma$ can be understood as a $U(1)$-valued ``function'' on $\cal A/G$, the space of connections modulo gauge transformations, or, equivalently, as a function on the \textit{stack} of fields $\textbf{H}(\Sigma,\textbf{B}_{\nabla}U(1))$ which \textit{only depends on the connected component of the gauge field} (since the morphisms in this stack are precisely gauge transformations). 

If we regard $\Sigma$ as a smooth $\infty$-groupoid via its smooth fundamental $\infty$-groupoid $\Pi_{\infty}\Sigma$, the novelty here is that the charged objects now depend not on the objects (points in $\Sigma$) of $\Pi_{\infty}\Sigma$, but on the (higher) morphisms therein. This is the crucial nonlocal generalization of charged quantum operators of \cite{Gaiotto:2014kfa}. However, in this case, \textit{the dependence on the stack of fields remains ``local''}, in the sense that it only depends on the objects (and, in fact, on its connected component).

Returning to the action of the $U(1)$ $1$-form symmetry, we can present it as
\begin{equation}\label{eq:wilsonloopaction}
    (A) \cdot W_n({\gamma},a) = W_n(\gamma,a)\,{\rm exp}\left(in\oint_{\gamma}A\right),
\end{equation}
where ${\rm exp}(in\oint_{\gamma}A)\in U(1)$ and, furthermore, if $\gamma, \gamma'$ are the boundary $\partial S=\gamma \coprod \bar{\gamma}'$ of a surface $S$ (for $\bar{\gamma}'$ the orientation-reversal of $\gamma'$), and so
\begin{equation}
    \text{exp}\left(\oint_{\gamma}A - \oint_{\gamma'}A\right) = \text{exp}\left(\int_S dA\right) = 1,
\end{equation}
meaning the parameter only depends on the class of $\gamma$, since $A$ is flat.

On the other hand, the $\Z_2$ $0$-form symmetry acts on the Wilson lines as
\begin{equation}
    z\cdot W_n(\gamma,a) = W_n(\gamma,-a) = \left(W_n(\gamma,a)\right)^{-1}.
\end{equation}

\paragraph{Gauging.} Having described the symmetries with connections, let us now describe the gauging. For simplicity, we treat them separately.

The $\Z_2$ $0$-form symmetry, by definition, acts on the gauge group $U(1)$ via group homomorphisms, and thus gives rise to a group extension, the \textit{semidirect product}
\begin{equation}
    1\to U(1) \to U(1)\rtimes \Z_2 \to \Z_2 \to 1, 
\end{equation}
where $U(1)\rtimes \Z_2 \cong O(2)$. At the level of stacks, this is realized as the sequence
\begin{equation}
  \textbf{B}U(1) \to \textbf{B}O(2) \to \textbf{B}\Z_2,
\end{equation}
which exhibits $\textbf{B}O(2) \cong \textbf{B}(U(1)\rtimes \Z_2) \cong (\textbf{B}U(1)) \dslash  \Z_2$ as the smooth homotopy quotient of a total space $\textbf{B}U(1)$ by a $\Z_2$-action. This is the statement that the quotient stack $(\textbf{B}U(1)) \dslash  \Z_2$ encodes $\Z_2$-equivariant principal $U(1)$ bundles, equivalently $(U(1)\rtimes \Z_2)=O(2)$ principal bundles \cite[Proposition 4.1.19]{SS25-Bund}, \cite[Eq. 3.4]{Schafer-Nameki:2023jdn}. This gaugeable global $0$-form $\Z_2$ symmetry is not unique to pure $U(1)$ gauge theory but also appears in higher $U(1)$ gauge theory, as we will see in Section~\ref{sec:bfields}.

The effect of this on the original $U(1)$ fields, as befits a weak quotient, is that a $U(1)$ gauge field becomes gauge equivalent to its image under the action of $\Z_2$. This is accomplished explicitly by the inclusion
\begin{eqnarray}
    U(1) &\hookrightarrow & O(2),\\
    \exp(i\alpha) &\mapsto & \begin{pmatrix}
        \cos\alpha & \sin\alpha \\
        -\sin\alpha & \cos\alpha
    \end{pmatrix},
\end{eqnarray}
for $\alpha\in\R/\Z$. For Čech data of a principal $U(1)$ bundle with connection this is
\begin{equation}
    (g_{ij}=\exp(i\alpha_{ij}),A_i) \mapsto \left(\begin{pmatrix}
        \cos\alpha_{ij} & \sin\alpha_{ij} \\
        -\sin\alpha_{ij} & \cos\alpha
    \end{pmatrix}, \begin{pmatrix}
        0 & A_i \\
        -A_i & 0
    \end{pmatrix}\right).
\end{equation}
The action of $\Z_2$ on this data is
\begin{equation}
    (g^{-1}_{ij}=\exp(-i\alpha_{ij}),-A_i) \mapsto \left(\begin{pmatrix}
        \cos\alpha_{ij} & -\sin\alpha_{ij} \\
        \sin\alpha_{ij} & \cos\alpha
    \end{pmatrix}, \begin{pmatrix}
        0 & -A_i \\
        A_i & 0
    \end{pmatrix}\right).
\end{equation}
These are gauge-equivalent as $O(2)$ bundles via the gauge transformation
\begin{equation}
    g_i= \begin{pmatrix}
        1 & 0 \\
        0 & -1
    \end{pmatrix}\in \Z_2\hookrightarrow O(2),
\end{equation}
\begin{eqnarray}
    \begin{pmatrix}
        0 & -A_i \\
        A_i & 0
    \end{pmatrix} &=& \begin{pmatrix}
        1 & 0 \\
        0 & -1
    \end{pmatrix}\begin{pmatrix}
        0 & A_i \\
        -A_i & 0
    \end{pmatrix}\begin{pmatrix}
        1 & 0 \\
        0 & -1
    \end{pmatrix}^{-1},
    \\
    \begin{pmatrix}
        \cos\alpha_{ij} & -\sin\alpha_{ij} \\
        \sin\alpha_{ij} & \cos\alpha
    \end{pmatrix} &=& \begin{pmatrix}
        1 & 0 \\
        0 & -1
    \end{pmatrix} \begin{pmatrix}
        \cos\alpha_{ij} & \sin\alpha_{ij} \\
        -\sin\alpha_{ij} & \cos\alpha
    \end{pmatrix} \begin{pmatrix}
        1 & 0 \\
        0 & -1
    \end{pmatrix}^{-1}.
\end{eqnarray}

While we simply argued at the level of objects (bundles), this is functorial, since we know the action itself is functorial.

Thus, the field stack of the $\Z_2$-gauged theory is
\begin{equation}
\textbf{Fields}_{\Z_2}:=\textbf{H}\left(\Sigma,\textbf{B}_{\nabla}O(2)\right).
\end{equation}

Now, let us describe the gauging of the $U(1)$ $1$-form symmetry. Again, we need to consider the weak quotient under some group object. The relevant sequence is (cf. Eq.~(\ref{eq:bu1s1}))
\begin{equation}
    \textbf{B}U(1) \to \textbf{B}U(1) \to \textbf{B}\left(U(1)\dslash U(1)\right)=\textbf{B}{\rm INN}\left(U(1)\right) \to \textbf{B}^2 U(1).
\end{equation}
That is to say, the stack of fields for the $\textbf{B}U(1)$-gauged theory is 
\begin{equation}\label{eq:bu1gaugedfields}
    \textbf{Fields}_{\textbf{B}U(1)} = \textbf{H}\left(\Sigma,\textbf{B}_{\nabla}{\rm INN}U(1))\right),
\end{equation}
the stack of principal ${\rm INN}(U(1))$ bundles with (unadjusted) connection. The smooth $2$-group ${\rm INN}(U(1))$ admits the presentation as a crossed module of Lie groups
\begin{equation}\label{eq:innu1xmod}
    {\rm INN}(U(1)) = \left(U(1)\xrightarrow{\rm id} U(1) \right),
\end{equation}
with trivial action, fitting in the short exact sequence
\begin{equation}
    1\to 1 \to U(1) \xrightarrow{\rm id} U(1) \to 1 \to 1.
\end{equation}

Note that, \textit{topologically}, $\vert \textbf{B}{\rm INN}(U(1)) \vert \cong E(BU(1))\cong *$, for $E(BU(1))$ the universal bundle of the topological group $BU(1)$, so that there are no nontrivial flux/cohomology classes anymore. This is what has been referred to as obtaining a ``trivial theory'' after gauging (e.g. \cite{Pantev:2023dim,Witten:1995gf,Gukov:2006jk}).

To understand better in what sense this theory is trivial, let us describe the fields (\ref{eq:bu1gaugedfields}), namely, connections on principal ${\rm INN}(U(1))$ higher bundles over $\Sigma$, using the crossed module presentation (\ref{eq:innu1xmod}). A principal ${\rm INN}(U(1))$ bundle with (unadjusted) connection consists of $U(1)$-valued cochains $g_{ij}, z_{ijk}$, differential $1$-forms $a_i,c_{ij}$, and differential $2$-forms $b_i$, satisfying the identities
\begin{equation}\label{eq:u1coboundary}
    g_{ik} = g_{ij}g_{jk}z_{ijk},
\end{equation}
\begin{equation}\label{eq:u12cocycle}
     z_{ikl} z_{ijk} =z_{ijl} z_{jkl},
\end{equation}
\begin{equation}
    c_{ik} = c_{ij}+c_{jk} +d \log z_{ijk},
\end{equation}
\begin{equation}
    a_j = a_i + d\log g_{ij} - c_{ij},
\end{equation}
\begin{equation}
    b_j = b_i + dc_{ij},
\end{equation}
along with the flat fake-curvature condition
\begin{equation}
    0 = da_{i} + b_i.
\end{equation}
These connections all have trivial curvature $3$-form since
\begin{equation}
    h_i := db_i = 0,
\end{equation}
from the exactness of $b_i$.

In particular, the identities (\ref{eq:u1coboundary})-(\ref{eq:u12cocycle}) indicate that the underlying bundles are simply bundle gerbes with a \textit{choice} of trivialization. This explains the triviality \textit{at the level of topology.}

As for the connections, a gauge transformation is 
\begin{equation}
    \left(y_{ij},h_i, \lambda_i\right): \left(g_{ij}, z_{ijk}, a_i, c_{ij}, b_{i}\right) \xrightarrow{\sim} \left(g'_{ij}, z'_{ijk}, a'_i, c'_{ij}, b'_{i}\right),
\end{equation}
 given by $U(1)$-valued cochains $y_{ij},h_i$, and differential $1$-forms $\lambda_i$ satisfying
\begin{equation}
    z'_{ijk} y_{ij}y_{jk} = z_{ijk} y_{ik},
\end{equation}
\begin{equation}
    g'_{ij} = h_i^{-1} g_{ij} h_j \exp\left(y_{ij}\right),
\end{equation}
\begin{equation}
    c'_{ij} = c_{ij} + \lambda_j - \lambda_i +d\log y_{ij},
\end{equation}
\begin{equation}
    a'_i = a_i + d\log h_i -  \lambda_i
\end{equation}
\begin{equation}
    b'_i = b_i + d\lambda_i.
\end{equation}

We can regard principal $U(1)$ bundles with connection $(g_{ij},a_i)$ as principal ${\rm INN}(U(1))$ bundles with connection as
\begin{equation}
    (g_{ij}, a_i) \mapsto (g_{ij},1,a_i,0,-da_i).
\end{equation}
The action of a principal $U(1)$-bundle with \textit{flat} connection $(y_{ij},\lambda_i)$ is
\begin{equation}
    (y_{ij},\lambda_i) \cdot (g_{ij}, a_i) = (g_{ij}y_{ij},a_i + \lambda_i).
\end{equation}
As principal ${\rm INN}(U(1))$ bundles with connection, these are related by a gauge transformation
\begin{equation}
    (y_{ij},0,-\lambda_i): (g_{ij},1,a_i,0,-da_i) \xrightarrow{\sim} \left(g_{ij}y_{ij},0, a_i+\lambda_i,0,-da_i\right).
\end{equation}
This is the gauge transformation colloquially referred to as the ``combined transformation'' which causes the $U(1)$ gauge field $a$ to get ``eaten'' by the $2$-form $b$ (cf. \cite{Heidenreich:2020pkc,Aloni:2024jpb, Craig:2024dnl}).
Thus, the principal $U(1)$ bundles with connection become gauge equivalent to their images under the action of $U(1)$-bundles with flat connection in the gauged theory with fields (\ref{eq:bu1gaugedfields}). In other words, the original $U(1)$ bundles with connection are now equivariant under the $U(1)$ $1$-form symmetry.

\paragraph{Non-faithfully acting $\R$ $1$-form symmetry.} 

To finish this section, let us explore a simple example of a non-faithfully acting $1$-form symmetry, namely an $\R$ $1$-form symmetry acting by $U(1)$ left translations via the exponential map
\begin{equation}\label{eq:rsurjectiveu1}
    \Z\hookrightarrow \R \xrightarrow{\exp} U(1),
\end{equation}
indicating there is a trivially-acting $\Z$ $1$-form symmetry.

In more detail, we consider principal $\R$-bundles with flat connection $(x_{ij},A_i)$ acting on principal $U(1)$ bundles with connection $\left(g_{ij},a_i\right)$ as
\begin{equation}
    \left(x_{ij},A_i\right) \cdot \left(g_{ij},a_i\right) = \left(\exp\left(x_{ij} \right)g_{ij},  A_i + a_i\right),
\end{equation}
where in particular if $m_{ij} \in \Z$ then $\exp\left(m_{ij}\right)=1$.

To obtain the fields of the $\R$-gauged theory, we exhibit a sequence
\begin{equation}\label{eq:u1rgauged}
    \textbf{B}\R\to \textbf{B}U(1) \to {\cal X} \to \textbf{B}^2\R.
\end{equation}
This is accomplished by taking ${\cal X}:= \textbf{B}{\cal G}$ where $\cal G$ is the Lie $2$-group represented by the crossed module of Lie groups
\begin{equation}
    {\cal G} = \left(\R \xrightarrow{\exp} U(1) \right),
\end{equation}
with trivial action of $U(1)$ on $\R$, which fits in the exact sequence
\begin{equation}
    1 \to \Z \to \R \xrightarrow{\exp} U(1) \to 1 \to 1.
\end{equation}
Note that, topologically $\vert\textbf{B}{\cal G}\vert\cong B^2\Z$, and the sequence (\ref{eq:u1rgauged}) simply comes from continuing the delooping of the sequence (\ref{eq:rsurjectiveu1}) as
\begin{equation}
    \cdots \to \textbf{B}\R\to \textbf{B}U(1) \to \textbf{B}^2\Z \to \textbf{B}^2\R\to \cdots.
\end{equation}

We want, however, the explicit connection information. In Čech data, a principal $\cal G$ bundle with (unadjusted) connection is defined by $U(1)$-valued cochains $g_{ij}$, $\R$-valued cochains $x_{ijk}$, differential $1$-forms $a_i,c_{ij}$, and differential $2$-forms $b_i$, all together satisfying
\begin{equation}
    g_{ik} = g_{ij}g_{jk}\exp\left(x_{ijk} \right),
\end{equation}
\begin{equation}
     x_{ikl} + x_{ijk} =x_{ijl} + x_{jkl},
\end{equation}
\begin{equation}
    c_{ik} = c_{ij}+c_{jk} +d x_{ijk},
\end{equation}
\begin{equation}
    a_j = a_i + d\log g_{ij} - c_{ij},
\end{equation}
\begin{equation}
    b_j = b_i + dc_{ij},
\end{equation}
along with the flat fake-curvature condition
\begin{equation}
    0 = da_{i} + b_i,
\end{equation}
and zero curvature $3$-form as in the case of ${\rm INN}(U(1))$ connections.

A gauge transformation 
\begin{equation}
    \left(y_{ij},h_i, \lambda_i\right): \left(g_{ij}, x_{ijk}, a_i, c_{ij}, b_{i}\right) \xrightarrow{\sim} \left(g'_{ij}, x'_{ijk}, a'_i, c'_{ij}, b'_{i}\right)
\end{equation}
is given by $\R$-valued cochains $y_{ij}$, $U(1)$-valued cochains $h_{i}$, and differential $1$-forms $\lambda_i$ satisfying
\begin{equation}
    x'_{ijk} + y_{ij}+y_{jk} = x_{ijk}  + y_{ik},
\end{equation}
\begin{equation}
    g'_{ij} = h_i^{-1} g_{ij} h_j \exp\left(y_{ij}\right),
\end{equation}
\begin{equation}
    c'_{ij} = c_{ij} + \lambda_j - \lambda_i +dy_{ij},
\end{equation}
\begin{equation}
    a'_i = a_i + d\log h_i -  \lambda_i
\end{equation}
\begin{equation}
    b'_i = b_i + d\lambda_i.
\end{equation}

We can regard principal $U(1)$ bundles with connection $(g_{ij},a_i)$ as principal $\cal G$ bundles with connection as
\begin{equation}
    \left(g_{ij}, a_i\right) \mapsto \left(g_{ij},0,a_i,0,-da_i\right).
\end{equation}
The action of a principal $\R$-bundle with flat connection $(y_{ij},\lambda_i)$ is
\begin{equation}
    \left(y_{ij},\lambda_i\right) \cdot \left(g_{ij}, a_i\right) = \left(g_{ij}\exp(y_{ij}),a_i + \lambda_i\right).
\end{equation}
But as principal $\cal G$ bundles with connection, this is a gauge transformation
\begin{equation}
    \left(y_{ij},0,-\lambda_i\right): \left(g_{ij},0,a_i,0,-da_i\right) \xrightarrow{\sim} \left(g_{ij}\exp\left(y_{ij}\right),0, a_i+\lambda_i,0,-da_i\right).
\end{equation}
Thus, as before, the principal $U(1)$ bundles with connection become gauge equivalent to their images under the action of $\R$-bundles with flat connection in the gauged theory whose fields are
\begin{equation}
    \textbf{Fields}_{\textbf{B}\R} = \textbf{H}\left(\Sigma,\textbf{B}_{\nabla}{\cal G}\right) = \textbf{H}\left(\Sigma , \textbf{B}_{\nabla}\left(\R\xrightarrow{\exp} U(1) \right)\right),
\end{equation}
which at the level of topology are simply measured by integral cohomology classes ${\rm Map}(\Sigma, B{\cal G}) = {\rm Map}(\Sigma, B^2\Z) = H^2(\Sigma, \Z)$, and thus is not distinguishable from the original theory at this coarse level of analysis.  Ultimately, this is rooted in the fact that the higher group whose action was gauged, $\textbf{B}\R$, is topologically trivial $\vert \textbf{B}\R\vert \cong *$.

\section{B fields}\label{sec:bfields}

We now pass to our first example of a higher group, that of bundle gerbes, which appear in several different contexts in Physics (see e.g. \cite{Bunk:2021quu}). For example, in the absence of sources, B fields are globally modeled as connections on bundle $1$-gerbes (principal $U(1)$ $2$-bundles) \cite{gawkedzki2021topological}. These are described by smooth maps
\begin{equation}\label{eq:field2stack}
    \textbf{Fields}:= \textbf{H}\left(\Sigma,\textbf{B}^2_{\nabla}U(1)\right).
\end{equation}

An object therein is a tuple $\left(h_{ijk}, b_i,a_{ij} \right)$ such that
\begin{eqnarray}
    b_j &=& b_i + da_{ij},
    \\
    a_{ik} &=& a_{ij} + a_{jk} + d\log h_{ijk},
    \\
     h_{ikl} h_{ijk} &=& h_{ijl} h_{jkl}.
\end{eqnarray}

\paragraph{Symmetries.} The higher-form global symmetries for principal $\textbf{B}U(1)$ bundles are parameterized by flat morphisms
\begin{equation}
\textbf{Sym}\left(\Sigma,\textbf{B}^2U(1)\right)=\textbf{H}_{\flat} \left(\Sigma,{\rm Aut}\left(\textbf{B}^2U(1)\right)\right)
\end{equation}
for $\textbf{B}^2U(1)$ the second delooping of the smooth group $U(1)$.

Let us state the relevant objects, following the results of Section~\ref{sec:2-action}. The $2$-groupoid ${\rm Aut}\left(\textbf{B}^2 U(1)\right)$ is, succinctly,
\begin{equation}
  \label{UnderlyingGroupoidOfAutomorphism2GroupBU1}
 {\rm Aut}(\mathbf{B}^2U(1))
  =
  \left\{
  \begin{tikzcd}
    \phi
    \ar[
      rr,
      bend left=40,
      "{1}"{description, name=s}
    ]
    \ar[
      rr,
      bend right=40,
      "{1}"{description, name=t}
    ]
    \ar[
      from=s, to=t,
      Rightarrow,
      "{ m_{\bullet} }"
    ]
    &&
    \phi
  \end{tikzcd}
  \middle\vert\;
  \begin{aligned}
    \phi & \ \in {\rm Aut}\left(U(1)\right)=\Z_2
    \\
    m_{\bullet} &  \in U(1)
  \end{aligned}
  \right\}
  \mathrlap{\,.}
\end{equation}

An object $\Phi_{ijk}, \Phi'_{ijk} \in {\rm ob}\left( \textbf{Sym}\left(\Sigma,\textbf{B}^2U(1)\right)\right)$ in the symmetries is a pair
\begin{equation}
    \Phi_{ijk} = \left(z,h_{ijk} \right)
\end{equation}
where $z\in {\rm Aut}\left( U(1)\right)$ is globally-defined, and $h_{ijk}\in U(1)$ satisfies
\begin{equation}
     h_{ikl} h_{ijk} = h_{ijl} h_{jkl}.
\end{equation}
The center higher-form symmetry corresponds to the choice $z={\rm id}_{U(1)}$, describing a flat bundle gerbe.

In particular, the sequence (\ref{eq:autsequence}) becomes
\begin{equation}
    1\to \textbf{B}{\cal Z}(\textbf{B}U(1))\to \aut{\textbf{B}U(1)} \to \text{OUT}(\textbf{B}U(1)) \to 1.
\end{equation}
\begin{equation}
    \textbf{B}{\cal Z}(\textbf{B}U(1)) = \left(U(1)\xrightarrow{0}0\xrightarrow{0}0\right) = \textbf{B}^2U(1),
    \end{equation}
\begin{equation}\label{eq:2outbu1}
    \text{OUT}\left(\textbf{B}U(1)\right) = \left(0\xrightarrow{0}\Z_2\right) = \aut{U(1)}
\end{equation}
(ignoring the topmost information of $\text{OUT}(\textbf{B}U(1))$, which is trivial).

The composition and action on bundle gerbes, at the level of objects, is
\begin{equation}
    \otimes \left(\left(z, h_{ijk}\right), \left(z', h'_{ijk}\right)\right) = \left(zz',h_{ijk}z\left(h'_{ijk}\right)\right),
\end{equation}
\begin{equation}
    \cdot \left(\left(z, h_{ijk}\right),  h'_{ijk}\right) = h_{ijk}z\left(h'_{ijk}\right).
\end{equation}

As before, to incorporate connections, we treat the $\Z_2$ $0$-form and center symmetries separately, for simplicity.

The action of $z\in \Z_2$ in $\Phi_{ijk}$ is the charge conjugation
\begin{equation}\label{z2action-bu1}
    z\cdot \left(h_{ijk}',b_i,a_{ij}\right) = \left(z\left(h_{ijk}'\right),z^*b_i,z^*a_{ij}\right),
\end{equation}
which is functorial.

Similarly, the action of $\textbf{B}{\cal Z}(\textbf{B}U(1))$, as principal $\textbf{B}U(1)$ bundles with flat connections, on bundle gerbes extends to an action on bundle gerbes with connection as
\begin{equation}
    \left(h_{ijk},B_i,A_{ij}\right)\cdot \left(h_{ijk}',b_i,a_{ij}\right) = \left(h_{ijk}h'_{ijk},B_i+b_i, A_{ij}+a_{ij}\right).
\end{equation}

\paragraph{Action on Wilson surfaces and holonomies.} The charged ``operators'' under the $U(1)$ $2$-form symmetry are direct generalization of the Wilson loops (\ref{eq:wilsonloopaction}). These operators are called \textit{Wilson surfaces} and similarly come from \textit{holonomies}, more precisely \textit{higher} holonomies \cite{Schreiber:2008kcv}. We denote these as
\begin{equation}
    W_n(S,b)= {\rm exp}\left(in \oint_S b \right),
\end{equation}
\begin{equation}
    B\cdot W_n(S,b)= W_n(S) \, {\rm exp}\left(in  \oint_S B \right),
\end{equation}
where, as before, if $S$ can be deformed to another closed 2d surface $S'$, then they bound a three-dimensional volume $V$, and the flatness of $B$ implies that
\begin{equation}
    {\rm exp}\left(in  \oint_S B \right) = {\rm exp}\left(in  \oint_{S'} B \right),
\end{equation}
so that this is ``topological''. This is a straightforward analog of the $U(1)$ gauge theory case, including that the charged objects are functions on equivalence classes of the objects of the field stack.

\paragraph{Gauging.} Let us now, as before, discuss gauging. It is straightforward to see that gauging the $U(1)$ $2$-form symmetry simply corresponds to the once-categorified gauging of the $U(1)$ $1$-form symmetry of $U(1)$ gauge theory described in Section~\ref{ssec:u1}. In other words, the sequence
\begin{equation}
    \textbf{B}^2U(1) \to \textbf{B}^2U(1) \to \textbf{B}\left({\rm INN}\left(\textbf{B}U(1)\right)\right)\to \textbf{B}^3 U(1)
\end{equation}
that exhibits $\textbf{B}^2U(1)$ as the total space of a $\textbf{B}^2U(1)$-principal bundle indicates that the fields of the $\textbf{B}^2U(1)$-gauged theory are 
\begin{equation}
    \textbf{Fields}_{\textbf{B}^2U(1)}=\textbf{H}\left(\Sigma,\textbf{B}_{\nabla}({\rm INN}(\textbf{B}U(1))\right),
\end{equation}
the principal ${\rm INN}(\textbf{B}U(1))$ bundles with connection over $\Sigma$. The previous remarks concerning the triviality of the theory at the level of topology, as well as the B fields getting ``eaten'' by the gauge transformations of a $3$-form field $C$ similarly apply here.

Now, we focus on the $\Z_2$ in the outer automorphisms $2$-group ${\rm OUT}(\textbf{B}U(1))$ (\ref{eq:2outbu1}). Its action is (\ref{z2action-bu1}). We have the sequence
\begin{equation}
     \textbf{B}^2U(1) \to \left(\textbf{B}^2U(1)\right)\dslash \Z_2 \to \textbf{B}\Z_2,
\end{equation}
where $\left(\textbf{B}^2U(1)\right)\dslash \Z_2\cong \textbf{B}G$ is the moduli stack of principal $G$ bundles, where $G$ is the \textit{split 2-group} arising from the \textit{2-group semidirect product} \cite{Elg14}
\begin{equation}
    1 \to \textbf{B}U(1) \to \textbf{B}U(1)\rtimes\Z_2 =G \to \Z_2 \to 1.
\end{equation}
We can present $G$ as a crossed module of Lie groups
\begin{equation}\label{eq:aut-u1}
    G = \left( U(1) \xrightarrow{0} \Z_2 \right),
\end{equation}
equipped with the nontrivial action $\Z_2 \to {\rm Out}(U(1))$. This is nothing but the automorphism $2$-group $\aut{U(1)}$ we encountered before (\ref{aut-u1}). This means that gauging the outer $\Z_2$ $0$-form symmetry changes the gauge fields from principal $\textbf{B}U(1)$ bundles with connection, to principal $\aut{U(1)}$ bundles with connection. Therefore, the fields of the $\Z_2$-gauge theory are
\begin{equation}
    \textbf{Fields}_{\Z_2} = \textbf{H}(\Sigma,\textbf{B}_{\nabla}\aut{\textbf{B}U(1)}).
\end{equation}
These are truly \textit{nonabelian} bundles known as \textit{$U(1)$-gerbes}, encoding a bundle gerbe along with the choice of a \textit{band}, in this case described by $\Z_2$ \cite[Section~1.2.6]{Schreiber:2013pra} cf. \cite{nikolaus2015principal}.

\begin{remark}There is yet another interpretation for these $\Z_2$-equivariant bundle gerbes. These are also called \textit{Jandl gerbes} \cite{Schreiber:2005mi,Gawedzki:2008um}, which were originally defined in the context of WZW models on orientifolds. They have also been called Real bundle gerbes \cite{Hekmati:2016ikh}, due to their equivariance under the $\Z_2$-involution.
\end{remark}

To understand the effect of gauging on the original fields, it is convenient to replace the crossed module (\ref{eq:aut-u1}) with the \textit{equivalent} crossed module
\begin{equation}
    \widetilde{G}= \left( O(2) \xrightarrow{\pi\times 1} \Z_2\times \Z_2\right),
\end{equation}
where $\pi: O(2) \to \Z_2$ is defined by the extension
\begin{equation}
    1\to U(1) \hookrightarrow O(2) \xrightarrow{\pi}\Z_2 \to 1,
\end{equation}
and the action of $\Z_2\times\Z_2$ is the trivial one for the left factor, and conjugation by $\begin{pmatrix}
    1 & 0 \\ 0 & -1
\end{pmatrix}$ for the right factor. 
The crossed module $\widetilde{G}$ is equivalent to $G$ since it fits in the exact sequence
\begin{equation}
    1 \to U(1) \hookrightarrow O(2) \xrightarrow{\pi\times 1} \Z_2\times \Z_2 \to \Z_2 \to 1,
\end{equation}
with the same action of $\Z_2$ on $U(1)$ and trivial Postnikov extension class.

With the crossed module $\widetilde{G}$ at hand, we can now observe the weak quotient on the original gauge fields. For a B field with Čech data $(h_{ijk},b_i,a_{ij})$, these include as $\widetilde{G}$ principal bundles with connection as
\begin{equation}
    (h_{ijk},b_i,a_{ij}) \mapsto \left(1,\begin{pmatrix}
        \cos\alpha_{ijk} & \sin\alpha_{ijk} \\
        -\sin\alpha_{ijk} & \cos\alpha_{ijk}
    \end{pmatrix}, \begin{pmatrix}
        0 & b_i
        \\
        -b_i & 0
    \end{pmatrix}, \begin{pmatrix}
        0 & a_{ij}
        \\
        -a_{ij} & 0
    \end{pmatrix}\right),
\end{equation}
where the additional parameter $1$ to the left refers to the \textit{twisted} $(\Z_2\times\Z_2)$-bundle that is part of the data describing a principal $\widetilde{G}$-bundle.

\begin{remark}  
It is straightforward to generalize this to field theories whose fields are connections on bundle $(n-1)$-gerbes:
\begin{equation}
    \textbf{Fields}= \textbf{H}\left(\Sigma, \textbf{B}^n_{\nabla}U(1)\right).
\end{equation}
The smooth automorphism $(n+1)$-group of $\textbf{B}^nU(1)$ is readily computed as the truncation
\begin{equation}
    \tau_{\leq n}{\rm Aut}\left(\textbf{B}^nU(1)\right) = \aut{\textbf{B}^{n-1}U(1)    } = \left( U(1)^{(n)} \xrightarrow{0} 0 \to \cdots\to 0 \to \Z_2^{(0)}      \right),
\end{equation}
fitting in the sequence
\begin{equation}
    \textbf{B}{\cal Z}\left(\textbf{B}^{n-1}U(1)\right)\to \aut{\textbf{B}^{n-1}U(1)}\to \out{\textbf{B}^{n-1}U(1)},
\end{equation}
where
\begin{equation}
   \textbf{B}{\cal Z}\left(\textbf{B}^{n-1}U(1)\right) = \left(U(1)^{(n)}\to 0^{(n-1)}\to\cdots\to 0^{(0)} \right),
\end{equation}
\begin{equation}
    \out{\textbf{B}^{n-1}U(1)    } = \left(0^{(n)} \to \cdots\to 0 \to \Z_2^{(0)}      \right).
\end{equation}

The action of these symmetries is a straightforward generalization of the previous section. In particular, the gauging of the $\Z_2$ $0$-form symmetry gives rise to $\Z_2$-\textit{equivariant bundle $(n-1)$-gerbes}, or \textit{higher Jandl gerbes}, as explored in \cite{Fiorenza:2012mr}.
\end{remark}

\section{String $2$-groups}\label{sec:string2groups}

In Section~\ref{sec:bfields}, we studied the higher-form symmetries that arise as automorphisms of smooth higher groups for the simplest higher smooth group that is not a Lie $1$-group, the first delooping of $U(1)$. In this section, we study higher-form symmetries of less trivial yet familiar higher groups, the so-called \textit{String} groups ${\rm String}(G_k)$, which are nontrivial examples of higher central extensions, as summarized in Appendix~\ref{sapp:stringextension}. Thus, we will be looking at field theories whose fields are smooth maps
\begin{equation}
    \textbf{Fields} = \textbf{H}\left(\Sigma, \textbf{B}_{\nabla}{\rm String}(G_k)\right).
\end{equation}

The observations of Section~\ref{sec:general} (cf. Eq.(\ref{eq:higheraut})), as well as the construction of Section~\ref{sec:2-action}, tell us that, amongst the higher-form symmetries of a field theory whose fields are principal ${\rm String}(G_k)$ bundles, there are those coming from the \textit{center} ${\cal Z}({\rm String}(G_k))$ of ${\rm String}(G_k)$, given the inclusion
\begin{equation}
    \textbf{B}{\cal Z}\left({\rm String}(G_k)\right) \hookrightarrow \aut{{\rm String}(G_k)}.
\end{equation}
These subsymmetries are parameterized by smooth flat maps $\Sigma\to \textbf{B}({\cal Z}\left({\rm String}(G_k))\right)$, \textit{flat} principal ${\cal Z}({\rm String}(G_k))$ bundles, and act on the fields, the principal $\String{G_k}$-bundles, by tensoring. 

In \cite{2022arXiv220201271W}, the Drinfeld center ${\cal Z}\left({\rm String}\left(G_k\right)\right)$ is computed in great generality, thereby computing the center higher-form symmetry for string $2$-group gauge theory. We take these results, summarized in Appendix~\ref{sapp:centerstring}, as a starting point for two cases, those of $G=SU(n)$ and $G=U(1)$.

In the more tractable case of $G=U(1)$, we apply the results of Section \ref{sec:2-action} to describe the tensoring action, and discuss some refinements to a connective symmetry, namely, an action on principal ${\rm String}\left(U(1)_k\right)$ bundles with adjusted connection. We explore in detail sufficient circumstances under which such a lift does take place.

\subsection{$G=SU(n)$}

In this section, we specialize to the String group $\String{G_k}$ for $G=SU(n)$ and $k=1$, so that the fields are
\begin{equation}\label{eq:stringsunfields}
    \textbf{Fields}= \textbf{H}\left(\Sigma, \textbf{B}_{\nabla}\String{SU(n)_1}\right).
\end{equation}
Here, by $\nabla$ we mean the adjusted connection \cite{Fiorenza:2012tb, Rist:2022hci}, but we do not aim to refine the bundle symmetries to connective symmetries.

\paragraph{Motivation.} We start by motivating considering fields described by connections on principal $\String{SU(n)_k}$ bundles. In \cite{Tanizaki:2019rbk}, the authors describe a method for restricting to $SU(n)$ gauge fields $A$ whose instanton number, namely the class $[\tr(F(A)\wedge F(A))]\in H^4(\Sigma,\Z)$ satisfies
\begin{equation}\label{eq:instanton-restriction1}
    [\tr\left( F(A)\wedge F(A) \right)] = p[\omega] \in H^4(\Sigma,\Z),
\end{equation}
for $[\omega]\in H^4(\Sigma,\Z)$ some other integral cohomology class. This is accomplished by introducing a (locally-defined) differential $3$-form potential $c_3$ along with a Lagrange multiplier that enforces the differential equation
\begin{equation}\label{eq:instanton-restriction2}
     \tr\left( F(A)\wedge F(A) \right) = p \,dc.
\end{equation}
In particular, for $p=0$, this means restricting to $SU(n)$ gauge fields whose instanton number vanishes. 

We can reinterpret this proposal as follows. First, a global, \textit{topological} formulation of this is the commutative diagram
\begin{equation}
        \begin{tikzcd}
\Sigma \arrow[rr, "P"] \arrow[dd] &  & BSU(n) \arrow[dd, "c_2"] \arrow[lldd, "\eta", Rightarrow] \\
                             &  &                                                                             \\
* \arrow[rr]                 &  & B^4\Z                                                             
\end{tikzcd},
\end{equation}
which says the principal $SU(n)$ bundle $P$ over $\Sigma$ has trivial second Chern class in $\Sigma$. The continuous map $c_2: BSU(n) \to B^4\Z$ is the generator of the cohomology group $H^4_{\rm sing}(BSU(n),\Z)\cong \Z$. This admits a smooth refinement \cite{Fiorenza:2012tb} to maps of moduli stacks of connections
\begin{equation}
            \begin{tikzcd}
\Sigma \arrow[rr, "a"] \arrow[dd] &  & \textbf{B}_{\nabla}SU(n) \arrow[dd, "\textbf{cs}"] \arrow[lldd, "\eta", Rightarrow] \\
                             &  &                                                                             \\
\textbf{B}_{\rm triv,\nabla}^3U(1) \arrow[rr]                 &  & \textbf{B}^3_{\nabla}U(1)                                                             
\end{tikzcd},\label{diag:cs-comm-diag-su(n)}
\end{equation}
where $\textbf{cs}: \textbf{B}_{\nabla}SU(n) \to \textbf{B}^3_{\nabla}U(1)$, the Chern-Simons 2-gerbe with connection, generates the cohomology group $H^3(SU(n),U(1))\cong \Z$ (cf. Eq.~(\ref{eq:smooth3cohomology})), and where $\textbf{B}_{\rm triv,\nabla}^3U(1)$ is the moduli space of bundle $2$-gerbes with connection with trivial group cocycle information\footnote{Equivalently, the Deligne moduli stack $\Omega^{1\leq \bullet\leq 3}$ which assigns differential forms at degrees $1\leq k\leq 3$.}. 

The central identity described by the diagram (\ref{diag:cs-comm-diag-su(n)}) is
\begin{equation}
    h = \textbf{cs}(a) + d\eta
\end{equation}
where $\eta, h$ are \textit{globally-defined} differential $2$ and $3$-form, respectively, even though the Chern-Simons $3$-form
\begin{equation}
    \textbf{cs}(a): = \tr\left(a\wedge da + \tfrac{2}{3}a\wedge a\wedge a \right)
\end{equation}
is usually only locally-defined. The differential of this is
\begin{equation}
    dh = d \textbf{cs}(a) = \tr\left(F(a)\wedge F(a) \right).
\end{equation}
Since the Chern-Weil representative of the second Chern class of the $SU(n)$ bundle is exact, the second Chern class (instanton number) is trivial.

The smooth maps of diagram (\ref{diag:cs-comm-diag-su(n)}) factor through the pullback of the cospan by its universal property, so that one obtains a principal $\String{SU(n)_1}$-bundle with connection
\begin{equation}
\begin{tikzcd}
\Sigma \arrow[rrrd, "a", bend left] \arrow[rddd, bend right] \arrow[rd, "x", dashed] &                                                                 &  &                                                                                     \\
                                                                                & \textbf{B}_{\nabla}\text{String}(SU(n)_1) \arrow[rr] \arrow[dd] &  & \textbf{B}_{\nabla}SU(n) \arrow[dd, "\textbf{cs}"] \arrow[lldd, "\eta", Rightarrow] \\
                                                                                &                                                                 &  &                                                                                     \\
                                                                                & {\textbf{B}_{\rm triv,\nabla}^3U(1)} \arrow[rr, hook]           &  & \textbf{B}^3_{\nabla}U(1)                                                          
\end{tikzcd}.
\end{equation}

In other words, restricting to $SU(n)$ bundles with connection with trivial instanton number means working with $SU(n)$ bundles with connection coming from principal $\String{SU(n)_1}$ bundles with connection. See \cite{Perez-Lona:2025add} for other instances of instanton restrictions and principal bundles of \textit{covering} groups.

\paragraph{Braided center.} Having motivated looking at fields which are connections of principal $\String{SU(n)_1}$ bundles as a way of restricting to $SU(n)$ connections with trivial instanton number, we now state its center higher-form symmetry. 

According to the result cited in Appendix~\ref{sapp:centerstring} from \cite{2022arXiv220201271W}, the center
\begin{equation}
    {\cal Z}(SU(n),1):={\cal Z}(\String{SU(n)_1})
\end{equation}
satisfies
\begin{eqnarray}
    \pi_0\left({\cal Z}(SU(n),1)\right) &=& \Z_n,
    \\
    \pi_1\left({\cal Z}(SU(n),1)\right) &=& U(1),
\end{eqnarray}
whose extension class $H^3(\Z_n,U(1))$ is the pullback of the extension class along the inclusion $\Z_n=Z(SU(n))\hookrightarrow SU(n)$, and whose braiding (Eqs. (\ref{eq:braiding1})-(\ref{eq:braiding2})) in this case is given by the form \cite{2022arXiv220201271W}
\begin{equation}\label{eq:sun-braiding}
    I(X,Y) = \tr(XY).
\end{equation}

That is to say, the center higher-form symmetry of the $\sigma$-model with fields principal $\String{SU(n)_1}$ bundles corresponds to a higher central extension ${\cal Z}(SU(n),1)$
\begin{equation}\label{eq:stringsuncentersequence}
  1\to   \textbf{B}U(1) \to {\cal Z}(SU(n),1)\to \Z_n\to 1,
\end{equation}
which can be understood as a $\Z_n$ group with a nontrivial associator given by the extension $3$-class valued in $U(1)$. This center higher-form symmetry acts by tensoring flat principal ${\cal Z}(SU(n),1)$ bundles with principal $\String{SU(n)_1}$ bundles, as the symmetry is parameterized by flat maps $\Sigma \to {\mathbf{B}{\cal Z}(SU(n),1)}$ as
\begin{equation}\label{eq:suncentersyminclusion}
\begin{tikzcd}
                           & \Sigma \arrow[d]                                  &                \\
\mathbf{B}^2U(1) \arrow[r] & {\mathbf{B}{\cal Z}(SU(n),1)} \arrow[r] \arrow[d] & \mathbf{B}\Z_n \\
                           & \text{AUT}(\String{SU(n)_1})                      &               
\end{tikzcd}.
\end{equation}

Notice that, because the group being extended is finite, the only connective data is the one proper to a $U(1)$ $2$-form symmetry. However, this does not seem to be sufficient to ensure that the center higher-form symmetry lifts to a connective symmetry. An in-depth analysis of this issue is beyond the scope of this paper. However, the case $G=U(1)$ discussed in Section~\ref{sssec:stringu1} is sufficiently tractable to allow for a concrete analysis in terms of Čech cochains and connections, exemplifying this problem of lifting to connective symmetries. 

\paragraph{Gaugings.} It suffices to mention the two possible gaugings proper to the center higher-form symmetry. First one can gauge the $U(1)$ $2$-form symmetry subgroup, coming from the inclusion of stacks
\begin{equation}
    \textbf{B}^2U(1) \hookrightarrow \textbf{B}{\cal Z}(SU(n),1),
\end{equation}
which just acts on the fields (\ref{eq:stringsunfields}) by tensoring with a bundle $2$-gerbe with a flat connection. The effect of gauging can be seen from the sequence
\begin{equation}
    \textbf{B}^2U(1) \to \textbf{B}\String{SU(n)_1)}\to \textbf{B}SU(n) \to \textbf{B}^3U(1),
\end{equation}
exhibiting a quotient stack
\begin{equation}
    \textbf{B}SU(n) \cong \left(\textbf{B}\String{SU(n)_1)}\right)\dslash \textbf{B}^2U(1).
\end{equation}
As expected, gauging the $U(1)$ $2$-form symmetry gives fields
\begin{equation}
    \textbf{Fields}_{\textbf{B}^2U(1)}= \textbf{H}\left(\Sigma, \textbf{B}_{\nabla}SU(n)\right).
\end{equation}
In other words, we recover the principal $SU(n)$ bundles with connection with \textit{no instanton number restriction}. Note that, here, we are talking about connections, not because we have shown that the symmetry is connective but because we already know there exists a sequence \cite{Fiorenza:2012tb}
\begin{equation}
    \textbf{B}^2U(1) \to \textbf{B}_{\nabla}\String{SU(n)_1}\to \textbf{B}_{\nabla}SU(n) \xrightarrow{\textbf{c}_2}\textbf{B}^3U(1),
\end{equation}
that exhibits $\textbf{B}_{\nabla}\String{SU(n)_1}$ as the total space of a principal $\textbf{B}^2U(1)$ bundle over $\textbf{B}_{\nabla}SU(n)$.

The other interesting gauging proper to the center higher-form symmetry is ${\cal Z}(SU(n),1)$ itself, which is simply
\begin{equation}
    \textbf{B}{\cal Z}(SU(n),1)\to \textbf{B}\String{SU(n),1} \to \textbf{B}PSU(n),
\end{equation}
meaning the fields of the gauged theory, \textit{at the level of smooth principal bundles}, are
\begin{equation}
    \textbf{Fields}_{ \textbf{B}{\cal Z}(SU(n),1)} = \textbf{H}\left(\Sigma, \textbf{B}PSU(n)\right),
\end{equation}
principal $PSU(n)=SU(n)/Z(SU(n))$ bundles. This can be understood as simultaneously gauging the $\Z_n$ $1$-form symmetry coming from $SU(n)$ along with the $U(1)$ $2$-form symmetry. We thus eliminate the nontrivial $2$-group structure, as above, and thus recovering nonzero second Chern classes, while also gauging the center $1$-form symmetry.

In general, the smooth $2$-group sequence (\ref{eq:stringsuncentersequence}) does not split, so there is no sensible way to regard $Z(SU(n))=\Z_n$ as a global ($1$-form) symmetry on its own. Intuitively, this also means that the $\Z_n$ acts nontrivially on both the $SU(n)$ ``part'' of the bundle as well as on the higher $U(1)$ ``part''. This indicates that the $\Z_n$ $1$-form symmetry is a \textit{projective} or \textit{anomalous} symmetry (e.g. \cite[Section 4.2]{Gwilliam:2025vdu}, \cite{VanDyke:2023waj}).

\begin{remark}
    It is instructive to compare our findings with the symmetries computed in the $SU(n)$ instanton restriction construction of \cite[Section 3]{Tanizaki:2019rbk}. The method presented therein, as previously mentioned (cf. Eq's.~(\ref{eq:instanton-restriction1}), (\ref{eq:instanton-restriction2})), forces the $SU(n)$ instanton number to be a multiple of $p\in \Z$ by introducing a local $3$-form field, along with an appropriate Lagrange multiplier. The authors, in this setting, then argue that the resulting theory has a $\Z_n^{(1)}$ $1$-form symmetry, and a $\Z_{p}^{(3)}$ $3$-form symmetry, which together form a nontrivial \textit{$4$-group}. Specializing to $p=0$, this seems to suggest that there should be a $(\Z/(0\Z)=\Z)$ $3$-form symmetry. By contrast, we found that the center higher-form symmetry of $\String{SU(n)_1}$ involves a $U(1)$ $2$-form symmetry. The discrepancy is due to the fact that, topologically, $B^2U(1)\cong B^3\Z$, so that these two topological higher groups are not distinguishable. However, this equivalence does not lift to an equivalence of smooth higher stacks, and so the $\Z$ $3$-form symmetry is actually smoothly refined to a $U(1)$ $2$-form symmetry. Likewise, the $4$-group found therein is refined to a \textit{smooth} $3$-group, in this case as the delooping of a \textit{smooth braided} $2$-group (cf. Eq's.~(\ref{eq:sun-braiding})-(\ref{eq:suncentersyminclusion})). This highlights the importance of performing the analysis at the level of smooth higher stacks, especially if one wants to do a dynamical gauging. 

We highlight, however, that the symmetries we described are indeed smooth but at the level of principal bundles, not yet with connection. While the arguments given in \cite{Tanizaki:2019rbk} might suggest the symmetry can indeed be made connective, we leave a precise verification of this statement for future work.

\end{remark}

\subsection{$G=U(1)$}\label{sssec:stringu1}

We now specialize to $G=U(1)$. This case is particularly interesting because $U(1)$ is its own center, and is not simply-connected, so that we expect a smooth (non-discrete) center higher-form symmetry involving connective data, as we already encountered for $U(1)$ gauge theory and its action by flat connection shifts.

The diagram defining the moduli stack $\textbf{B}_{\nabla}\String{U(1)_k}$ is
\begin{equation}
    \begin{tikzcd}
\Sigma \arrow[rrrd, "a", bend left] \arrow[rddd, bend right] \arrow[rd, "x", dashed] &                                                                &  &                                                                                    \\
                                                                                & \textbf{B}_{\nabla}\text{String}(U(1)_k) \arrow[rr] \arrow[dd] &  & \textbf{B}_{\nabla}U(1) \arrow[dd, "\textbf{k}\cdot\textbf{cs}"] \arrow[lldd, "\eta", Rightarrow] \\
                                                                                &                                                                &  &                                                                                    \\
                                                                                & {\textbf{B}_{\rm triv,\nabla}^3U(1)} \arrow[rr, hook]          &  & \textbf{B}^3_{\nabla}U(1)                                                         
\end{tikzcd},\label{diag:u1stringpullback}
\end{equation}
which on the connections says that
\begin{equation}\label{eq:u1stringconnection}
    h = k\,\textbf{cs}(a)+d\eta,
\end{equation}
for $k\in \Z$, and $a:\Sigma\to \textbf{B}_{\nabla}U(1)$ the connection on a principal $U(1)$ bundle. The identity (\ref{eq:u1stringconnection}), together with the fact that $a$ is connection, implies the $\Z$-quantized flux equations
\begin{eqnarray}
    dF(a) &=& 0, \label{eq:u1stringflux1}
    \\
    dh &=& k\left(F(a)\wedge F(a)\right). \label{eq:u1stringflux2}
\end{eqnarray}

\paragraph{Motivation.} These flux equations have appeared in at least two different places. This kind of differential form/Bianchi identity data happens to underlie a different topological structure, namely cocycles in \textit{differential $2$-Cohomotopy} \cite{Fiorenza:2015gla}, \cite[Example 9.3]{Fiorenza:2020dcp}, a nonabelian cohomology theory (and thus outside the scope of higher principal connections) where the fields are maps ${\rm Map}(\Sigma,S^2)$ valued in the $2$-sphere. This has recently been used, on the condensed matter theory side, in \cite{Sati:2025iiw} in the context of exotic quantization describing Fractional Quantum Hall Anyons. On the supergravity side, these identities arise from the ``democratic formulation'' of minimal supergravity in five dimensions (see e.g. \cite{Gauntlett:2002nw}), which in many respects is the lower-dimensional analog of eleven-dimensional supergravity C fields. Briefly, the bosonic Lagrangian of such theory is 
\begin{equation}
{\cal L} = -\tfrac{1}{4} R\star 1 - \tfrac{1}{2}F(a)\wedge\star F(a) - \tfrac{2}{3\sqrt{3}} F(a)\wedge F(a)\wedge a,
\end{equation}
which gives the equation of motion
\begin{equation}
    d\star F(a) = -\tfrac{2}{\sqrt{3}}F(a)\wedge F(a).
\end{equation}
Defining $h:= -\tfrac{2}{\sqrt{3}} \star F(a)$, along with the fact that $a$ is a $U(1)$ connection, gives the flux identities (\ref{eq:u1stringflux1})-(\ref{eq:u1stringflux2}) derived above. This means that the \textit{pre-metric} (cf. \cite{Sati:2024klw}) bosonic gauge fields can be\footnote{This is not to say that this is the unique option. Quantization laws differing only by torsion factors still give rise to the same flux identities, see \cite{Sati:2024klw} for more on this.} described as a connection on a principal $\String{U(1)_1}$ bundle. Therefore, connections on principal $\String{U(1)_1}$ bundles are indeed physically relevant, reason for which we are interested in knowing their symmetries and gaugings.

\paragraph{Crossed module presentation.} As explained in \cite[Section 4]{2022arXiv220201271W}, the smooth $2$-group $\String{U(1)_k)}$ is an example of the \textit{categorical tori} defined in \cite{Ganter:2014zoa}, which are $\textbf{B}U(1)$ central extensions of the form
\begin{equation}\label{eq:categorical-torus-rank-r}
    1\to \textbf{B}U(1) \to \Gamma \to T^r \to 1
\end{equation}
for $T^r=U(1)^r$ as a Lie group. These are also known as the \textit{T-duality} $2$\textit{-groups} \cite{Fiorenza:2012ec,Fiorenza:2016oki,Nikolaus:2018qop}. Here, we are concerned with the rank $r=1$ categorical torus, or \textit{categorical circle}. We use the notation $\Lambda = {\rm Hom}(U(1),U(1))=\Z$ for that characters, and $\Pi= \Lambda^{\vee}=\Z$ for the cocharacters, which fit in the short exact sequence
\begin{equation}
    \Pi\hookrightarrow \R \xrightarrow{\exp} U(1),
\end{equation}
The (topological) extension classes $H^4_{\rm sing}(BU(1),\Z)$ in this case are identified with
\begin{equation}
    H^4_{\rm sing} (BU(1),\Z) = {\rm Sym}^2(\Lambda),
\end{equation}
the $\Z$-bilinear forms 
\begin{equation}
    I: \R\times \R \to \R
\end{equation}
such that image of the domain restriction to $\Pi$ is contained in $2\Z\subset \Z$. Such forms can be decomposed as $I= -(J+J^t)$ for $J$ a $\Z$-bilinear form on $\R$ that restricts to an integer-valued form on $\Pi$. Furthermore, they induce maps $\Pi \to \Lambda$ as $\tau(\pi):= I(\pi,-)$.

More in detail, the symmetric $\Z$-bilinear forms $I(-,-)$ are constructed as
\begin{equation}
    I(x,y)= -2n\cdot xy,
\end{equation}
where $\Lambda=\Z=\ker(\exp)$ with $\exp:r\mapsto \exp(2\pi i r)$, and $n\in \Z$ is the extension class. For this form, a corresponding $\Z$-bilinear form $J$ is
\begin{equation}
    J(x,y) = n\cdot xy.
\end{equation}
The symmetric $\Z$-bilinear function $I$ induces linear forms
\begin{equation}
    \tau (x) := I(x,-) = -\left(J(x,-)+J(-,x)\right).
\end{equation}

The categorical torus (\ref{eq:categorical-torus-rank-r}) of rank $r=1$ (categorical circle) and extension class $k$ (described by the $\Z$-bilinear form $J$) is described as a crossed module of Lie groups
\begin{equation}\label{eq:suu-xmod}
 \String{U(1)_k}=  \left(\delta:= \imath_{\Z}: \Z\times U(1) \to  \R\right),
\end{equation}
where $\imath_{\Z}:(m,z)\mapsto m\in \R$, equipped with the action
\begin{eqnarray}
    \alpha&:& \R\times \Z\times U(1) \to  \Z\times U(1),
    \\
    && (x,m,z) \mapsto \left(m,z\cdot \exp\left( J(m,x)\right)\right),
\end{eqnarray}
fitting in the short exact sequence
\begin{equation}
    1\to U(1) \hookrightarrow  \left(\Z \times U(1)\right) \to \R \to  U(1)\to 1.
\end{equation}

\paragraph{Fields.} Let us now move on to the fields, the principal $\String{U(1)_k}$ bundles with connection. Following the general definition of a $2$-group bundle with connection in \cite{Rist:2022hci} (summarized in Appendix~\ref{app:adjbundles}), for a Čech cover on $\Sigma$, we describe a principal $\suu$-bundle with adjusted connection as consisting of: cochains $x_{ij}\in \R$, $m_{ijk}\in\Z$, $z_{ijk}\in U(1)$, differential $1$-forms $c_{ij},a_{i}$, and differential $2$-forms $b_i$, satisfying the identities
\begin{equation}\label{eq:twistedsuucocycle1}
   x_{ik} = m_{ijk}+x_{ij}+x_{jk},
\end{equation}
\begin{equation}\label{eq:twistedsuucocycle2}
    \left(m_{ikl}+m_{ijk},z_{ikl} z_{ijk}\right) =\left(m_{ijl}+m_{jkl}, z_{ijl} z_{jkl} \exp\left(J(m_{jkl},x_{ij})\right)\right),
\end{equation}
\begin{equation}\label{eq:cforms-identity}
    c_{ik} = c_{jk}+c_{ij} -z_{ijk}\nabla_{a_i}z^{-1}_{ijk}=c_{jk} + c_{ij} + d\log z_{ijk} +J\left(m_{ijk},a_i\right),
\end{equation}
\begin{equation}\label{eq:a1form-identity}
   a_j = a_i + dx_{ij},
\end{equation}
\begin{equation}\label{eq:bforms-identity}
    b_j = b_i + d\,c_{ij} - \kappa\left(-x_{ij},da_i\right),
\end{equation}
where $\nabla_{a_i}:=d+a_i\triangleright$. The connection has curvature forms
\begin{eqnarray}
    f_i &=& da_i,
    \\
    h_i &=& d b_i - \kappa\left(a_i,f_i\right). \label{eq:3curvaturesuu}
\end{eqnarray}
with adjustments
\begin{eqnarray}\label{eq:adj1}
    \kappa:& \R\times {\rm Lie}(\R) &\to \text{Lie}\left(\Z\times U(1)\right),
    \\
    & (x,dx') &\mapsto nxdx',
\end{eqnarray}
\begin{eqnarray}
    \kappa:& \rm{Lie}(\R)\times \rm{Lie}(\R)&\to {\rm Lie}(\Z\times U(1)),
    \\
    & (dx,dx') &\mapsto ndxdx',\label{eq:adj2}
\end{eqnarray}
where the $\Z$-bilinear form $J$ that determines the extension class is associatied with the integer $n$ as
\begin{eqnarray}
    J:& \R \times \R &\to \R,
    \\
    & (x,x') &\mapsto nxx'.\label{eq:jform-n}
\end{eqnarray}

In particular, the identity (\ref{eq:twistedsuucocycle1}) is represented by the simplex
\begin{equation}
    \begin{tikzcd}
                                                           &  &    &  & \bullet                                                                         \\
                                                           &  &    &  &                                                                                 \\
                                                           &  & {} &  &                                                                                 \\
                                                           &  &    &  &                                                                                 \\
\bullet \arrow[rrrr, "x_{ij}"'] \arrow[rrrruuuu, "x_{ik}"] &  &    &  & \bullet \arrow[uuuu, "x_{jk}"'] \arrow[lluu, "{(m_{ijk},z_{ijk})}", Rightarrow]
\end{tikzcd},\label{eq:2cocyclesimplex}
\end{equation}
\begin{equation*}
      x_{ik} = m_{ijk}+x_{ij}+x_{jk},
\end{equation*}
whereas the identity (\ref{eq:twistedsuucocycle2}) is the equality of simplices
\begin{equation}
    \begin{tikzcd}
\bullet \arrow[rrrr, "x_{jk}"] \arrow[rrdd, "{(m_{ijk},z_{ijk})}" description, Rightarrow]    &  &    &  & \bullet \arrow[dddd, "x_{kl}"] \arrow[lldddd, "{(m_{ikl},z_{ikl})}" description, Rightarrow] &   & \bullet \arrow[rrrr, "x_{jk}"] \arrow[rrrrdddd, "x_{jl}" description] \arrow[rrdddd, "{(m_{ijl},z_{ijl})}" description, Rightarrow] &  &    &  & \bullet \arrow[dddd, "x_{kl}"] \arrow[lldd, "{(m_{jkl},z_{jkl})}" description, Rightarrow] \\
                                                                                              &  &    &  &                                                                                              &   &                                                                                                                                     &  &    &  &                                                                                            \\
                                                                                              &  & {} &  &                                                                                              & = &                                                                                                                                     &  & {} &  &                                                                                            \\
                                                                                              &  &    &  &                                                                                              &   &                                                                                                                                     &  &    &  &                                                                                            \\
\bullet \arrow[uuuu, "x_{ij}"] \arrow[rrrr, "x_{il}"'] \arrow[rrrruuuu, "x_{ik}" description] &  & {} &  & \bullet                                                                                      &   & \bullet \arrow[rrrr, "x_{il}"'] \arrow[uuuu, "x_{ij}"]                                                                              &  & {} &  & \bullet                                                                                   
\end{tikzcd}\label{eq:3cocyclesimplex}
\end{equation}

\begin{equation*}
    (m_{ikl}+m_{ijk},z_{ikl} z_{ijk}) =\left(m_{ijl}+m_{jkl}, z_{ijl} z_{jkl} \exp\left(J\left(m_{jkl},x_{ij}\right)\right)\right),
\end{equation*}
where $x_{ij}\cdot z_{jkl}=z_{jkl} \exp\left(J\left(m_{jkl},x_{ij}\right)\right)$ is the action of $x_{ij}$ on $z_{jkl}$.

\paragraph{Braided center.}  The center higher-form symmetry is described by flat principal ${\cal Z}\left({\rm String}\left({U(1)_k}\right)\right)$  bundles. The Drinfeld center ${\cal Z}\left({\rm String}\left({U(1)_k}\right)\right)$, as computed in \cite{2022arXiv220201271W}), is described by the crossed module of Lie groups
\begin{equation}\label{eq:centerk}
    {\cal Z}_k := \left(\delta: \Z\times U(1) \to \R\oplus \Z \right)
\end{equation}
\begin{equation}
    \delta: (m,z) \mapsto (m,\tau(m)),
\end{equation}
where the additional factor of $\Z$ correspond to the characters $\Z={\rm Hom}(U(1),U(1))$ of $U(1)$. The action is
\begin{eqnarray}
    \alpha&:& \left(\R\oplus\Z\right)\times \Z\times U(1) \to  \Z\times U(1),
    \\
    && (x,\lambda,m,z) \mapsto \left(m,z\cdot \exp\left( J(m,x)\right)\right),
\end{eqnarray}
which can be checked to satisfy the crossed module identities.
The \textit{braiding} on this $2$-group is given by 
\begin{equation}
   \beta_{(x,\lambda),(x',\lambda')} = \left(0,\lambda\left(x'\right)\exp\left(J\left(x',x\right)\right)\right):\left(x+x',\lambda+\lambda'\right)\xrightarrow{\sim} \left(x'+x,\lambda'+\lambda\right).
\end{equation}
\begin{remark}
Due to a difference of conventions, the pair $\left(x,\lambda\right)$ here and in \cite{2022arXiv220201271W} corresponds to the $1$-morphism
\begin{equation}
    \left(x,\lambda^{-1}\right) : {\rm id}_{\textbf{B}{\rm String}\left(U(1)_k\right)} \to {\rm id}_{\textbf{B}{\rm String}\left(U(1)_k\right)}
\end{equation}
in ${\rm Aut}\left(\textbf{B}{\rm String}\left(U(1)_k\right)\right)$ of Section~\ref{sec:2-action}. Given that the half-braidings $\lambda$ always enter the formulae with a negative exponent (cf. Eq's. (\ref{eq:composition-sym-sigma}), (\ref{eq:action-sym-sigma})), it is equivalent but simpler to use this definition of $\lambda$ in this particular case so as to not introduce the negative exponent.
\end{remark}

\begin{remark}\footnote{We thank Konrad Waldorf for raising this question.}
It might seem puzzling at first that even though both $U(1)$ and $\textbf{B}U(1)$ are abelian, the center symmetry (\ref{eq:centerk}) is not the string $2$-group itself, nor does it even contain the string $2$-group. This is because ${\rm String}\left(U(1)_k\right)$ is not braided in general, as we now show.
\end{remark}

\begin{propo}[Nontrivial categorical circles are \textit{not} braided.]
    Let $\String{U(1)_k}$ be a categorical circle, a higher central extension
    \begin{equation}
        1 \to \textbf{B}U(1) \to \String{U(1)_k} \to U(1) \to 1,
    \end{equation}
    with extension class $k\in H^3(\textbf{B}U(1), U(1))$.

    The extension class $k$ obstructs the existence of a braiding on $\String{U(1)_k}$.
\end{propo}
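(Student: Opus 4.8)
The plan is to argue homotopy-theoretically, isolating the obstruction as the failure of a cohomology suspension to hit a class determined by $k$. The starting point is that a braiding on a smooth $2$-group ${\cal G}$ is the same datum as an $E_2$-group structure on ${\cal G}$ in $\textbf{H}$, equivalently a group-object structure on $\textbf{B}{\cal G}$, equivalently the existence of a second delooping $\textbf{B}^2{\cal G}$ in $\textbf{H}$ --- the smooth refinement of the classical identification of grouplike braided monoidal groupoids with double loop spaces, and consistent with the way ``braided $n$-group'' and the group-object status of $\textbf{B}{\cal Z}({\cal H})$ are used in Section~\ref{sec:general}. So it suffices to show that, for $k\neq 0$, the object $\textbf{B}\String{U(1)_k}$ admits no group-object structure in $\textbf{H}$. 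Since the shape (geometric realization) functor $\vert-\vert\colon\textbf{H}\to\infty{\rm Grpd}$ preserves finite products, it carries group objects to group objects and $E_2$-groups to $E_2$-groups; hence it is enough to prove that the space $\vert\textbf{B}\String{U(1)_k}\vert$ fails to be a loop space when $k\neq 0$.

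To that end I would identify this space. Delooping the defining central extension $1\to\textbf{B}U(1)\to\String{U(1)_k}\to U(1)\to 1$ yields the fiber sequence
\begin{equation*}
\textbf{B}^2U(1)\ \longrightarrow\ \textbf{B}\String{U(1)_k}\ \longrightarrow\ \textbf{B}U(1)\ \xrightarrow{\ c\ }\ \textbf{B}^3U(1),
\end{equation*}
in which $c$ classifies the extension, an element of ${\rm Sym}^2(\Lambda)\cong H^4_{\rm sing}(BU(1),\Z)\cong\Z$ equal to $k$. Applying $\vert-\vert$ and using $\vert\textbf{B}^nU(1)\vert\simeq K(\Z,n+1)$, one obtains a fiber sequence $K(\Z,3)\to X_k\to K(\Z,2)$, where $X_k:=\vert\textbf{B}\String{U(1)_k}\vert$, classified by $k\,c_1^2\in H^4(K(\Z,2),\Z)\cong\Z$. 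Thus $X_k$ is a simply-connected $3$-type with $\pi_2(X_k)\cong\Z$, $\pi_3(X_k)\cong\Z$, whose single Postnikov invariant $q_k=k\,c_1^2$ is nonzero precisely when $k\neq 0$. (It is essential here to realize the delooping and not $\String{U(1)_k}$ itself: the latter realizes to $\vert U(1)\vert\times\vert\textbf{B}U(1)\vert\simeq S^1\times K(\Z,2)$ for \emph{every} $k$, since $H^4(S^1,\Z)=0$, and so never detects the extension class.)

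The final step is the loop-space obstruction. If $X_k\simeq\Omega Y$, then $Y$ is $2$-connected with $\pi_3(Y)\cong\Z$, $\pi_4(Y)\cong\Z$, hence a $4$-type whose single Postnikov invariant lies in $H^5(K(\Z,3),\Z)$, and the Postnikov invariant of $\Omega Y$ is its image under the cohomology suspension $\sigma\colon H^5(K(\Z,3),\Z)\to H^4(K(\Z,2),\Z)$. A standard Serre spectral sequence computation (the path-loop fibration over $K(\Z,3)$, using $H^*(K(\Z,2),\Z)=\Z[u]$ with $\vert u\vert=2$) gives $H^5(K(\Z,3),\Z)=0$, so $\sigma=0$ and $q_k=0$, forcing $k=0$. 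Contrapositively, for $k\neq 0$ the space $X_k$ is not a loop space, $\textbf{B}\String{U(1)_k}$ is not a group object in $\textbf{H}$, and $\String{U(1)_k}$ admits no braiding --- the class obstructing it being exactly $k$. I expect the main difficulty to be bookkeeping rather than conceptual: the load-bearing points are that ``braided'' transports correctly along $\vert-\vert$, that the realized extension class is genuinely the nonzero integral class $k\,c_1^2$ (which relies on the identification ${\rm Sym}^2(\Lambda)\cong H^4_{\rm sing}(BU(1),\Z)$ quoted above), and the standard vanishing $H^5(K(\Z,3),\Z)=0$; everything else is formal.
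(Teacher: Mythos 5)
Your proof is correct, but it proceeds by a genuinely different route than the paper's. The paper argues internally to the smooth $2$-group setting: a braiding would induce a braided inclusion $\String{U(1)_k}\to {\cal Z}\left(\String{U(1)_k}\right)$ and hence an \emph{injective} homomorphism $\pi_0\left(\String{U(1)_k}\right)=U(1)\to \pi_0\left({\cal Z}_k\right)$; using the explicit Drinfeld center ${\cal Z}_k$ from \cite{2022arXiv220201271W}, one has $\pi_0\left({\cal Z}_{k\neq 0}\right)\cong \R\times\Z_m$, and no such injection exists because $1\to\Z\to\R\to U(1)\to 1$ does not split (the paper also treats $k=0$, identifying braidings with characters of $U(1)$). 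You instead transport the question along geometric realization: a braiding is a second delooping, so $\vert\textbf{B}\String{U(1)_k}\vert$ would be a loop space, whereas its Postnikov invariant is $k\,c_1^2\in H^4(K(\Z,2),\Z)$, which cannot arise from looping a $2$-connected $4$-type since $H^5(K(\Z,3),\Z)=0$. Your route buys independence from the Drinfeld-center computation and localizes the obstruction purely homotopy-theoretically in the $k$-invariant of the delooping; the paper's route buys staying within the smooth setting it needs anyway (the center ${\cal Z}_k$ is precisely the object controlling the center higher-form symmetry) and exhibits the $k=0$ braidings explicitly.

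One step you assert rather than prove deserves care: passing from the defining fiber sequence in $\textbf{H}$ to the corresponding fiber sequence of realizations. Shape preserves finite products (hence group objects, $E_2$-structures, and their deloopings), but it does \emph{not} preserve homotopy fibers in general, so the identification of $\vert\textbf{B}\String{U(1)_k}\vert$ with the homotopy fiber of $k\,c_1^2:K(\Z,2)\to K(\Z,4)$ — equivalently, that the realized Postnikov invariant is the nonzero integral class $k\,c_1^2$ — requires an argument or a citation; it is exactly the statement that the smooth categorical circle realizes to the expected topological homotopy type, established for categorical tori in \cite{Ganter:2014zoa} (cf. \cite{Fiorenza:2012tb}). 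With that reference supplied, the remaining ingredients (the $k$-invariant of a loop space being a cohomology suspension, and the vanishing $H^5(K(\Z,3),\Z)=0$) are standard, and your argument goes through.
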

\begin{proof}
   First, note that for an arbitrary categorical group $\cal G$, its Drinfeld center $\cal Z(G)$  has as its objects pairs $(X,b)$ where $X$ is an object in $\cal G$ and $b$ is a half-braiding. If $\cal G$ can be equipped with a braiding, then one can construct an inclusion braided monoidal functor $\cal G \to Z(G)$ which, in particular, specifies an injective group homomorphism
    \begin{equation}
        \phi: \pi_0({\cal G}) \to \pi_0 ({\cal Z(G)})
    \end{equation}
    from the group of equivalence classes of objects in $\cal G$ to that associated with $\cal Z(G)$.

    For the case at hand, this group is (cf. Eq.~\ref{eq:pi0zk})
    \begin{equation}
        \pi_0\left( {\cal Z}_k \right) = \frac{\R \oplus \Z}{\Z},
    \end{equation}
where the pairs $(x,\lambda)\in \R \oplus \Z$ are identified as
\begin{equation}
    (x,\lambda ) \sim \left(x+m, \lambda + \tau(m)\right).
\end{equation}
If the extension class $k$ is nonzero, then 
\begin{equation}
    \pi_0\left( {\cal Z}_{k\neq 0} \right) \cong \R\times \Z_m
\end{equation}
for some $m\in \Z$ that depends on $k$.

Now, suppose $\String{U(1)}_{k\neq 0 }$ admits a braiding. This would imply an \textit{injective} group homomorphism
\begin{equation}
    \phi: \pi_0\left(\String{U(1)_{k\neq 0}} \right) = U(1) \to \R\times \Z_m = \pi_0\left( {\cal Z}_{k\neq 0}\right),
\end{equation}
which does not exist, since the sequence
\begin{equation}
    1 \to \Z \to \R \to U(1) \to 1
\end{equation}
does not split.

On the other hand, if the extension class is zero, then $\tau(m)=0$ for all $m\in \Z$, and the group of classes of objects is instead
\begin{equation}
    \pi_0\left({\cal Z}_{k=0}\right) = U(1)\times \Z,
\end{equation}
and an injective group homomorphism
\begin{equation}
    \phi_0: \pi_0\left(\String{U(1)_{k=0}} \right) = U(1) \to U(1)\times \Z = \pi_0\left( {\cal Z}_{k=0}\right)
\end{equation}
amounts to a choice of character in ${\rm Hom}(U(1),U(1))=\Z$, which itself determines the braiding \cite{quinn1998group,braunling2021quinn}.   
\end{proof}

This explains why the center higher-form symmetry, controlled by ${\cal Z}_k$, does not have a subsymmetry controlled by flat principal $\String{U(1)_k}$  bundles.

As before, the parameters of the center higher-form symmetry correspond to principal (higher) bundles with flat connections. It is easy to see that a principal ${\cal Z}_k$ bundle with connection is determined by the same data of a $\String{U(1)_k}$ (cf. Eq.'s (\ref{eq:twistedsuucocycle1})-(\ref{eq:3curvaturesuu})), along with additional cochains $\lambda_{ij}\in \Z$ satisfying the twisted cocycle identity
\begin{equation}\label{eq:lambda-identity}
    \lambda_{ik} = \lambda_{ij}+\lambda_{jk}+\tau(m_{ijk}),
\end{equation}
which if we denote such characters as
\begin{equation}
    \lambda_{ij}(x'):= \exp\left(n_{ij}x' \right)
\end{equation}
for $n_{ij}\in \Z$, becomes
\begin{equation}\label{eq:nlambda}
    n_{ik} = n_{ij}+n_{jk} -2m_{ijk}.
\end{equation}
We take the same adjustments (\ref{eq:adj1})-(\ref{eq:adj2}).

\paragraph{Braided center action.} The ${\cal Z}_k$ bundles act by left-tensoring with $\suu$ bundles. Drawing from Section~\ref{sec:2-action}, let us construct the composition and action.

Let $P,P'$ be a pair of principal ${\cal Z}_k$ bundles specified by the data $(x_{ij},\lambda_{ij},m_{ijk},z_{ijk})$,  $(x'_{ij},\lambda'_{ij},m'_{ijk},z'_{ijk})$. Their image under the composition $2$-functor in Proposition~\ref{propo:gray-monoid} is
\begin{equation}\label{eq:tensoredcochains}
   \otimes\left(P,P'\right)= (X_{ij},\Lambda_{ij},M_{ijk},Z_{ijk}),
\end{equation}
\begin{equation}\label{eq:zk-composed-cochains1}
X_{ij} = x_{ij}+x'_{ij}, \: \:     \Lambda_{ij} = \lambda_{ij}+\lambda'_{ij},  \: \: M_{ijk} = m_{ijk}+m'_{ijk},
\end{equation}
\begin{eqnarray}\label{eq:zk-composed-cochains2}
     Z_{ijk} &=& z_{ijk}z'_{ijk}\lambda_{jk}(x'_{ij})\exp\left(J(x'_{ij},x_{jk})+J(m'_{ijk},x_{ik}) \right).
\end{eqnarray}
Moreover, for $P= (x_{ij},\lambda_{ij},m_{ijk},z_{ijk})$ a principal ${\cal Z}_k$ bundle, and $P'= (x'_{ij},m'_{ijk}, z'_{ijk})$ a principal $\String{U(1)_k}$ bundle, the image under the action $2$-functor from Proposition~\ref{propo:fields-mod} is
\begin{equation}\label{eq:tensoredcochainsaction}
  \cdot\left( P, \left(x'_{ij},m'_{ijk},z'_{ijk} \right) \right) =   (X_{ij},M_{ijk},Z_{ijk}),
\end{equation}
\begin{equation}\label{eq:zk-action-cochains1}
 X_{ij}=x_{ij}+x'_{ij},   \: \: M_{ijk}=m_{ijk}+m'_{ijk},
\end{equation}
 \begin{equation}\label{eq:zk-action-cochains2}
    Z_{ijk} = z_{ijk}z'_{ijk}\lambda_{jk}(x'_{ij})\exp\left(J(x'_{ij},x_{jk})+J(m'_{ijk},x_{ik}) \right).
\end{equation}

\paragraph{Connective center higher-form symmetry.} Promoting this symmetry to a \textit{connective} symmetry is subtler, due to the fact that connections require a \textit{choice} of adjustment, which is structure beyond the moduli stacks of bundles.

First, we show that principal ${\cal Z}_k$ bundles with \textit{fake-flat} connections can be tensored. This is expected because flat bundles are equivalent to bundles with flat connections, but it is instructive to see how this works out at the level of differential forms. We will work this out at the level of objects, but this can be shown to extend functorially as in Section~\ref{sec:2-action}. 
\begin{propo}[Tensor product of fake-flat ${\cal Z}_k$ connections] \label{propo:z-tensor-fake-flat}
Let 
\begin{eqnarray}
    \left(P,\nabla\right) &=& \left(x_{ij},\lambda_{ij},m_{ijk},z_{ijk},a_i,b_i,c_{ij}\right),
    \\
    \left(P',\nabla'\right) &=&\left(x'_{ij},\lambda'_{ij},m'_{ijk},z'_{ijk},a'_i,b'_i,c'_{ij}\right)
\end{eqnarray}
be a pair of fake-flat principal ${\cal Z}_k$ bundles with (adjusted) connection. Then the tuple $(X_{ij},\Lambda_{ij},M_{ijk},Z_{ijk}, A_i, B_i, C_{ij})$ where $(X_{ij},\Lambda_{ij},M_{ijk},Z_{ijk})$ are the cochains (\ref{eq:zk-composed-cochains1}), (\ref{eq:zk-composed-cochains2}) , and $(A_i,B_i,C_{ij})$ are differential forms defined as
\begin{equation}\label{eq:tensored-a-form}
    A_i : = a_i + a'_i,
\end{equation}
\begin{equation}\label{eq:tensored-b-form}
    B_{i} := b_i + b'_i + n a'_i \wedge a_i,
\end{equation}
\begin{equation}\label{eq:tensored-c-form}
    C_{ij}:= c_{ij} + c'_{ij} + \left(  nx'_{ij}a_j - n_{ij}a'_i - nx_{ij}a'_i \right)
\end{equation}
determine a principal ${\cal Z}_k$ bundle with (adjusted) fake-flat connection, denoted $(P\otimes P',\nabla\otimes \nabla')$. This describes a tensor product on principal ${\cal Z}_k$ bundles with fake-flat (adjusted) connections (at the level of objects).

The corresponding curvature $3$-form is
\begin{equation}
    H_i = h_i + h'_i .
\end{equation}

The forms (\ref{eq:tensored-a-form})-(\ref{eq:tensored-c-form}) make reference to integer-valued cochains $n_{ij}\in \Z$ defined in (\ref{eq:nlambda}) as
\begin{equation}
    \lambda_{ij}(x') = \exp\left(n_{ij}x \right),
\end{equation}
and $n\in \Z$ coming from the definition (\ref{eq:jform-n}) of the biadditive form $J$ as $J(x,x')= nxx'$.

\end{propo}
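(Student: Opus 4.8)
The strategy is a direct verification: one checks that the tuple $(X_{ij},\Lambda_{ij},M_{ijk},Z_{ijk},A_i,B_i,C_{ij})$ satisfies all the defining identities of a principal ${\cal Z}_k$ bundle with adjusted connection --- the cocycle conditions (\ref{eq:twistedsuucocycle1})--(\ref{eq:twistedsuucocycle2}), the half-braiding identity (\ref{eq:lambda-identity}), the form identities (\ref{eq:cforms-identity})--(\ref{eq:bforms-identity}), and the fake-flatness together with the curvature formulae (\ref{eq:3curvaturesuu}) --- using throughout the unpacked hypothesis that both connections are fake-flat, i.e.\ $f_i=da_i=0$ and $f'_i=da'_i=0$, and the fact that $P$ and $P'$ carry the \emph{same} adjustment $\kappa$ (equivalently the same integer $n$), since the adjustment is data attached to the fixed extension class $k$. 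The purely combinatorial part --- that $(X_{ij},\Lambda_{ij},M_{ijk},Z_{ijk})$ is a ${\cal Z}_k$-cocycle --- is nothing but the composition $2$-functor of Proposition~\ref{propo:gray-monoid} evaluated on objects, so the actual work lies entirely in the differential forms $A_i,B_i,C_{ij}$.

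I would dispatch the easy pieces first. Identity (\ref{eq:a1form-identity}) for $A_i=a_i+a'_i$ is immediate from $a_j=a_i+dx_{ij}$, $a'_j=a'_i+dx'_{ij}$ and $X_{ij}=x_{ij}+x'_{ij}$; fake-flatness of the product, $F(A_i)=dA_i=da_i+da'_i=0$, follows at once, so that both adjustment terms $\kappa(-X_{ij},dA_i)$ and $\kappa(A_i,F(A_i))$ vanish. The curvature formula (\ref{eq:3curvaturesuu}) then collapses to $H_i=dB_i=db_i+db'_i+n\,d(a'_i\wedge a_i)=db_i+db'_i$, since the last wedge is closed by flatness, and because $h_i=db_i-\kappa(a_i,f_i)=db_i$ and $h'_i=db'_i$ this is exactly $H_i=h_i+h'_i$. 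For the $B$-identity (\ref{eq:bforms-identity}), fake-flatness reduces it to $B_j=B_i+dC_{ij}$; substituting $b_j=b_i+dc_{ij}$, $b'_j=b'_i+dc'_{ij}$ and using that $d(n_{ij}a'_i)=0$ (the cochain $n_{ij}\in\Z$ is locally constant and $da'_i=0$), the claim reduces to the single form identity
\begin{equation}
a'_j\wedge a_j-a'_i\wedge a_i \;=\; d\!\left(x'_{ij}a_j-x_{ij}a'_i\right),
\end{equation}
which one verifies by expanding the right-hand side with $dx'_{ij}=a'_j-a'_i$, $dx_{ij}=a_j-a_i$ and cancelling the mixed $a'_i\wedge a_j$ and $a_j\wedge a'_i$ terms by antisymmetry of the wedge on $1$-forms.

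The main obstacle, and the only genuinely lengthy step, is the transition identity (\ref{eq:cforms-identity}) for $C_{ij}$, namely $C_{ik}=C_{jk}+C_{ij}+d\log Z_{ijk}+J(M_{ijk},A_i)$. I would compute $C_{ik}-C_{jk}-C_{ij}$ from the definition (\ref{eq:tensored-c-form}), feeding in the two copies of (\ref{eq:cforms-identity}) for the unprimed and primed $c$'s, the cochain cocycles $x_{ik}=m_{ijk}+x_{ij}+x_{jk}$ and $x'_{ik}=m'_{ijk}+x'_{ij}+x'_{jk}$ to eliminate $x_{ik},x'_{ik}$, and the half-braiding relation (\ref{eq:nlambda}) to eliminate $n_{ik}$; on the other side one expands $d\log Z_{ijk}$ from the explicit $Z_{ijk}$ of (\ref{eq:zk-composed-cochains2}) using $dm'_{ijk}=0$, $dx_{jk}=a_k-a_j$, $dx'_{ij}=a'_j-a'_i$, and writes $J(M_{ijk},A_i)=n(m_{ijk}+m'_{ijk})(a_i+a'_i)$. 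Matching the two sides is then bookkeeping of the $\Z$-bilinear form $J$, the adjustment integer $n$, and the half-braiding characters $n_{ij}$: the $-n_{ij}a'_i$ summand built into (\ref{eq:tensored-c-form}) is precisely what absorbs the $n_{jk}\,dx'_{ij}$ contribution of $d\log Z_{ijk}$ together with the $n_{ik}$-versus-$(n_{ij}+n_{jk})$ defect coming from (\ref{eq:nlambda}), while the remaining $a_i$- and $a'_i$-proportional terms reassemble into $J(M_{ijk},A_i)$ after re-expressing $a_j,a_k$ through (\ref{eq:a1form-identity}). The delicate points are keeping the signs straight in the normalization of $\tau$ relative to $J$ (cf.\ (\ref{eq:jform-n}), (\ref{eq:nlambda})) and in the $1$-form wedge products; everything else is mechanical. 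Finally, functoriality of $(P,P')\mapsto(P\otimes P',\nabla\otimes\nabla')$ --- hence that these formulae define a tensor product of $2$-groupoids and not merely an operation on objects --- follows the same pattern as the composition and action $2$-functors of Section~\ref{sec:2-action}, so, in keeping with the statement, it is enough to exhibit the assignment on objects.
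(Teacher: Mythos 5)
Your proposal is correct and follows essentially the same route as the paper's proof: the cochain level is delegated to the composition $2$-functor, the $A$-identity and the curvature $H_i=h_i+h'_i$ are immediate, and the $C$- and $B$-identities are verified by the same bookkeeping with (\ref{eq:twistedsuucocycle1}), (\ref{eq:a1form-identity}), (\ref{eq:n-cocycle}) and fake-flatness, which is exactly where the paper's ``red'' terms vanish and which you simply impose at the outset. The only difference is presentational --- the paper deduces the correction terms in $B_i$, $C_{ij}$ from an ansatz, whereas you verify the stated formulas directly --- and your described cancellation pattern for the identity (\ref{eq:cforms-identity}) is the correct one.
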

\begin{proof}
 It shown above, the tuple $(X_{ij},\Lambda_{ij},M_{ijk},Z_{ijk})$ defines a principal ${\cal Z}_k$ bundle. It remains to show the differential forms (\ref{eq:tensored-a-form})-(\ref{eq:tensored-c-form}) define a fake-flat adjusted connection on this bundle.

 It is immediate to see that the $A$-forms (\ref{eq:tensored-a-form}) satisfy the required identity (\ref{eq:a1form-identity}) with respect to the $X_{ij}$ cochains
 \begin{equation}
     A_j = A_i + d X_{ij}.
 \end{equation}

To deduce the new $C$-forms, the general ansatz is
\begin{equation}
    C_{ij} = c_{ij} + c'_{ij} +  (C-{\rm correction})_{ij},
\end{equation}
where the correction terms are necessarily defined on double intersections $U_{ij}$.

The $(M_{ijk},Z_{ijk})$ transition functions are already fixed, and give rise to the following terms in the sought identity (\ref{eq:cforms-identity})
\begin{gather}\nonumber
- Z_{ijk}\nabla_{A_i}Z^{-1}_{ijk}= -z_{ijk}\nabla_{a_i}z_{ijk}^{-1} - z'_{ijk}\nabla_{a'_i}z_{ijk}^{'-1}\\+    \left(n_{jk}dx'_{ij} + nx'_{ij}dx_{jk} + nx_{jk} dx'_{ij} + nm'_{ijk}dx_{ik} + nm'_{ijk}a_i + nm_{ijk}a_i'\right) \label{eq:additionaltensorc}
\end{gather}
for $\nabla_{a_i}:=d+a_i\triangleright$, and where the terms in brackets come from the failure of $(M_{ijk},Z_{ijk})$ matching the product of $(m_{ijk},z_{ijk})$ and $(m'_{ijk},z'_{ijk})$. The integers $n_{jk}$ arise from
\begin{equation}
    d\log \left(\lambda_{jk}\left(x'_{ij}\right)\right)= d\log \exp\left(n_{jk} x_{ij}'\right)= n_{jk}dx'_{ij},
\end{equation}
and we expanded the $\Z$-bilinear form $J(x,x')=nxx'$ as
\begin{equation}
    d\log\exp\left(J\left(x,x'\right)\right) = d\log\left(\exp\left(nxx'\right) \right) = nd(xx').
\end{equation}
Applying the identities (\ref{eq:twistedsuucocycle1}), (\ref{eq:a1form-identity}), and (\ref{eq:lambda-identity}) rewritten as
\begin{equation}\label{eq:n-cocycle}
    n_{ik} = -2nm_{ijk} + n_{ij} + n_{jk}
\end{equation}
since
\begin{equation}
    \tau(m_{ijk})(x) = \exp\left(-(J(m_{ijk},x')+J(x',m_{ijk}))\right) = \exp\left(-2nm_{ijk}x' \right),
\end{equation}
allows to rewrite the additional terms in brackets (\ref{eq:additionaltensorc}) as
\begin{equation}
 \left( nx'_{ik}a_k - n_{ik}a'_i - nx_{ik}a'_i \right) - \left(  nx'_{ij}a_j - n_{ij}a'_i - nx_{ij}a'_i \right) - \left( nx'_{jk}a_k - n_{jk}a'_j - nx_{jk}a'_j \right),
\end{equation}
which are the contributions from the correction terms of $C_{ik}-C_{ij}-C_{jk}$ provided these are defined as
\begin{equation}
    (C-{\rm correction})_{ij}:= \left(  nx'_{ij}a_j - n_{ij}a'_i - nx_{ij}a'_i \right).
\end{equation}

Moving on to the $B$-forms, a similar ansatz is
\begin{equation}\label{b-correction-ansatz}
    B_i = b_i + b'_i + (B-{\rm correction})_i
\end{equation}
where the corrections are defined on the individual patches $U_i$.

The terms dictating what the corrections have to be come from the identity (\ref{eq:bforms-identity})
\begin{gather}
    d C_{ij} - \kappa\left(-X_{ij},d A_i \right) = dc_{ij} + dc'_{ij} + \kappa\left(x_{ij},da_i\right) + \kappa\left(x'_{ij},da'_i\right) \nonumber
    \\
    + \left({\color{blue} \left(\left(na'_j\wedge a_j\right) - \left(na'_i\wedge a_i\right)\right)} + {\color{red} nx'_{ij}da_j - n_{ij}da'_i -nx_{ij}da'_i  + \kappa\left(x_{ij},da'_i \right) + \kappa\left(x'_{ij},da_i \right)} \right). \label{eq:b-modification-terms}
\end{gather}

The terms in (\ref{eq:b-modification-terms}) in blue are taken care of by introducing the correction
\begin{equation}
    (B-{\rm correction})_i:=  n a'_i\wedge a_i.
\end{equation}
The terms in red vanish individually by imposing the fake-flatness condition
\begin{equation}
    da_i = da'_i = 0,
\end{equation}
and thus establish that the forms (\ref{eq:tensored-a-form})-(\ref{eq:tensored-c-form}) define a fake-flat adjusted connection, denoted as $\nabla\otimes\nabla'$.

Finally, the curvature $3$-form is
\begin{equation}
    H_i = dB_i - \kappa\left(A_i,F_i \right) = d\left(b_i+b'_i \right) = h_i + h'_i.
\end{equation}
\end{proof}

Finally, let us describe how this result interacts with a connective refinement of the principal ${\cal Z}_{k}$ bundle action (\ref{eq:tensoredcochainsaction})-(\ref{eq:zk-action-cochains2}) on principal $\String{U(1)_k}$ bundles.

\begin{propo}[Subgroup action of ${\cal Z}_k$ on $\String{U(1)_k}$ bundles with connection]\label{propo:h-tensor-string-action}

Let
\begin{equation}
    {\cal H} = \left(U(1) \xrightarrow{0} \R \right),
\end{equation}
equipped with the trivial action of $\R$ on $U(1)$, be the smooth \textit{braided} sub-$2$-group of ${\cal Z}_k$ defined by the inclusions
\begin{equation}
\begin{tikzcd}
1 \arrow[r] & U(1) \arrow[r, hook] & U(1) \arrow[r, "0"] \arrow[d, "{\left(0,\text{id}_{U(1)}\right)}"', hook] & \R \arrow[r, two heads] \arrow[d, "{\left(\text{id}_{\R},0\right)}", hook] & \R \arrow[r]                     & 1 \\
1 \arrow[r] & U(1) \arrow[r, hook] & \Z \times U(1) \arrow[r, "\delta"']                                       & \R \oplus \Z \arrow[r, two heads]                                          & \frac{\R\oplus \Z}{\Z} \arrow[r] & 1
\end{tikzcd},
\end{equation}
which is equivalent to the direct product braided smooth $2$-group
\begin{equation}
    {\cal H} = \textbf{B}U(1) \times \R.
\end{equation}

Then principal $\String{U(1)_k}$ bundles with adjusted connection admit a tensor action by principal ${\cal H}$ bundles with fake-flat connection. 

 In Čech data, a principal $\cal H$ bundle with fake-flat connection given by cochains and differential forms $(x_{ij},z_{ijk},a_i,b_i,c_{ij})$ acts on a principal $\String{U(1)_k}$ bundle with adjusted connection $(x'_{ij},m'_{ijk},z'_{ijk},a'_i,b'_i, c'_{ij})$ as
\begin{equation}
 \left(x_{ij},z_{ijk},a_i,b_i,c_{ij}\right) \cdot \left(x'_{ij},m'_{ijk},z'_{ijk},a'_i,b'_i, c'_{ij}\right) = \left( X_{ij},M_{ijk},Z_{ijk}, A_i, B_i, C_{ij}\right)
\end{equation}
where
\begin{equation}
    X_{ij} = x_{ij} + x'_{ij},
\end{equation}
\begin{equation}
    M_{ijk} = m_{ijk},
\end{equation}
\begin{equation}
    Z_{ijk} = z_{ijk} z'_{ijk} \exp\left(J\left(x'_{ij},x_{jk} \right) + J\left(m'_{ijk},x_{ik} \right) \right),
\end{equation}
\begin{equation}
    A_i = a_i + a'_i,
\end{equation}
\begin{equation}\label{eq:h-action-string}
    B_i = b_i + b'_i + J\left(a'_i,a_i \right),
\end{equation}
\begin{equation}
    C_{ij} = c_{ij} + c'_{ij} + \left(J\left(x'_{ij},a_j \right)- J\left(x_{ij},a'_i \right) \right).
\end{equation}
\end{propo}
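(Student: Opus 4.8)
The plan is to prove the statement in three stages: first exhibit $\cal H$ as a braided sub-$2$-group of ${\cal Z}_k$, then read off the bare (non-connective) action as a special case of Proposition~\ref{propo:fields-mod}, and finally promote this to an action on connections by a differential-form computation patterned on Proposition~\ref{propo:z-tensor-fake-flat}. For the first stage I would check directly that the displayed pair of vertical maps, $(0,\mathrm{id}_{U(1)}):U(1)\hookrightarrow\Z\times U(1)$ and $(\mathrm{id}_{\R},0):\R\hookrightarrow\R\oplus\Z$, constitutes a morphism of crossed modules: it intertwines $\delta$ (since $\delta(0,z)=(0,\tau(0))=(0,0)$) and the actions (since $\alpha(x,\lambda,0,z)=(0,z\exp(J(0,x)))=(0,z)$ reproduces the trivial $\R$-action on $U(1)$ defining $\cal H$), and the braiding $\beta$ of ${\cal Z}_k$ restricts to a braiding on the sub-$2$-group, whose underlying $2$-group is manifestly the direct product $\textbf{B}U(1)\times\R$. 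In particular $\textbf{B}{\cal H}$ is a group object sitting inside $\textbf{B}{\cal Z}\big(\String{U(1)_k}\big)\hookrightarrow\mathrm{AUT}\big(\String{U(1)_k}\big)$, so all that follows is a specialization of Section~\ref{sec:2-action} and the action is well defined because ${\cal H}$ lies in the center.

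For the second stage, restricting the action $2$-functor of Proposition~\ref{propo:fields-mod} to the sub-$2$-groupoid of flat principal $\cal H$ bundles inside $\textbf{Sym}(\Sigma,\textbf{B}\String{U(1)_k})$ yields, on objects, the action on principal $\String{U(1)_k}$ bundles. Plugging in the $\cal H$-data — $(\phi_i,\psi_i)=(\mathrm{id},\mathrm{id})$, the natural isomorphism $\eta_{\bullet,ij}$ encoded by $x_{ij}\in\R$ with vanishing half-braiding component, and modification $m_{\bullet,ijk}=z_{ijk}$ — into the formula for $\cdot(\Phi_{ijk},(g_{ij},h_{ijk}))$ reproduces $X_{ij}=x_{ij}+x'_{ij}$, $M_{ijk}=m'_{ijk}$, and the claimed $Z_{ijk}$, the factor $\exp\big(J(x'_{ij},x_{jk})+J(m'_{ijk},x_{ik})\big)$ arising from the $\eta_{\bullet,ik}\triangleright\psi_k(h_{ijk})$ and $\eta_{\bullet,ij}\triangleright(\beta_{\eta_{jk}}(g_{ij}))^{-1}$ contributions (the absence of any $\lambda_{jk}(x'_{ij})$ factor reflects that $\cal H$ has trivial half-braiding). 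Functoriality and the module axioms for this assignment are inherited verbatim from Propositions~\ref{propo:gray-monoid} and \ref{propo:fields-mod}.

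For the third stage I would make the ansätze $A_i=a_i+a'_i$, $B_i=b_i+b'_i+(B\text{-corr})_i$ with $(B\text{-corr})_i$ a $2$-form on $U_i$, and $C_{ij}=c_{ij}+c'_{ij}+(C\text{-corr})_{ij}$ with $(C\text{-corr})_{ij}$ a $1$-form on $U_{ij}$, and fix the corrections by imposing the connection identities (\ref{eq:a1form-identity})--(\ref{eq:bforms-identity}) on the tensored data. The $A$-identity is immediate. Expanding $-Z_{ijk}\nabla_{A_i}Z_{ijk}^{-1}$ against $-z_{ijk}\nabla_{a_i}z_{ijk}^{-1}-z'_{ijk}\nabla_{a'_i}z_{ijk}'^{-1}$ using $J(x,x')=nxx'$ and (\ref{eq:twistedsuucocycle1}) forces $(C\text{-corr})_{ij}=J(x'_{ij},a_j)-J(x_{ij},a'_i)$ — simpler than in Proposition~\ref{propo:z-tensor-fake-flat} precisely because the $n_{ij}$-terms are absent here. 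For $B$, matching $dC_{ij}-\kappa(-X_{ij},dA_i)$ term by term, the quadratic pieces $n(a'_j\wedge a_j-a'_i\wedge a_i)$ are absorbed by setting $(B\text{-corr})_i=J(a'_i,a_i)$, while the cross-terms $\kappa(x_{ij},da'_i)$, $\kappa(x'_{ij},da_i)$, $nx'_{ij}\,da_j$, $nx_{ij}\,da'_i$ all carry a $da_i$ or $da'_i$; fake-flatness of the acting $\cal H$-connection gives $da_i=0$, and I would check that the surviving $da'_i$-terms reorganize exactly into $-\kappa(A_i,F_i)$ with $F_i=dA_i=da'_i$, so the tensored connection is a bona fide adjusted connection with $3$-curvature $h_i+h'_i$. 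I expect this third stage to be the main obstacle: it is the only place where genuine computation occurs, and — unlike Proposition~\ref{propo:z-tensor-fake-flat}, where both connections are fake-flat — the acted $\String{U(1)_k}$ connection $a'_i$ is \emph{not} assumed fake-flat, so one cannot discard all $da'_i$-terms and must instead verify they assemble precisely into the adjustment term of (\ref{eq:bforms-identity}) and of the curvature. Once this is done, extending from objects to a full action $2$-functor and checking the module axioms follows as in Section~\ref{sec:2-action} by decorating the $2$-functor of Proposition~\ref{propo:fields-mod} with the form data.
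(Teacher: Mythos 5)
Your proposal is correct and follows essentially the same route as the paper's proof: identify principal ${\cal H}$ bundles as ${\cal Z}_k$ bundles with trivial $\lambda_{ij}$ (hence $m_{ijk}=0$, giving an $\R$-bundle plus a bundle gerbe), specialize the bundle-level action of Proposition~\ref{propo:fields-mod}, and then rerun the ansatz of Proposition~\ref{propo:z-tensor-fake-flat}, with the new $da'_i$-terms handled exactly as you anticipate — fake-flatness kills the $da_i$-terms and the adjustment $\kappa(x_{ij},da'_i)=n\,x_{ij}\,da'_i$ absorbs the remainder inside the overlap identity (\ref{eq:bforms-identity}) (the term $-\kappa(A_i,F_i)$ only enters the separate curvature computation, not that identity). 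One incidental discrepancy: the paper's proof computes the tensored $3$-curvature as $h_i+h'_i-2J\left(a_i,da'_i\right)$ rather than your $h_i+h'_i$; since the proposition makes no claim about $H_i$, this does not affect the validity of your argument for the stated result.
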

\begin{proof}
Principal ${\cal H}$ bundles are principal ${\cal Z}_k$ bundles whose $\lambda_{ij}$ cochains are all trivial. As a consequence of their cocycle identity written as (\ref{eq:n-cocycle}) it implies that the $m_{ijk}$ cochains are all zero, and thus the $x_{ij}$ satisfy the cocycle identity
\begin{equation}
    x_{ik} = x_{ij}+x_{jk},
\end{equation}
which describe the transition functions of a principal $\R$ bundle. The vanishing of $m_{ijk}$ furthermore imply that the action of $x_{ij}$ cochains on $z_{ijk}$ cochains is trivial, so these satisfy the cocycle identity
\begin{equation}
    z_{ikl} z_{ijk} = z_{ijl} z_{jkl},
\end{equation}
which describe a the transition functions of a bundle gerbe.

Hence, a principal $\cal H$ bundle with fake-flat connection is described by the cochains $x_{ij}$ of a principal $\R$ bundle, the cochains $z_{ijk}$ of a bundle gerbe, \textit{closed} differential forms $a_i$, and differential $1$- and $2$-forms $c_{ij}$, $b_i$ describing a connection on a bundle gerbe.

From above, we have the action at the level of bundles
\begin{equation}
  \left(x_{ij},0,0,z_{ijk}\right)\cdot \left(x'_{ij},m'_{ijk},z'_{ijk}\right) = \left(x_{ij}+x'_{ij},m'_{ijk}, z_{ijk}z'_{ijk}\exp\left(J\left(x'_{ij},x_{jk} \right) + J\left(m'_{ijk},x_{ik} \right) \right)\right),
\end{equation}
for $(x'_{ij},m'_{ijk},z'_{ijk})$ the bundle data of a principal $\String{U(1)_k}$ bundle with adjusted connection $(x'_{ij},m'_{ijk},z'_{ijk},a'_i,b'_i,c'_{ij})$.

As for the connective data, we follow the same process as in Proposition~\ref{propo:z-tensor-fake-flat}. We have the new $A$ forms (\ref{eq:tensored-a-form})
\begin{equation}
    A_i := a_i + a'_i,
\end{equation}
and the new $C$ forms (\ref{eq:tensored-c-form})
\begin{equation}
    C_{ij} := c_{ij}+c'_{ij} + \left(  nx'_{ij}a_j - nx_{ij}a'_i \right),
\end{equation}
taking into account that $n_{ij}=0$.

Now, the same analysis on the correction on $B$ forms gives (cf. Eq.~(\ref{eq:b-modification-terms}))
\begin{gather}
    d C_{ij} - \kappa\left(-X_{ij},d A_i \right) = dc_{ij} + dc'_{ij} + \kappa\left(x'_{ij},da'_i\right) \nonumber
    \\
    + \left({\color{blue} \left(\left(na'_j\wedge a_j\right) - \left(na'_i\wedge a_i\right)\right)} + {\color{red} \left(-nx_{ij}da'_i  + \kappa\left(x_{ij},da'_i \right)\right)} \right)\nonumber
    \\ 
    = dc_{ij} + dc'_{ij} + \kappa\left(x'_{ij},da'_i\right)
    + {\color{blue} \left(\left(na'_j\wedge a_j\right) - \left(na'_i\wedge a_i\right)\right)} 
\end{gather}
where we used the adjustment
\begin{equation}
    \kappa\left(x_{ij},da'_i \right) = nx_{ij}da'_i.
\end{equation}
Therefore, the resulting $B$ forms are
\begin{equation}
    B_i :=  b_i + b'_i + na'_i\wedge a_i = b_i + b'_i + J\left(a'_i,a_i\right) .
\end{equation}

The fake curvature is that of the adjusted connection on the principal $\String{U(1)_k}$ bundle
\begin{equation}
    F_i = dA_i = da'_i,
\end{equation}
and the curvature $3$-form is (cf. Eq.~(\ref{eq:2bundle-adj-3curv}))
\begin{gather}
    H_i = dB_i - \kappa(A_i,F_i) = db_i+db'_i -n a_i \wedge da'_i - n\left( a_i + a'_i\right)\wedge da'_i\nonumber,
    \\
   = db_i + db'_i - 2na_i\wedge da'_i - n a_i'\wedge da_i'\nonumber,
    \\
    = db_i + db'_i - 2J\left(a_i,da'_i\right) - J\left(a_i', da_i'\right)\nonumber,
    \\
     = h_i+ h'_i - 2J\left(a_i,da'_i\right) .
\end{gather}

\end{proof}

The action relevant for symmetries as per Section~\ref{sec:general}, namely that by principal bundles with \textit{flat} connections, is obtained from Proposition~\ref{propo:h-tensor-string-action} by further assuming that
\begin{equation}
    h_i = db_i = 0.
\end{equation}

\begin{remark}\label{rmk:mult-bundle-gerbes}
 Proposition~\ref{propo:h-tensor-string-action} only shows that flat principal bundles of the braided sub-$2$-group $\cal H$ of ${\cal Z}_k$ have a well-defined action on principal ${\rm String}\left(U(1)_k\right)$ bundles with adjusted connection. The same proof does not apply to principal bundles of the full ${\cal Z}_k$ $2$-group because, in the required identity (\ref{eq:bforms-identity}), one obtains the additional term $n_{ij}da_i'$ (cf. Equation~(\ref{eq:b-modification-terms})), which is not defined at a single patch $U_i$, and thus the ansatz (\ref{b-correction-ansatz}) does not work. Nevertheless, it is possible that there is a subtler way to define the action which does promote the $\textbf{B}{\cal Z}_k$ higher-form symmetry to the connective picture. The present situation of making a composition law compatible with connection data is reminiscent to the setting of \cite{waldorf2010multiplicative}. There, the focus are bundle gerbes with connection whose base space is a Lie group. The author observes that in order for the Lie group composition law to induce a composition of bundle gerbes with connection, it is necessary to require the composition to be defined not \textit{strictly} but only \textit{up to} trivial gerbes with (globally-defined) connection. We leave this interesting line of inquiry for future work.
\end{remark}

This action on connections directly specifies the action of the $\cal H$ center higher-form symmetry on holonomies, which as previously discussed are thought to become the charged operators in the fully quantum picture as Wilson lines/surfaces systems.

It is important to note that while $\cal H$ is a direct product group, the correspond actions are \textit{not} totally independent, as can be seen from Eq.~(\ref{eq:h-action-string}) where, even if the acting $\cal H$ has vanishing $(b_i, c_{ij})$ differential forms, the local $a_i$ nevertheless shift the local $b'_i$ forms of the $\suu$ bundle connection.

At this point, we can connect back to the context motivating the consideration of these fields, namely $5$d supergravity. From Proposition~\ref{propo:h-tensor-string-action} one sees that one has two infinitesimal transformations on the local gauge potentials $(a'_i,b'_i)$ that form part of the $\suu$ connection, parameterized by what we now know to be the potentials $(a_i,b_i)$ of a fake-flat and flat connection on a principal $\cal H$ bundle. These gauge variations are
\begin{equation}
    \delta_a a' =  a; \ \ \delta_b b' = b + J(a',a) = b -  n a\wedge a'.
\end{equation}
The transformations $\delta_a, \delta_b$ on their own right can be understood as the transformations induced by special cases of $\cal H$ bundles with fake-flat and flat connections, where the first one corresponds to principal $\R$ bundles with flat connection, and the later to bundle gerbes with flat connection.

Furthermore, it is straightforward to compute the commutators of these transformations, for $(a,b), (\tilde{a},\tilde{b})$ the gauge potentials of two principal $\cal H$ bundles with fake-flat and flat connections
\begin{eqnarray}
    [\delta_{a},\delta_{\tilde{b}}] = 0; & [\delta_b, \delta_{\tilde{b}}] = 0.
\end{eqnarray}
and a \textit{nontrivial} commutator
\begin{gather}
    a \cdot\left(\tilde{a}\cdot \left(a',b'\right)\right) = a \cdot \left(\tilde{a}+a',b'-n \tilde{a}\wedge a'\right) = \left(a+\tilde{a}+a',b'-n \tilde{a}\wedge a' - n a\wedge \left(\tilde{a}+a' \right)\right) ,
\end{gather}
\begin{gather}
    \tilde{a} \cdot\left(a\cdot \left(a',b'\right)\right) = \tilde{a} \cdot \left(a+a',b'-n a\wedge a'\right) = \left(\tilde{a}+a+a',b'-n a\wedge a' - n \tilde{a}\wedge \left(a+a' \right)\right)
\end{gather}
\begin{gather}
 a \cdot\left(\tilde{a}\cdot \left(a',b'\right)\right)  - \tilde{a} \cdot\left(a\cdot \left(a',b'\right)\right)  = \left(0,2n\tilde{a}\wedge a \right) = \left(0,2J\left(\tilde{a}, a\right) \right),
\end{gather}
which says that
\begin{eqnarray}
    [\delta_{a},\delta_{\tilde{a}}] = \delta_{b''}; & \ b'':= 2J\left( a,\tilde{a}\right) = 2n\, a\wedge \tilde{a}.
\end{eqnarray}

These are the five-dimensional analogs of what in eleven-dimensional supergravity has sometimes been referred to as the ``gauge algebra''\footnote{We thank Urs Schreiber for bringing this to our attention.} of such gauge fields \cite[Eq's 2.4-5]{Cremmer:1998px}, \cite{Lavrinenko:1999xi,Kalkkinen:2002tk,Bandos:2003et}, which as highlighted in \cite[Footnote 9]{Giotopoulos:2024xcg} are transformations more general than actual gauge transformations (whose parameters $a,b$ are required to be not just closed but exact differential forms) and therefore describe genuine global symmetries.

\paragraph{Gauging.}

Let us now discuss gauging. As in the $G=SU(n)$ case, the two natural cases to consider are the center $U(1)$ $2$-form \textit{sub}symmetry, and the center higher-form symmetry itself. However, as noted previously, the only part of the ${\cal Z}_k$ center higher-form symmetry that is \textit{strictly} compatible with nontrivial adjusted $\String{U(1)_k}$ connections is that described by principal ${\cal H}=\textbf{B}U(1)\times \R$ bundles with fake-flat and flat connections.

The former comes from the inclusion $\textbf{B}^2U(1) \hookrightarrow\textbf{B}{\cal Z}_k$ into the center higher-form symmetry. The flat bundles describing this symmetry are given, in the notation used above, by $(x_{ij},\lambda_{ij},m_{ijk},z_{ijk})$ with $x_{ij}=\lambda_{ij}=m_{ijk}=0$ and $z_{ijk}$ constant $U(1)$-valued functions on the triple intersections $U_{ijk}$ satisfying
\begin{equation}
    z_{ikl} z_{ijk} =z_{ijl} z_{jkl}.
\end{equation}
According to the action (\ref{eq:zk-action-cochains2}), this bundle acts on a principal $\String{U(1)_k}$ bundle $(x''_{ij},m''_{ijk},z''_{ijk})$ as
\begin{equation}
    z_{ijk}\cdot (x''_{ij},m''_{ijk},z''_{ijk}) = (x''_{ij},m''_{ijk},z_{ijk}z''_{ijk}),
\end{equation}
which is just the left-translation of bundle $2$-gerbes discussed in Section~\ref{sec:bfields}.

The result of gauging comes from the sequence that defines the string $2$-group itself as
\begin{equation}
    \textbf{B}^2 U(1) \to \textbf{B}\String{U(1)_k} \to \textbf{B}U(1) \xrightarrow{k\, \textbf{cs}}\textbf{B}^3U(1),
\end{equation}
that denotes $\textbf{B}U(1)$ as the weak quotient of $\textbf{B}\String{U(1)_k}$ by $\textbf{B}^2 U(1)$.

This is promoted to a connective symmetry according to the refined sequence in \cite{Fiorenza:2012tb}
\begin{equation}
    \textbf{B}^2U(1) \to \textbf{B}_{\nabla}\String{U(1)_k}\to \textbf{B}_{\nabla}U(1) \xrightarrow{k\, \textbf{cs}} \textbf{B}^3 U(1).
\end{equation}

If one wants to see explicitly how the bundle gerbe information gets ``absorbed'' by the gauging, it is necessary to construct the stacky quotient as in previous sections. This amounts to constructing \textit{principal smooth $3$-group bundles with adjusted connections}. However, the subject of adjusted connections for principal bundles of smooth $3$-groups is still a nascent one (see the recent work \cite{Gagliardo:2025oio}) and thus is out of the scope of the present paper.

Still, we can perform an analysis at the level of bundles/transition functions. In this case, the relevant smooth $3$-group $\cal G_0$ is a generalization of (\ref{eq:innu1xmod}), presented as a $2$-crossed complex of Lie groups \cite{baues1991combinatorial}
\begin{equation}
    {\cal G}_0 = \left(U(1) \xrightarrow{(0,\text{id}_{U(1)})} \Z\times U(1) \xrightarrow{\delta} \R\right),
\end{equation}
where $\delta:(m,z)\mapsto m$, actions
\begin{eqnarray}
    \alpha_1:& \R \times \left(\Z \times U(1)\right) &\to \Z \times U(1)
    \\ & (x,m,z) &\mapsto \left(m, z\exp\left( J\left(m,x\right)\right)\right),
\end{eqnarray}
\begin{eqnarray}
    \alpha_2:& \R \times U(1) &\to U(1),
    \\& (x,z) &\mapsto z,
\end{eqnarray}
and bracket
\begin{eqnarray}
    \{-,-\}:&\left(\Z\times U(1)\right)\times \left(\Z\times U(1)\right) &\to U(1),
    \\
    & \{\left(m_1,z_1\right),\left(m_2,z_2\right)\} & \mapsto 1.
\end{eqnarray}

This, however, is a specialization of the other gauging we consider immediately below, so we derive the Čech data for that more general case instead.

The other gauging to consider is that of the $\cal H$ center higher-form subsymmetry. We have the $2$-crossed module
\begin{equation}
    {\cal G} = \left(U(1) \xrightarrow{(0,0,\text{id}_{U(1)})} \R\times \left(\Z\times U(1)\right) \xrightarrow{\text{id}_{\R}\times \delta} \R\right), \label{eq:g-gauged-string-2xmod}
\end{equation}
$\delta:(m,z)\mapsto m$, actions
\begin{eqnarray}
    \alpha_1:& \R \times \left(\R\times \left( \Z \times U(1)\right)\right) &\to \R\times \left(\Z \times U(1)\right)
    \\ & (x,y,m,z) &\mapsto \left(y,m, z\exp\left( J\left(m,x\right)\right)\right),
\end{eqnarray}
\begin{eqnarray}
    \alpha_2:& \R \times U(1) &\to U(1),
    \\& (x,z) &\mapsto z,
\end{eqnarray}
and bracket
\begin{eqnarray}
    \{-,-\}:&\left(\R\times \left(\Z\times U(1)\right)\right)\times \left(\R\times \left(\Z\times U(1)\right)\right) &\to U(1), \label{eq:gauged-string-bracket}
    \\
    & \{\left(y_1,m_1,z_1\right),\left(y_2,m_2,z_2\right)\} & \mapsto \exp\left(-J\left(m_1,y_2\right) \right),
\end{eqnarray}
which is the unique bracket determined by the required identity (\ref{eq:2xmod-bracket-identity}), provided $\R\times (\Z\times U(1))$ has the direct product multiplication.

Notice, in particular, that the induced action (\ref{eq:3gp-haction}) of $\R \times (\Z\times U(1))$ on $U(1)$ is the trivial one.

A principal $\cal G$ bundle with respect to a good cover $\{U_i\}_{i\in {\cal I}}$ of $\Sigma$ consists of (see Appendix~\ref{app:principal-3-bundles}) cochains
\begin{eqnarray}
    x_{ij} &:& U_{ij}\to \R; \\ \left(y_{ijk},m_{ijk},z_{ijk}\right)&:& U_{ijk}\to \R\times\Z\times U(1); \\ h_{ijkl}&:& U_{ijkl} \to U(1)
\end{eqnarray}
satisfying the identities 
\begin{equation}
     x_{ik} = y_{ijk} + m_{ijk} + x_{ij} + x_{jk},
\end{equation}
\begin{equation}
    y_{ikl} + y_{ijk} = y_{ijl} + y_{jkl},
\end{equation}
\begin{equation}
  m_{ikl} + m_{ijk} = m_{ijl} + m_{jkl}  ,
\end{equation}
\begin{equation}
    z_{ikl}z_{ijk} h_{ijkl}  = z_{ijl} z_{jkl} \exp\left(J(m_{jkl},x_{ij}) \right),
\end{equation}
\begin{equation}
    h_{ijkl} h_{ijlm}h_{jklm} = h_{iklm} \exp\left(-J\left(m_{klm},y_{ijk} \right) \right) h_{ijkm}.
\end{equation}

We can therefore regard any principal $\suu$ bundle $P'$ as a principal $\cal G$ bundle $\imath (P')$ via the inclusion
\begin{equation}
    \left(x'_{ij},m'_{ijk},z'_{ijk} \right) \mapsto \left(x'_{ij},y'_{ijk}=0,m'_{ijk},z'_{ijk},h'_{ijkl}=1 \right),
\end{equation}
and likewise its image $\imath(P\cdot P')$ under the action of a $\cal H$ bundle $(x_{ij},\lambda_{ij}=0,m_{ijk}=0,z_{ijk})$ as
\begin{gather}
\left(x_{ij}+x'_{ij},m'_{ijk},z_{ijk}z'_{ijk}\exp\left(J\left( x'_{ij},x_{jk}\right)+J\left(m'_{ijk},x_{ik}\right) \right) \right)\\  \mapsto \left(x_{ij}+x'_{ij},y'_{ijk}=0,m'_{ijk},z_{ijk}z'_{ijk}\exp\left(J\left( x'_{ij},x_{jk}\right)+J\left(m'_{ijk},x_{ik}\right) \right),h'_{ijkl}=1 \right).   
\end{gather}
As is by now expected, $P'$ and $P\cdot P'$ are gauge equivalent in the gauged theory, that is, as $\cal G$ bundles via the gauge transformation
\begin{equation}
    \left(x_i,y_{ij},m_{ij},z_{ij},h_{ijk} \right): \imath\left(P'\right) \xrightarrow{\sim} \imath\left(P\cdot P'\right) \label{eq:gauged-string-gauge-transformation},
\end{equation}
\begin{eqnarray}
    x_i = 0; \ \ y_{ij} = x_{ij}; \ \ m_{ij} = 0; \ \ z_{ij} = 1, \label{eq:gauged-string-gauge-transformation-cochains1}
\end{eqnarray}
\begin{equation}
    h_{ijk} = z_{ijk}\exp\left(J\left( x'_{ij},x_{jk}\right)+J\left(m'_{ijk},x_{ik}\right)\right). \label{eq:gauged-string-gauge-transformation-cochains2}
\end{equation}

Most of the relevant identities are straightforward to verify, here we present only the preservation of the triviality of the $h_{ijkl}$ cochains as it is the most involved one which makes use of the new structure when compared to the crossed module of Lie groups scenario, namely the bracket (\ref{eq:gauged-string-bracket}).

The cochains (\ref{eq:gauged-string-gauge-transformation-cochains1})-(\ref{eq:gauged-string-gauge-transformation-cochains2}) define, according to Eq.~(\ref{eq:ell-gauge-transformation}), the target cochain $\widetilde{h}_{ijkl}$ as
\begin{gather}
    \widetilde{h}_{ijkl} = h_{ijk}^{-1} h_{ikl}^{-1} h_{ijl} \Big\{\left(x_{ij},0,1\right), x'_{ij}\triangleright_1\left(0,m'_{jkl},z'_{jkl} \right) \Big\}^{-1} h_{jkl} 
    \\
    = \left(h_{ijk}^{-1} h_{ikl}^{-1} h_{ijl}h_{jkl}\right)  \left(\exp\left(-J\left(m'_{jkl},x_{ij} \right) \right) \right)^{-1} \nonumber
    \\
    = \left(\exp\left(J\left(m'_{ijk},x_{ij}\right) \right)\right)^{-1}  \exp\left(J\left(m'_{jkl},x_{ij} \right) \right) = 1 \nonumber.
\end{gather}

This shows that, at least at the level of bundles, the fields of the $\textbf{B}\cal H$ gauged theory are
\begin{equation}
    \textbf{Fields}_{\textbf{B}{\cal H}} = \textbf{H}\left(\Sigma, \textbf{B}{\cal G} \right)
\end{equation}
for $\cal G$ the smooth $3$-group (\ref{eq:g-gauged-string-2xmod}). 

What does this physically mean? First, we look at the topological side. We can gain some insight by computing the topological realization of $\textbf{B}{\cal G}$. We get a classifying space with only three potentially nontrivial higher homotopy groups computed as
\begin{eqnarray}
    \pi_3 \left(\vert \textbf{B}{\cal G}\vert \right) &=& \ker\left(\left(0,0,\text{id}_{U(1)} \right)\right) = 0,
    \\
    \pi_2 \left(\vert \textbf{B}{\cal G}\vert \right) &=& \frac{\ker\left( \text{id}_{\R}\times \delta\right)}{\text{Im}\left(\left(0,0,\text{id}_{U(1)} \right)\right)} = \Z,
    \\
    \pi_1 \left(\vert \textbf{B}{\cal G}\vert \right) &=& \text{coker}\left( \text{id}_{\R}\times \delta\right) = 1.
\end{eqnarray}
Furthermore, notice that the higher center symmetry group we gauged is an extension, albeit trivial, of $\R$ by $\textbf{B}U(1)$ (cf. Proposition~\ref{propo:h-tensor-string-action}). This extended $\R$ is precisely the non-faithfully acting $\R$ $1$-form symmetry of $U(1)$ gauge theory we encountered in Section~\ref{ssec:u1}. Therefore, we can interpret the gauging of $\cal H$ as the \textit{simultaneous gauging of the center higher-form symmetry} $\textbf{B}^2U(1)$ of the bundle gerbe part of $\suu$ bundles \textit{and of the non-faithfully acting $\R$ center $1$-form symmetry} of the $U(1)$ part, properly lifted to a symmetry of $\suu$ bundles.

\section{Future directions}\label{sec:conclusion}

In this paper, we have shown how higher smooth geometry, via automorphism stacks, naturally encodes a wide variety of higher-form symmetries, namely outer automorphism and center symmetires, without the need of precise knowledge of the fully quantum theory and its generalized operators. We have rigorously constructed these symmetries for pure gauge theory of strict smooth $2$-groups. We then applied this to a variety of scenarios of physical interest. We have rederived known results about pure gauge theory. Then, we analyzed the case of B fields, realized as bundle gerbes with connection. Finally, we constructed novel examples of higher-form symmetries, in particular center higher-form symmetries, for pure higher gauge theory of specific string $2$-groups.

There are many directions of further research. First, let us mention those related to the general formulation of Section~\ref{sec:general}. As mentioned there, especially in gauge theory, often the \textit{choice} of target as the moduli space of a specific (higher) group already encodes fields satisfying some physically-relevant constraints such as certain equations of motion. Nevertheless, especially geometric constraints, like those involving a metric via the Hodge star operator, cannot be encoded in such a way on the whole of spacetime\footnote{This is more subtle when working with subspaces, such as Cauchy surfaces \cite{Sati:2023mta}.}. Therefore, the higher-form symmetries derived via the higher automorphism group need to be furthermore compatible with such constraints. Since the Lagrangian principle is one of the most widespread source of physical constraints, it would be interesting to investigate further in explicit examples the interplay between these higher-form symmetries and Lagrangians. The relation between diffeomorphism and Lagrangians as higher bundle gerbes is explored in \cite{nuiten2013cohomological}, there known under the name of higher \textit{quantomorphisms}. Further, recently the authors of \cite{Giotopoulos:2023pvl, Giotopoulos:2025obs} have begun to formalize Lagrangian field theory in terms of modern higher smooth geometry. We expect this to be relevant for this research direction. In particular, this work should also allow to investigate the action of higher smooth \textit{super}groups and therefore of \textit{fermionic symmetries}.

Another line of inquiry related to the general formulation is the issue encountered in Section~\ref{sec:string2groups} (cf. Remark~\ref{rmk:mult-bundle-gerbes}) of lifting bundle symmetries to symmetries of bundles with adjusted connection. Having constructed in Section~\ref{sec:2-action}, at the general level of $2$-groupoids, a composition law of ${\rm Aut}\left( \textbf{B}{\cal G}\right)$-valued maps, and an action of these on principal $\cal G$ bundles for any strict Lie $2$-group $\cal G$, it is natural to consider the proper extension of these to bundles with adjusted connections, perhaps in the spirit of \cite{waldorf2010multiplicative}.

There are also concrete examples to explore. One of such is related to the center higher-form symmetry of $\suu$ presented in Section~\ref{sec:string2groups}. There, the $5$d analog of the ``gauge algebra'' of $11$d supergravity gauge potential was obtained. The direct generalization to $11$d of our analysis would suggest to choose as a target stack the moduli stack of principal $\mathscr{G}$ bundles, where $\mathscr{G}$ is a higher smooth central extension
\begin{equation}
    \textbf{ B}^6U(1)\to \textbf{B}{\mathscr{G}}\to \textbf{B}^3U(1) \xrightarrow{\textbf{cs}_7} \textbf{B}^7U(1),
    \end{equation}
whose extension class is the $7$d Chern-Simons form \cite{Fiorenza:2012tb}
\begin{equation}
    \textbf{cs}_7:\textbf{B}^3U(1) \to \textbf{B}^7U(1).
\end{equation}
However, there already exists a different proposal, known as \textit{Hypothesis H} \cite{Sati:2024klw}, for the quantization law/global gauge group quantizing the $11$d fluxes, the $4$-sphere $S^4$. It would be interesting to apply our analysis to this proposal and obtain the gauge algebra as coming from the automorphism higher-form symmetries.

Another example to consider is that of target stacks that are not moduli stacks, namely smooth higher group\textit{oids}. These are known to play a role in describing broken symmetries, Higgs fields, and curved Yang-Mills-Higgs theories \cite{Kotov:2015nuz,Fischer:2021glc,groupoid-conn}.

Lastly, on a wider scope, there is also the question about the relation between \textit{any} fully quantum higher-form symmetry, and the pre-quantum higher automorphisms described here. In \cite{Heidenreich:2020pkc}, the authors argue that characteristic classes of principal bundles describe symmetries in the quantum theory, with the Chern-Weil representatives providing the Noether currents. Now, as a consequence of interpreting gaugings as stacky quotients, all the target spaces for the gauged theories presented in this paper come equipped with a distinguished characteristic class, as can be seen from the sequence in Eq.~(\ref{eq:gauging-fibration}). This indicates that Chern-Weil symmetries may be better understood as quantum/dual symmetries of the higher-form symmetries described here. The subtlety is that, at the level of higher smooth geometry, a space equipped with a characteristic/extension class by itself does not describe a group action on the space, since by definition such a space would be the base space of a principal bundle, which is invariant under the group action. An actual action should be visible only at the quantum picture, specifically when \textit{stabilizing} or \textit{linearizing} (e.g. in the sense of \cite{Schreiber:2013pra,Schreiber:2014xva}). We hope to address this in future work.

\section{Acknowledgments}

We would like to thank Hisham Sati for suggesting the project, and Urs Schreiber for fruitful discussions especially about the content of Section~\ref{sec:2-action}. We would like to thank them and Eric Sharpe for their comments on the draft.

\appendix

\section{Adjusted principal $2$-bundles with connection}\label{app:adjbundles}

In this appendix, we summarize the specification of adjusted principal bundles of Lie $2$-groups with connection.

\subsection{Lie $2$-groups and algebras}\label{sapp:lie2gps}
The presentation that we will mostly use in this paper for smooth (or Lie) $2$-groups is that of crossed modules \cite{mac1998categories}, where the groups and homomorphisms involved are smooth.

A Lie $2$-group ${\cal G} = (G,H,\delta,\triangleright)$ consists of $G,H$ a pair of Lie groups, a Lie group homomorphism
\begin{equation}
    \delta: H \to G,
\end{equation}
 called the boundary map, and
\begin{equation}\label{eq:xmodact}
    \triangleright: G\times H \to H,
\end{equation}
 a smooth action of $G$ on $H$, such that the following two identities hold
\begin{equation}\label{eq:xmod1}
    \triangleright\circ \left( \delta\times\text{id}_H\right) = \text{Ad}_H: H\times H \to H,
\end{equation}
\begin{equation}\label{eq:xmod2}
    \delta\circ\triangleright = \text{Ad}_G \circ \left(\text{Id}_G\times \delta \right),
\end{equation}
for ${\rm Ad}_H, {\rm Ad}_G$ the adjoint actions of $H$ and $G$ on themselves.

Since its defining data is smooth, a crossed module of Lie groups gives, upon differentiation, a crossed module of Lie algebras $\mathscr{g}=({\frak g}, {\frak h}, \delta, \triangleright)$ where ${\frak g} = {\rm Lie}(G)$, ${\frak h} = {\rm Lie}(H)$, and by abuse of notation we denote the Lie algebra homomorphisms
\begin{equation}
    \delta: {\frak h} \to {\frak g},
\end{equation}
\begin{equation}
    \triangleright: {\frak g \otimes \frak h \to \frak h},
\end{equation}
where the latter is an action by derivations of $\frak g$ on $\frak h$, satisfying the (differential version of the) identities (\ref{eq:xmod1})-(\ref{eq:xmod2}).

\subsection{Adjustments}

A principal bundle of a smooth $2$-group ${\cal G}=(G,H,\delta, \triangleright)$ can be schematically (though not literally) thought of as a system of mutually twisting $BH$ and $G$ bundles. When it comes to connections as originally defined (e.g. \cite{Schreiber:2008kcv}) in these bundles, a condition known as \textit{fake-flatness} is imposed for consistency, where the curvature of the would-be principal $G$ bundle is required to match the connection form of the would-be $BH$ bundle. However, this condition is known \cite{Gastel:2018joi, samann2020towards} to locally reduce the $\cal G$ bundle with connection to a $BK$ bundle with connection, where $BK$ is the delooping of an abelian Lie group $K$, which for many purposes is too restrictive. 

To ameliorate this issue, the notion of \textit{adjustment} \cite{samann2020towards,Rist:2022hci} must be introduced.

An adjusted crossed module of Lie groups $({\cal G}, \kappa)$ is a crossed module of Lie groups ${\cal G} = (G,H,\delta,\triangleright)$ equipped with a smooth map
\begin{equation}
    \kappa: G \times {\frak g} \to {\frak h},
\end{equation}
linear in $\frak g$, satisfying the following conditions
\begin{equation}
 \kappa\left( \delta(h), V\right) = h\left(V \triangleright h^{-1} \right)   ,
\end{equation}
\begin{equation}
    \kappa\left(g_2g_1,V \right) = g_2\triangleright \kappa\left( g_1,V\right) + \kappa\left(g_2, g_1 V g_1^{-1}- \delta\left( \kappa\left(g_1,V\right)\right) \right),
\end{equation}
for $g_1,g_2 \in G$, $V\in {\frak g}$, $h\in H$, and
\begin{equation}
    \triangleright: G \times {\frak h} \to {\frak h}
\end{equation}
the infinitesimal action of $G$ on $\frak h$ obtained from the action (\ref{eq:xmodact}).

Upon differentiation, the adjustment $\kappa$ of ${\cal G} = (G,H, \delta,\triangleright)$ induces an adjustment on the corresponding crossed module of Lie algebras $\mathscr{g}=({\frak g}, {\frak h}, \delta, \triangleright)$, a map
\begin{equation}
    \kappa: {\frak g}\times {\frak g} \to {\frak h}
\end{equation}
which satisfies
\begin{equation}
    \kappa\left(\delta(W),V \right) = -V\triangleright W,
\end{equation}
\begin{gather}
 \nonumber   \kappa\left([V_1,V_2],V_3 \right) = V_1\triangleright\kappa\left(V_2,V_3 \right)  - V_2 \triangleright \kappa\left(V_1,V_3\right) + \kappa\left(V_1,[V_2,V_3] \right) - \kappa\left(V_2, [V_1,V_3] \right)\\ - \kappa\left(V_1, \delta\left(\kappa\left(V_2,V_3\right)\right) \right) + \kappa\left(V_2, \delta\left( \kappa\left(V_1,V_3\right)\right) \right) ,
\end{gather}
for $W,V_1,V_2,V_3\in {\frak g}$.

\subsection{Čech data of principal $2$-bundles with connection}

Given a crossed module of Lie groups with a \textit{choice} of adjustment, one can describe adjusted principal bundles with adjusted connections, gauge transformations, and gauge-of-gauge transformations. The contents of this section are taken from \cite[Section 2]{Rist:2022hci}.

For $\Sigma$ a smooth manifold with $\{U_i\}_{i \in {\cal I}}$ a good cover, and $({\cal G}, \kappa)$ an adjusted crossed module of Lie groups, the Čech data associated with a principal ${\cal G}=(G,H,\delta,\triangleright)$ bundle with adjusted connection over $\Sigma$ consists of 
\begin{enumerate}
    \item $G$-valued cochains $g_{ij}: U_{ij} \to G$,
    \item $H$-valued cochains $h_{ijk}: U_{ijk} \to  H$,
    \item $\frak g$-valued differential $1$-forms $a_i$,
    \item $\frak h$-valued differential $1$- and $2$- forms $\Lambda_{ij}, b_i$, respectively
\end{enumerate}
satisfying the following list of identities
\begin{equation}
    g_{ik} = \delta\left( h_{ijk}\right) g_{ij}g_{jk},
\end{equation}
\begin{equation}
    h_{ikl}h_{ijk} = h_{ijl}\,\left(g_{ij}\triangleright h_{jkl}\right),
\end{equation}
\begin{equation}
    \Lambda_{ik} = \Lambda_{jk} + g_{jk}^{-1}\triangleright \Lambda_{ij} - g_{ik}^{-1} \triangleright \left(h_{ijk} \left(d+A_i\triangleright \, \right)h^{-1}_{ijk} \right),
\end{equation}
\begin{equation}
    A_j = g_{ij}^{-1} A_i g_{ij} + g_{ij}^{-1} dg_{ij} - \delta\left(\Lambda_{ij}\right),
\end{equation}
\begin{equation}
    B_j = g_{ij}^{-1} \triangleright B_i + d\Lambda_{ij} + A_j\triangleright \Lambda_{ij} + \tfrac{1}{2}[\Lambda_{ij},\Lambda_{ij}] - \kappa\left(g_{ij}^{-1}, F_i \right),
\end{equation}
for $F_i$ the \textit{adjusted fake curvature}, a $\frak g$-valued differential $2$-form
\begin{equation}
    F_i:= d A_i + \tfrac{1}{2}[A_i,A_i] + \delta\left( B_i \right).
\end{equation}

The \textit{adjusted curvature} of this bundle is a $\frak h$-valued differential $3$-form $H_i$ defined as
\begin{equation}\label{eq:2bundle-adj-3curv}
    H_i := dB_i + A_i\triangleright B_i - \kappa\left(A_i,F_i \right).
\end{equation}

Given a pair $(g_{ij},h_{ijk}, A_i, B_i, \Lambda_{ij})$, $(g'_{ij},h'_{ijk}, A'_i, B'_i, \Lambda'_{ij})$ of adjusted principal $\cal G$ bundles with connection, an \textit{adjusted gauge transformation }
\begin{equation}
    (b,a,\lambda): (g_{ij},h_{ijk}, A_i, B_i, \Lambda_{ij}) \xrightarrow{\sim} (g'_{ij},h'_{ijk}, A'_i, B'_i, \Lambda'_{ij})
\end{equation}
consists of
\begin{enumerate}
    \item $H$-valued cochains $b_{ij}:U_{ij}\to H$,
    \item $G$-valued cochains $a_i: U_i\to G$, 
    \item $\frak h$-valued differential $1$-forms $\lambda_i$,
\end{enumerate}
satisfying the identities
\begin{equation}
    h'_{ijk} = a_i^{-1} \triangleright \left( b_{ik} h_{ijk} \left(g_{ij} \triangleright b_{jk}^{-1} \right) b_{ij}^{-1}\right),
\end{equation}
\begin{equation}
    g'_{ij} = a_i^{-1} \delta\left( b_{ij} \right) g_{ij} a_j,
\end{equation}
\begin{equation}
    \Lambda'_{ij} = a_j^{-1} \triangleright \Lambda_{ij} + \lambda_j - \left(g'_{ij}\right)^{-1}\triangleright\lambda_i + \left(a_j^{-1} g_{ij}^{-1} \right)\triangleright\left( b_{ij}^{-1} \left(d+ A_i \, \triangleright \right) ,b_{ij}\right)
\end{equation}
\begin{equation}
    A'_i = a_i^{-1} A_i a_i + a_i^{-1} d a_i - \delta\left(\lambda_i \right),
\end{equation}
\begin{equation}
    B'_i = a_i^{-1} \triangleright B_i + d\lambda_i + A'_i \triangleright\lambda_i + \tfrac{1}{2}[\lambda_i,\lambda_i] - \kappa\left(a_i^{-1},F_i \right).
\end{equation}

Lastly, we have adjusted equivalences, or \textit{adjusted gauge-of-gauge transformations}, for a pair $(a,b,\lambda)$, $(a',b',\lambda')$
\begin{equation}
    m : (a,b,\lambda) \xrightarrow{\sim} (a,b,\lambda),
\end{equation}
consisting of cochains $m_i: U_i \to H$ satisfying the identities
\begin{equation}
    b'_{ij} = m_{i} b_{ij} \left( g_{ij} \triangleright m_j^{-1}\right),
\end{equation}
\begin{equation}
    a'_i = \delta\left( m_i \right) a_i,
\end{equation}
\begin{equation}
    \lambda'_i = \lambda_i + a_i^{-1}\triangleright\left( m_{i} \left( d+ A_i\,\triangleright\right)m_i\right).
\end{equation}

\section{Background to strict $2$-groupoids}\label{app:2grpds}

In this Appendix, we compile relevant definitions and results from the theory of strict $2$-groupoids.

\begin{definition}[{cf. \cite[\S4.1]{JohnsonYau2021}}]\label{LaxFunctor}

Given ${\cal X,Y}\in \mathrm{Grpd}_{{}_2}$, a \textit{lax functor} consists of a function on the objects $F: {\rm ob}({\cal X}) \to {\rm ob}({\cal Y})$, and local functors
\begin{equation}\label{LaxFunctor-local}
    F: {\cal X}(x,y) \to {\cal Y}(Fx,Fy)
\end{equation}
with component $2$-cells of a natural transformation called the lax functoriality constraint
\begin{equation}\label{LaxFunctor-component2cell}
    F(f,g):Fg\circ Ff \xrightarrow{} F(g\circ f)
\end{equation}
\begin{equation} 
F_e(x):{\rm id}_{F(x)} \xrightarrow{} F({\rm id}_x)
\end{equation}
satisfying associativity
\begin{equation}\label{LaxFunctor-assoc}
    \begin{tikzcd}
\left(Fh \circ Fg\right) \circ Ff \arrow[dd, "{F(h,g)}"'] \arrow[rr, "{{\rm id}_{Fh}\star F(g,f)}"] &  & Fh\circ F(gf) \arrow[dd, "{F(h,gf)}"] \\
                                                                                                    &  &                                       \\
F(hg)\circ Ff \arrow[rr, "{F(hg,f)}"']                                                              &  & F\left((hg)f \right)                 
\end{tikzcd},
\end{equation}
for $\star$ the horizontal composition of $2$-morphisms, and left, right unity
\begin{equation}
    \begin{tikzcd}
{\rm id}_{F(x)}\circ Ff \arrow[dd, "F_e(x)\star {\rm id}_{Ff}"'] \arrow[rr, "\ell_{\cal Y}"] &  & Ff                                                &  & Ff \circ {\rm id}_{Fx} \arrow[rr, "r_{\cal Y}"]                                              &  & Ff                                                         \\
                                                                                             &  &                                                   &  &                                                                                              &  &                                                            \\
F({\rm id}_x)\circ Ff \arrow[rr, "{F({\rm id}_x,f)}"']                                       &  & F({\rm id}_x \circ f) \arrow[uu, "\ell_{\cal X}"] &  & Ff\circ F{\rm id}_x \arrow[rr, "{F(f,{\rm id}_x)}"'] \arrow[uu, "{\rm id}_{Ff}\star F_e(x)"] &  & F\left(f\circ{\rm id}_x \right) \arrow[uu, "Fr_{\cal X}"']
\end{tikzcd},
\end{equation}
where for strict $2$-groupoids the left and right unitors $\ell, r$ are trivial.

Further, a \textit{strict $2$-functor} is a lax functor for which the natural transformation $F(g,h)$ is the identity natural isomorphism.
\end{definition}

\begin{definition}[{cf. \cite[\S4.2]{JohnsonYau2021}}]
\label{Transformations}
Given $\mathcal{X}, \mathcal{Y} \in \mathrm{Grpd}_{{}_2}$ and $F,G : \mathcal{X} \longrightarrow \mathcal{Y}$ a pair of parallel 2-functors, a \emph{transformation} between them 
\begin{equation}
  \begin{tikzcd}
    \mathcal{X}
    \ar[
      rr,
      bend left=40,
      "{ F }"{description, name=s}
    ]
    \ar[
      rr,
      bend right=40,
      "{ G }"{description, name=t}
    ]
    \ar[
      from=s, to=t,
      Rightarrow, 
      "{ \eta }"
    ]
    &&
    \mathcal{Y}
  \end{tikzcd}
\end{equation}
is a natural transformation
\begin{equation}
    \eta: \eta(x_1)^*G \to \left(\eta(x_2)\right)_*F: {\cal X}\left(x_1,x_2\right) \to {\cal Y}\left( Fx_1, Gx_2\right),
\end{equation}
of the form
\begin{equation}
  \begin{tikzcd}
    x_1
    \ar[
      dd,
      "{ f }"{description}
    ]
    &
    F(x_1)
    \ar[
      dd,
      "F(f)"{description}
    ]
    \ar[
      rr,
      "{ \eta(x_1) }"{description}
    ]
    &&
    G(x_1)
    \ar[
      dd,
      "{ G(f) }"{description}
    ]
    \ar[
      ddll,
      shorten=15pt,
      Rightarrow,
      "{ \eta(f) }"{description}
    ]
    \\
    {}
    \ar[
      r,
      |->,
      shorten=15pt
    ]
    & {}
    \\
    x_2
    &
    F(x_2)
    \ar[
      rr,
      "{ \eta(x_2) }"{description}
    ]
    &&
    G(x_2)
  \end{tikzcd}  
\end{equation}
such that the following equations hold among 2-morphisms in $\mathcal{Y}$:
\begin{itemize}
\item[(i)] {\bf (naturality)}
\begin{equation}
  \begin{tikzcd}
    x_1
    \ar[
      dd,
      bend left=50,
      "{ f }"{description, pos=.4, name=s}
    ]
    \ar[
      dd,
      bend right=50,
      "{ g }"{description, pos=.6, name=t}
    ]
    \ar[
      from=s, to=t,
      Rightarrow,
      "{ \alpha }"{description}
    ]
    \\
    \\
    x_2
  \end{tikzcd}
  \mapsto
  \left(
  \begin{aligned}
  &
  \begin{tikzcd}
    F(x_1)
    \ar[
      dd,
      bend left=50,
      "{ F(f) }"{description, pos=.4, name=s}
    ]
    \ar[
      dd,
      bend right=50,
      "{ F(g) }"{description, pos=.6, name=t}
    ]
    \ar[
      from=s, to=t,
      Rightarrow,
      "{ F(\alpha) }"{pos=1}
    ]
    \ar[
      rr,
      "{ \eta(x_1) }"{description}
    ]
    &&
    G(x_1)
    \ar[
      dd,
      bend left=50,
      "{ G(f) }"{description}
    ]
    \ar[
      ddll,
      Rightarrow,
      shorten=15pt,
      shift right=5pt,
      bend left=20,
      "{ \eta(f) }"{description}
    ]
    \\
    \\
    F(x_2)
    \ar[
      rr,
      "{ \eta(x_2) }"{description}
    ]
    &&
    G(x_2)
  \end{tikzcd}
  \\
  =
  \\
  &
  \begin{tikzcd}
    F(x_1)
    \ar[
      dd,
      bend right=50,
      "{ F(g) }"{description, pos=.6, name=t}
    ]
    \ar[
      rr,
      "{ \eta(x_1) }"{description}
    ]
    &&
    G(x_1)
    \ar[
      dd,
      bend left=50,
      "{ G(f) }"{description, pos=.4, name=s}
    ]
    \ar[
      dd,
      bend right=50,
      "{ G(g) }"{description, pos=.6, name=t}
    ]
    \ar[
      from=s, to=t,
      Rightarrow,
      "{ G(\alpha) }"{pos=1}
    ]
    \ar[
      ddll,
      Rightarrow,
      shorten=15pt,
      shift left=5pt,
      bend right=20,
      "{ \eta(g) }"{description}
    ]
    \\
    \\
    F(x_2)
    \ar[
      rr,
      "{ \eta(x_2) }"{description}
    ]
    &&
    G(x_2)
  \end{tikzcd}
  \end{aligned}
  \right)
\end{equation}
\item[(ii)] {\bf (unitality)}
\begin{equation}
  x
  \;\mapsto\;
  \left(
  \begin{aligned}
  &
  \begin{tikzcd}
    F(x)
    \ar[
      dd,
      bend left=50,
      "{ \mathrm{id}_{F(x)} }"{description, pos=.4, name=s}
    ]
    \ar[
      dd,
      bend right=50,
      "{ F(\mathrm{id}_x) }"{description, pos=.6, name=t}
    ]
    \ar[
      from=s, to=t,
      Rightarrow,
      "{ F_{\!\mathrm{e}}\!(x) }"{pos=1.2}
    ]
    \ar[
      rr,
      "{ \eta(x) }"{description}
    ]
    \ar[
      ddrr,
      bend left=20,
      "{ \eta(x) }"{description, name=mid}
    ]
    &&
    G(x)
    \ar[
      dd,
      bend left=50,
      "{ \mathrm{id}_{G(x)} }"{description}
    ]
    \ar[
      from=mid,
      to=3-1,
      shorten=10pt,
      Rightarrow,
      bend left=20,
      shift right=3pt,
      "{
        \ell^{-1}
      }"
    ]
    \ar[
      from=1-3,
      to=mid,
      shorten=0pt,
      Rightarrow,
      bend left=20,
      shift right=-3pt,
      "{
        r
      }"
    ]
    \\
    & 
    \\
    F(x)
    \ar[
      rr,
      "{ \eta(x) }"{description}
    ]
    &&
    G(x)
  \end{tikzcd}
  \\
  =
  \\
  &
  \begin{tikzcd}
    F(x)
    \ar[
      dd,
      bend right=50,
      "{ F(\mathrm{id}_x) }"{description, pos=.6, name=t}
    ]
    \ar[
      rr,
      "{ \eta(x_1) }"{description}
    ]
    &&
    G(x)
    \ar[
      dd,
      bend left=50,
      "{ \mathrm{id}_{G(x)} }"{description, pos=.4, name=s}
    ]
    \ar[
      dd,
      bend right=50,
      "{ G(\mathrm{id}_x) }"{description, pos=.6, name=t}
    ]
    \ar[
      from=s, to=t,
      Rightarrow,
      "{ G_{\!\mathrm{e}}\!(x) }"{pos=1.3}
    ]
    \ar[
      ddll,
      Rightarrow,
      shorten=15pt,
      shift left=5pt,
      bend right=20,
      "{ \eta(\mathrm{id}_x) }"{description}
    ]
    \\
    \\
    F(x)
    \ar[
      rr,
      "{ \eta(x_2) }"{description}
    ]
    &&
    G(x)
  \end{tikzcd}
  \end{aligned}
  \right)
\end{equation}
\item[(iii)] {\bf (associativity)}
\begin{equation}\label{Transformation-Associativity}
  \hspace{-.6cm}
  \begin{tikzcd}[
    column sep=15pt
  ]
    x_1
    \ar[
      dr,
      "{ f }"{description}
    ]
    \ar[
      dd,
      "{ g \circ f }"{description}
    ]
    \\
    & 
    x_2
    \ar[
      dl,
      "{ g }"{description}
    ]
    \\
    x_3
  \end{tikzcd}
  \mapsto
  \left(
  \begin{aligned}
  &
  \begin{tikzcd}[
    column sep=30pt
  ]
    F(x_1)
    \ar[
      dr,
      "{ F(f) }"{description}
    ]
    \ar[
      dd,
      "{ F(g \circ f) }"{sloped, swap}
    ]
    \ar[
      rr,
      "{ \eta(x_1) }"{description}
    ]
    &&[-23pt]
    G(x_1)
    \ar[
      dl,
      Rightarrow,
      shorten=8,
      "{ \eta(f) }"
    ]
    \ar[
      dr,
      "{ G(f) }"{description}
    ]
    \\
    {}
    & 
    F(x_2)
    \ar[
      dl,
      "{ F(g) }"{description}
    ]
    \ar[
      l,
      Rightarrow,
      "{ F(f,g) }"{description, pos=.45}
    ]
    \ar[
      rr,
      "{ \eta(x_2) }"{description}
    ]
    &&
    G(x_2)
    \ar[
      dlll,
      Rightarrow,
      shorten=65,
      "{ \eta(g) }"{pos=.52}
    ]
    \ar[
      dl,
      "{ G(g) }"{description}
    ]
    \\
    F(x_3)
    \ar[
      rr,
      "{ \eta(x_3) }"{description}
    ]
    &&
    G(x_3)
  \end{tikzcd}
  \\
  & =
  \\
  &
  \begin{tikzcd}[
    column sep=30pt
  ]
    F(x_1)
    \ar[
      dd,
      "{ F(g \circ f) }"{sloped, swap}
    ]
    \ar[
      rr,
      "{ \eta(x_1) }"{description}
    ]
    &[+10pt]&
    G(x_1)
    \ar[
      ddll,
      Rightarrow,
      shorten=34pt,
      "{ \eta(g \circ f) }"
    ]
    \ar[
      dr,
      "{ G(f) }"{description}
    ]
    \ar[
      dd,
      "{ G(g \circ f) }"{description,sloped}
    ]
    \\
    & 
    &
    {}
    &
    G(x_2)
    \ar[
      dl,
      "{ G(g) }"{description}
    ]
    \ar[
      l,
      Rightarrow,
      "{ G(f,g) }"{description}
    ]
    \\
    F(x_3)
    \ar[
      rr,
      "{ \eta(x_3) }"{description}
    ]
    &&
    G(x_3)
  \end{tikzcd}
  \end{aligned}
  \right)
\end{equation}
\end{itemize}
\end{definition}
\begin{definition}[{cf. \cite[\S12.2]{JohnsonYau2021}}]\label{GrayTensorProduct}
    Given $\mathcal{X}, \mathcal{Y} \in \mathrm{Grpd}_{{}_2}$, their \textit{Gray tensor product} ${\cal X}\otimes_G {\cal Y}$ is defined by
    \begin{itemize}
        \item objects ${\rm ob}({\cal X}\otimes_{G} {\cal Y}) = {\rm ob}({\cal X})\times {\rm ob}({\cal Y})$,
        \item $1$-morphisms generated by the composition of
        \begin{eqnarray}
            f\otimes_ y:& x\otimes y \to x'\otimes y
            \\
            x \otimes g:& x\otimes y \to x \otimes y',
        \end{eqnarray}
        for $f:x\to x'$, $g:y \to y'$, satisfying
        \begin{equation}
            {\rm id}_x \otimes y = {\rm id}_{x\otimes y} = x\otimes {\rm id}_y,
        \end{equation}
        \begin{equation}
\begin{tikzcd}
x\otimes y \arrow[dd, "f \otimes y"'] \arrow[rrdd, "(f'f )\otimes y"] &  &              &  & x\otimes y \arrow[dd, "x\otimes g"'] \arrow[rrdd, "x\otimes (g'g)"] &  &              \\
                                                                      &  &              &  &                                                                     &  &              \\
x'\otimes y \arrow[rr, "f'\otimes y"']                                &  & x''\otimes y &  & x\otimes y' \arrow[rr, "x\otimes g'"']                              &  & x\otimes y''
\end{tikzcd}
        \end{equation}
\item $2$-cells are generated through vertical and horizontal composition of equivalence classes of generating $2$-cells
\begin{eqnarray}
    \alpha \otimes y &:& f_1 \otimes y \to f_2 \otimes y,
    \\
    x \otimes \beta &:& x \otimes g_1 \to x\otimes g_2,
    \\
    \Sigma_{f,g} &:& (f\otimes y')(x\otimes g) \to (x' \otimes g)(f \otimes y), \label{proto-2-cells1}
    \\
    \Sigma_{f,g}^{-1} &:& (x' \otimes g)(f \otimes y) \to (f\otimes y')(x\otimes g),  \label{proto-2-cells2}
\end{eqnarray}
satisfying the conditions
\begin{equation}
    {\rm id}_f \otimes y = {\rm id}_{f\otimes y} \ \ \ {\rm and} \ \ \ x\otimes {\rm id}_g = {\rm id}_{x\otimes g},
\end{equation}
\begin{equation}
  \begin{tabular}{ccc}
     \begin{tikzcd}
    x \otimes y
    \ar[
      rr,
      bend left=40,
      "{f_1 \otimes y}"{description, name=s}
    ]
    \ar[
      rr,
      bend right=40,
      "{f_2 \otimes y}"{description, name=t}
    ]
    \ar[
      from=s, to=t,
      Rightarrow,
      "{ \alpha \otimes y }"
    ]
    &&
    x' \otimes y
    \ar[
      rr,
      bend left=40,
      "{f_1' \otimes y}"{description, name=s}
    ]
    \ar[
      rr,
      bend right=40,
      "{f_2' \otimes y}"{description, name=t}
    ]
    \ar[
      from=s, to=t,
      Rightarrow,
      "{ \alpha' \otimes y }"
    ]
    &&
    x'' \otimes y
  \end{tikzcd}    &  $=$ & \begin{tikzcd}
       x \otimes y
    \ar[
      rr,
      bend left=40,
      "{(f_1'f_1) \otimes y}"{description, name=s}
    ]
    \ar[
      rr,
      bend right=40,
      "{(f_2'f_2) \otimes y}"{description, name=t}
    ]
    \ar[
      from=s, to=t,
      Rightarrow,
      "{(\alpha' \star \alpha) \otimes y }"
    ]
    &&
  \ \ \ \ \ \  x'' \otimes y
  \end{tikzcd} 
  \end{tabular}
,\end{equation} 
\begin{equation}
  \begin{tabular}{ccc}
     \begin{tikzcd}
    x \otimes y
    \ar[
      rr,
      bend left=40,
      "{x \otimes g_1}"{description, name=s}
    ]
    \ar[
      rr,
      bend right=40,
      "{x \otimes g_2}"{description, name=t}
    ]
    \ar[
      from=s, to=t,
      Rightarrow,
      "{ x \otimes \beta }"
    ]
    &&
    x \otimes y'
    \ar[
      rr,
      bend left=40,
      "{x \otimes g_1'}"{description, name=s}
    ]
    \ar[
      rr,
      bend right=40,
      "{x \otimes g_2'}"{description, name=t}
    ]
    \ar[
      from=s, to=t,
      Rightarrow,
      "{ x \otimes \beta' }"
    ]
    &&
    x \otimes y''
  \end{tikzcd}    &  $=$ & \begin{tikzcd}
       x \otimes y
    \ar[
      rr,
      bend left=40,
      "{x \otimes (g_1'g_1)}"{description, name=s}
    ]
    \ar[
      rr,
      bend right=40,
      "{x \otimes (g_2'g_2}"{description, name=t}
    ]
    \ar[
      from=s, to=t,
      Rightarrow,
      "{x \otimes (\beta'\star \beta) }"
    ]
    &&
  \ \ \ \ \ \  x \otimes y''
  \end{tikzcd} 
  \end{tabular}
,\end{equation}
subject to the following equivalence relations, closed under vertical composition
\begin{equation}
    \Sigma_{f,g}\circ \Sigma^{-1}_{f,g} \sim {\rm id}_{(x'\otimes g)(f\otimes y)}, \: \: \: \: \: \Sigma^{-1}_{f,g} \circ \Sigma_{f,g} \sim {\rm id}_{(f\otimes y')(x\otimes g)},
\end{equation}
\begin{equation}
    \begin{tabular}{ccc}
     \begin{tikzcd}
    x \otimes y
    \ar[
      rr,
      bend left=40,
      "{f_1 \otimes y}"{description, name=s}
    ]
    \ar[
      rr,
      bend right=40,
      "{f_3 \otimes y}"{description, name=t2}
    ]
    \ar[
    rr,
    "{f_2 \otimes y}"
    {description, name = t1}
    ]
    \ar[
      from=s, to=t1,
      Rightarrow,
      "{ \alpha \otimes y }"
    ]
     \ar[
      from=t1, to=t2,
      Rightarrow,
      "{ \alpha' \otimes y }"
    ]
    &&
    x' \otimes y
  \end{tikzcd}       & $\sim$ &  \begin{tikzcd}
    x \otimes y
    \ar[
      rr,
      bend left=40,
      "{f_1 \otimes y}"{description, name=s}
    ]
    \ar[
      rr,
      bend right=40,
      "{f_3 \otimes y}"{description, name=t2}
    ]
    \ar[
      from=s, to=t2,
      Rightarrow,
      "{ \alpha'\alpha \otimes y }"
    ]
    &&
    \ x' \otimes y
    \end{tikzcd}
    \end{tabular}
,\end{equation}
\begin{equation}
    \begin{tabular}{ccc}
     \begin{tikzcd}
    x \otimes y
    \ar[
      rr,
      bend left=40,
      "{x \otimes g_1}"{description, name=s}
    ]
    \ar[
      rr,
      bend right=40,
      "{x \otimes g_3}"{description, name=t2}
    ]
    \ar[
    rr,
    "{x \otimes g_2}"
    {description, name = t1}
    ]
    \ar[
      from=s, to=t1,
      Rightarrow,
      "{ x \otimes \beta }"
    ]
     \ar[
      from=t1, to=t2,
      Rightarrow,
      "{ x \otimes \beta' }"
    ]
    &&
    x \otimes y'
  \end{tikzcd}       
  &  $\sim$ &  
  \begin{tikzcd}
    x \otimes y
    \ar[
      rr,
      bend left=40,
      "{x \otimes g_1}"{description, name=s}
    ]
    \ar[
      rr,
      bend right=40,
      "{x \otimes g_3}"{description, name=t2}
    ]
    \ar[
      from=s, to=t2,
      Rightarrow,
      "{ x \otimes \beta'\beta }"
    ]
    &&
    \ x \otimes y'
    \end{tikzcd}
    \end{tabular}
,\end{equation}
\begin{equation}
    \left( \Sigma_{f',g} \star \left( {\rm id}_f \otimes y \right)\right)\left(\left( {\rm id}_{f'} \otimes y'\right) \star \Sigma_{f,g} \right) \sim \Sigma_{f'f,g},
\end{equation}
\begin{equation}
    \left(\left( x' \otimes {\rm id}_{g'}\right) \star \Sigma_{f,g}\right) \sim \left( \Sigma_{f,g'} \star \left( x \otimes {\rm id}_g \right) \right) \sim \Sigma_{f, g'g},
\end{equation}
\begin{gather}
    \left( \left( x'' \otimes {\rm id}_{g'} \right) \star \left( {\rm id}_{f'} \otimes y' \right) \star \Sigma_{f,g} \right) \left( \Sigma_{f',g'} \star \left( {\rm id}_{f} \otimes y' \right) \star \left( x \otimes  {\rm id}_g \right) \right) \sim
    \\
    \left( \Sigma_{f',g'} \star \left( x' \otimes {\rm id}_g \right) \star \left( {\rm id}_f \otimes y \right) \right) \left( \left( {\rm id}_{f'} \otimes y'' \right) \star \left( x' \otimes {\rm id}_{g'}\right) \star \Sigma_{f,g} \right),\nonumber
\end{gather}
\begin{equation}
    \left(\left( x' \otimes \beta\right) \star \left( \alpha \otimes y \right) \right) \Sigma_{f,g} \sim \Sigma_{f',g'} \left( \left( \alpha \otimes y' \right) \star \left( x \otimes \beta \right) \right),
\end{equation}
\begin{equation}
    \left( {\rm id} \star \Sigma \right)\left(\Sigma' \star {\rm id}\right) \sim \left( \Sigma' \star \Sigma\right) \sim \left(\Sigma' \star {\rm id}\right)\left({\rm id}\star \Sigma \right),
\end{equation}
for $\Sigma, \Sigma'$ composable generating $2$-cells of the kind (\ref{proto-2-cells1}, \ref{proto-2-cells2}).
    \end{itemize}

\end{definition}

\section{Background to String $2$-groups}\label{app:string2gps}

In this appendix, we collect known mathematical results on String $2$-groups used in Section~\ref{sec:string2groups}.

\subsection{$\String{G}$ as a higher central extension}\label{sapp:stringextension}
Here, we recall how String groups are understood as smooth higher central extensions. 

Briefly, group extensions (see e.g. \cite{Weibel_1994}) are exact sequences of groups
\begin{equation}
    1 \to K \to \Gamma \to G \to 1.
\end{equation}
When $K\leq Z(\Gamma)$  is central in $\Gamma$, such extensions are called central and are classified by the second group cohomology group $H^2_{\rm gp}(G,K)$ for $K$ equipped with the trivial $G$-module structure. Topologically, this cohomology group can be computed as
\begin{equation}
    H^2_{\rm gp}(G,K) \cong \pi_0 \left(\text{Map}\left(BG,B^2K\right)\right) \cong H^2_{\rm sing}(BG,K).
\end{equation}
Similarly, elements of the third cohomology group $H^3(G,K)$, which are likewise computed as 
\begin{equation}
    H^3_{\rm gp}(G,K) \cong H^3_{\rm sing}(BG,K),
\end{equation}
classify sequences of the form
\begin{equation}\label{eq:xmod}
    1\to K \to \Gamma_1 \to \Gamma_0\to G\to 1.
\end{equation}
Since $\Gamma:= (\Gamma_1\to \Gamma_0)$ forms a crossed module, the sequence (\ref{eq:xmod}) is equivalently understood as a central extension of 2-groups
\begin{equation}
    1 \to BK \to \Gamma \to G \to 1,
\end{equation}
where $BK$ is the delooping of $K$ regarded as a $2$-group, and $G$ is a discrete $2$-group, meaning it does not have nontrivial morphisms.

It is known (e.g. \cite{mimura1991topology})  that for $G$ a compact connected Lie group, it satisfies
\begin{equation}
    H^3_{\rm sing}(BG,U(1))\cong H^4_{\rm sing}(BG,\Z),
\end{equation}
so that the fourth integral cohomology group classifies the $2$-group extensions of $G$ as a topological group by $\textbf{B}U(1)$. If $G$ is, furthermore, simple, then
\begin{equation}
    H^3_{\rm sing}(BG,U(1))\cong H^4_{\rm sing}(BG,\Z)\cong \Z.
\end{equation}

The $2$-group extension admits a smooth refinement \cite{brylinski2000differentiablecohomologygaugegroups,segalcohomology,schommer2011central,FSS12} (review in \cite{Kang:2023uvm}) to a \textit{smooth 2-group central extension}
\begin{equation}
    1\to \textbf{B}U(1) \to \String{G_k}\to G_k \to 1 \ \  {\rm (smooth)}
\end{equation}
where the notation $G_k$ means the Lie group $G$ along with the extension class 
\begin{equation}\label{eq:smooth3cohomology}
k\in H^3_{\rm sm}(G,U(1))\cong \pi_0\left(\textbf{H}\left(\textbf{B}G,\textbf{B}^3U(1)\right)\right)
\end{equation}
in smooth group cohomology. The resulting extension $\String{G_k}$ is called the \textit{string $2$-group of} $G_k$. 

In particular, if $G$ is simple, $H^3(G,U(1))\cong \Z$ and thus the extension class $k$ is called the \textit{level}.

\subsection{The center of String groups}\label{sapp:centerstring}

The center of $\String{G_k}$, a nontrivial smooth $2$-group, is denoted as ${\cal Z}({\rm String}(G_k))$ and is generally a \textit{braided} $2$-group. Regarding a smooth $2$-group $\cal G$ as a smooth groupoid $\underline{\cal G}$ equipped with a smooth composition law $\otimes: \underline{\cal G}\times \underline{\cal G}\to \underline{\cal G}$ under which all objects $X\in {\rm ob}(\underline{\cal G})$ are invertible, a braided $2$-group is a $2$-group $\underline{\cal G}$ equipped with a family of natural isomorphisms 
\begin{equation}
    b_{-,X}: -\otimes X\xrightarrow{\sim} X \otimes -,
\end{equation}
satisfying the hexagon condition (see e.g. \cite{street2006characterization} for the relevant identities).

In \cite{2022arXiv220201271W}, the center of string groups is computed in great generality, which we use as a starting point for two concrete examples.

\subsubsection{$G$ compact simple simply-connected}

\begin{fact*}[Braided center of $\String{G_k}$ for $G$ compact simple simply-connected) \cite{2022arXiv220201271W}]\label{fact:stringcentersimple}

Let $G$ be a compact simple simply-connected Lie group, and $k\in \Z \cong H^3_{\rm sm}(G,U(1))$ a smooth cohomology class which defines the higher central extension
\begin{equation}
    1 \to \textbf{B}U(1) \to \String{G_k}\to G_k \to 1.
\end{equation}
The \textit{(Drinfeld) center} $Z(\String{G_k})$ of the string smooth 2-group $\String{G_k}$ is a smooth groupoid ${\cal Z}(G,k):=\underline{Z(\String{G_k})}$ with
\begin{eqnarray}
    \pi_0\left({\cal Z}(G,k)\right) &=& Z(G),
    \\
    \pi_1\left({\cal Z}(G,k)\right) &=& U(1),
\end{eqnarray}
whose braiding is determined by the quadratic form
\begin{eqnarray}
    q&:& Z(G) \to U(1),\label{eq:braiding1}
    \\
    && z \mapsto \exp \tfrac{k}{2}I(\bar{z},\bar{z}), \label{eq:braiding2}
\end{eqnarray}
where $I(-,-):{\frak g}_{\C}\otimes {\frak g}_{\C}\to \C$ is the
unique smallest $\text{Ad}_G$-invariant positive-definite form satisfying $I(z, z) \in 2\Z$ for all coroots of ${\frak g}_{\C}$ the complexification of $\frak g$ the Lie algebra of $G$, and $\bar{z}$ is a lift of $Z(G)$ to ${\frak g}_{\C}$.
\end{fact*}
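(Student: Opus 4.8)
The plan is to reconstruct the result along the lines of \cite{2022arXiv220201271W}, by identifying the Drinfeld center ${\cal Z}(G,k)$ with the one-object sub-$2$-groupoid of ${\rm Aut}(\textbf{B}\String{G_k})$ lying over the identity endofunctor (Remark~\ref{rmk:drinfeld-center}): an object is an object $X$ of $\String{G_k}$ equipped with a half-braiding, a morphism is a morphism of $\String{G_k}$ intertwining the half-braidings, and the induced composition makes this a braided smooth $2$-group. Everything then reduces to computing the two homotopy groups and the braiding.

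First I would compute the homotopy groups. A half-braiding forces $X\otimes(-)\cong(-)\otimes X$ naturally, hence forces $[X]\in Z(\pi_0\String{G_k})=Z(G)$ on connected components; this is the finite group $P^\vee/Q^\vee$ of coweights modulo coroots. Conversely, a central element $z\in Z(G)$ lifts to an object of $\String{G_k}$ --- conveniently through the restriction of the extension to a maximal torus, the categorical torus $\String{T_k}$ --- and the obstruction to promoting it to a half-braiding vanishes because the level class restricted to the finite center is trivialisable at the level of objects, the remaining freedom being precisely a tensor factor of $\textbf{B}U(1)$, i.e.\ an element of $\pi_1\String{G_k}=U(1)$. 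This yields $\pi_0({\cal Z}(G,k))=Z(G)$ and, since the automorphisms of the monoidal unit of a braided $2$-group coincide with those of the ambient $2$-group, $\pi_1({\cal Z}(G,k))=U(1)$.

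Next I would extract the braiding. A braided smooth $2$-group with $\pi_0$ a finite abelian group $A$ and $\pi_1=U(1)$ is classified by a quadratic function $q\colon A\to U(1)$ (Joyal--Street, equivalently Eilenberg--MacLane's $H^3_{\rm ab}(A,U(1))$), with $q$ recovered as the self-braiding $\beta_{X,X}$ read off in $\pi_1$. The substance of the statement is the identification of $q$: the level $k\in H^3_{\rm sm}(G,U(1))\cong H^4_{\rm sing}(BG,\Z)\cong\Z$ is represented, after transgression, by the ${\rm Ad}_G$-invariant symmetric bilinear form $k\,I$ on ${\frak g}_{\C}$, where $I$ is the basic inner product normalised so that $I(z,z)\in 2\Z$ on coroots. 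Carrying the associator data determined by $kI$ down to lifts $\bar z\in{\frak g}_{\C}$ of central elements produces $q(z)=\exp\tfrac{k}{2}I(\bar z,\bar z)$; independence of the lift holds because replacing $\bar z$ by $\bar z$ plus a coroot shifts $\tfrac{k}{2}I(\bar z,\bar z)$ by $k\,I(\bar z,\text{coroot})+\tfrac{k}{2}I(\text{coroot},\text{coroot})$, an integer (indeed an integer multiple of $k$), hence invisible to $\exp$ --- this is exactly where the $2\Z$-integrality of $I$ is used. Bilinearity of $I$ then gives the hexagon identity in the form $q(z_1z_2)q(z_1)^{-1}q(z_2)^{-1}=\exp\!\big(k\,I(\bar z_1,\bar z_2)\big)$, confirming that $q$ is quadratic and so defines a genuine braiding.

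The main obstacle is this middle step: pinning down the precise normalisation of $q$ in terms of $k$ and $I$, and checking well-definedness with respect to the lift $Z(G)\hookrightarrow{\frak g}_{\C}$. This forces one to use a concrete model of $\String{G_k}$ (a loop-group or crossed-module presentation, or the categorical-torus description of its restriction to $T$) in order to make the transgression of the level class explicit, rather than arguing purely homotopy-theoretically. By contrast, the determination of $\pi_0$ and $\pi_1$, and the verification of the hexagon from bilinearity of $I$, are comparatively routine once the center has been presented as above.
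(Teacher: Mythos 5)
First, note that the paper itself contains no proof of this statement: it is imported as a Fact, with attribution, from \cite{2022arXiv220201271W}, so there is no in-paper argument to compare against; your sketch can only be judged as a reconstruction of the cited computation. As such it follows the same general route as that reference: realize the Drinfel'd center inside ${\rm Aut}(\textbf{B}\String{G_k})$ as in Remark~\ref{rmk:drinfeld-center}, reduce to the restriction over a maximal torus (the categorical torus recorded in Appendix~\ref{sapp:centerstring}), and invoke the Eilenberg--MacLane/Joyal--Street classification of braided $2$-groups with finite $\pi_0$ and $\pi_1=U(1)$ by quadratic forms. Your consistency checks are correct: $\pi_1({\cal Z})\cong{\rm Aut}(\mathbf{1})=U(1)$ because every automorphism of the unit is automatically compatible with the trivial half-braiding, and the independence of $q(z)=\exp\tfrac{k}{2}I(\bar z,\bar z)$ from the lift uses exactly $I(Q^\vee,Q^\vee)\subseteq 2\Z$ and $I(P^\vee,Q^\vee)\subseteq\Z$, with the polarization identity giving the bilinear pairing $\exp\left(k\,I(\bar z_1,\bar z_2)\right)$.

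Two steps, however, are asserted rather than proved, and they carry the actual content. (i) Surjectivity of $\pi_0({\cal Z}(G,k))\to Z(G)$: the phrase ``the obstruction \ldots vanishes because the level class restricted to the finite center is trivialisable at the level of objects'' is not an argument. What is needed is that, for a fixed central $z$, the obstruction to equipping a lift with a half-braiding is the contraction of the level class with $z$, a class in smooth (Segal--Mitchison) cohomology $H^2_{\rm sm}(G,U(1))$, which vanishes for $G$ compact, simple and simply connected; and $H^1_{\rm sm}(G,U(1))=\mathrm{Hom}(G,U(1))=0$ then gives essential uniqueness, so that $\pi_0$ is all of $Z(G)$ and nothing more. (ii) The precise normalization $q(z)=\exp\tfrac{k}{2}I(\bar z,\bar z)$ --- which is the substance of the Fact --- is exactly the step you yourself defer to ``a concrete model''; without carrying out that transgression computation in, say, the crossed-module/categorical-torus presentation (as \cite{2022arXiv220201271W} does), your argument only shows that the braiding is \emph{some} quadratic form on $Z(G)$, not this one. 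So the skeleton is sound and matches the cited reference, but as written the proposal is a correct outline with its two decisive computations left open.
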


\subsubsection{$G=T$}

\begin{fact*}[Braided center of $\String{T_k}$ \cite{2022arXiv220201271W}]

Let $G= T=U(1)^r$ be the torus of rank $r$, $\Lambda={\rm Hom}(T,U(1))$ its group of characters, and $\Pi=\Lambda^*={\rm Hom}(\Lambda,U(1))$ its group of cocharacters. Let $J:{\frak t}\times {\frak t}\to \R$ the $\Z$-bilinear form satisfying the integrality condition that determines the extension class $k\in H^4(T,\Z)$ of its string $2$-group $\String{T_k}$. Its center ${\cal Z}_k:={\cal Z}(\String{T_k})$ is the braided smooth $2$-group presented as the crossed module of Lie groups (cf. Appendix~\ref{sapp:lie2gps})
\begin{eqnarray}
    \delta&:& \Pi\times U(1) \to {\frak t}\oplus \Lambda,
    \\
    && (m,z) \mapsto (m,\tau(m))
\end{eqnarray}
for
\begin{equation}
    \tau(m) := -\left(J(m,-)+J(-,m)\right),
\end{equation}
 action
\begin{eqnarray}
    \alpha:& \left({\frak t}\oplus \Lambda \right)\times \left(\Pi\times U(1) \right) &\to \left(\Pi\times U(1) \right),
    \\
    & \left( (x,\lambda),(m,z)\right) &\mapsto  \left(m,z\exp\left(J(m,x)\right)\right),
\end{eqnarray}
and braiding
\begin{equation}
    \beta_{[x,\lambda],[x',\lambda']}:= \lambda(x')\exp\left(J(x',x)\right).
\end{equation}

This center fits in the exact sequence of smooth $2$-groups
\begin{equation}\label{eq:zkextension}
    1\to \textbf{B}U(1) \to {\cal Z}_k \to \frac{{\frak t}\oplus \Lambda}{\Pi} \to 1,
\end{equation}
where the rightmost group is
\begin{equation}\label{eq:pi0zk}
    \frac{{\frak t}\oplus \Lambda}{\Pi} = {\rm coker}\,\delta = \pi_0\left({\cal Z}_k\right).
\end{equation}

\end{fact*}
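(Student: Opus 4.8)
The plan is to compute the Drinfeld center directly from the strict crossed-module presentation of $\String{T_k}$, using the standard description of the center of a monoidal category (objects are pairs of an object and a half-braiding; morphisms are the compatible morphisms; the braiding is tautological) from \cite{etingof2015tensor}. Concretely, I first model $\String{T_k}$ as the one-object strict $2$-groupoid of its crossed module $(\delta:\Pi\times U(1)\to{\frak t})$, so that its hom-monoidal-groupoid has objects ${\frak t}$, morphisms $(m,z)\in\Pi\times U(1)$ realizing an arrow $x\to x+m$, and tensor product of morphisms twisted through the action $x\triangleright(m,z)=(m,z\exp(J(m,x)))$ dictated by the extension class $J$ (exactly as recorded for the rank-one case in Eq.~(\ref{eq:suu-xmod})). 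All of the nontriviality of the center will come from this $J$-twist in the horizontal composition.

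The central computation is the identification of objects. Because $\pi_0\big(\String{T_k}\big)={\frak t}/\Pi$ is abelian, one has $X\otimes Y=Y\otimes X$ on the level of objects, so a half-braiding on an object $x\in{\frak t}$ is a natural family of \emph{automorphisms}, i.e.\ a single smooth function $\gamma:{\frak t}\to U(1)$ with $\gamma_Y=(0,\gamma(y))$. I then extract two constraints. Naturality against the arrow $(m,z):y\to y+m$ forces the $J$-twisted $\Pi$-quasiperiodicity $\gamma(y+m)=\gamma(y)\exp\!\big(\pm J(m,x)\big)$, while the hexagon identity forces $\gamma$ to be a homomorphism, $\gamma(y+z)=\gamma(y)\gamma(z)$. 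Smoothness then forces $\gamma$ to be a genuine continuous character of ${\frak t}$, and comparing two solutions on the same $x$ shows their ratio is a smooth homomorphism ${\frak t}\to U(1)$ trivial on $\Pi$, hence an element of $\Lambda={\rm Hom}(T,U(1))$. Fixing the basepoint half-braiding $\gamma_0^{(x)}(y)=\exp\!\big(J(y,x)\big)$ exhibits the half-braidings on $x$ as a $\Lambda$-torsor, so that every object of the center is $(x,\lambda)$ with $\gamma^{(x,\lambda)}(y)=\lambda(y)\exp\!\big(J(y,x)\big)$, identifying the group of objects with ${\frak t}\oplus\Lambda$.

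It remains to read off the remaining structure. A morphism $(x,\lambda)\to(x',\lambda')$ is an arrow $(m,z)\in\Pi\times U(1)$ with $x'=x+m$ that intertwines the two half-braidings; writing out the intertwining identity and cancelling the $\exp\!\big(J(\,\cdot\,,x)\big)$ factors yields $\lambda'=\lambda+\tau(m)$ with $\tau(m)=-(J(m,-)+J(-,m))$, which is exactly the boundary $\delta(m,z)=(m,\tau(m))$ and simultaneously records $\ker\delta=U(1)$ (the $z$-direction) and ${\rm coker}\,\delta=({\frak t}\oplus\Lambda)/\Pi$, giving the extension (\ref{eq:zkextension}) and (\ref{eq:pi0zk}). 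The crossed-module action $\alpha\big((x,\lambda),(m,z)\big)=(m,z\exp(J(m,x)))$ is inherited verbatim from the $\triangleright$-action of $\String{T_k}$ on its own morphisms, and the braiding is tautological: by definition of the center, $\beta_{[x,\lambda],[x',\lambda']}$ is the half-braiding $\gamma^{(x,\lambda)}$ evaluated at the underlying object $x'$, which is precisely $\lambda(x')\exp\!\big(J(x',x)\big)$. Since the Drinfeld center of any monoidal category is automatically braided, the hexagon axioms for $\beta$ and the Peiffer/crossed-module identities hold for free, though they can also be verified directly from the formulas.

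The main obstacle I anticipate is bookkeeping rather than conceptual: keeping the $J$-twist consistent through the (noncommutative) horizontal composition, and fixing, once and for all, the sign conventions. Indeed, the half-braiding basepoint $\exp\!\big(\pm J(y,x)\big)$ and the orientation of the intertwining identity both feed into the precise form of $\tau$ and of $\beta$, and — as the paper's own convention remark after (\ref{eq:centerk}) flags — these are exactly the places where a choice of convention ($\lambda$ versus $\lambda^{-1}$, $J(x',x)$ versus $J(x,x')$) enters. The one genuinely non-formal input is the smoothness reduction in the objects step: it is what collapses the a priori enormous space of abstract homomorphisms ${\frak t}\to U(1)$ down to the character lattice $\Lambda$, and hence what makes the center a finite-type smooth $2$-group rather than a pathological one.
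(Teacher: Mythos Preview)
The paper does not prove this statement: it is presented as a cited \emph{Fact} from \cite{2022arXiv220201271W}, with no argument given beyond the reference. So there is no ``paper's own proof'' to compare against.

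That said, your approach is correct and is precisely the one the paper's own framework would suggest. Your computation is an explicit unpacking, for the crossed module $(\delta:\Pi\times U(1)\to{\frak t})$, of the general description of the Drinfeld center given in Remark~\ref{rmk:drinfeld-center}: objects of ${\cal Z}({\cal G})$ are endo-natural-isomorphisms $(\zeta_\bullet,\beta_\zeta)$ of ${\rm Id}_{\textbf{B}{\cal G}}$ satisfying (\ref{ZXModTrans1})--(\ref{ZXModTrans3}), and morphisms are the modifications (\ref{ZXModModif}). Specializing those identities to $G={\frak t}$, $H=\Pi\times U(1)$ with the $J$-twisted action reproduces exactly your naturality and hexagon constraints on $\gamma$, and the $\Lambda$-torsor structure on half-braidings. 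The one substantive analytic input you correctly isolate --- that a smooth homomorphism ${\frak t}\to U(1)$ which is $\Pi$-periodic descends to a character in $\Lambda$ --- is what makes the answer finite-rank, and is implicit in the smooth-stacks setting the paper works in throughout. Your sign caveats are well placed and match the paper's own convention remark after (\ref{eq:centerk}).
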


\section{Principal $3$-group bundles}\label{app:principal-3-bundles}

In this appendix, we collect the definitions necessary to define principal bundles of smooth $3$-groups presented as $2$-crossed modules of Lie groups.

\subsection{Crossed $2$-modules of Lie groups}

Certain smooth $3$-groups allow a presentation in terms of crossed $2$-modules of Lie groups. In the following, all groups involved are Lie groups, and the group homomorphisms are Lie group homomorphims.

A crossed $2$-module
\begin{equation}
    {\cal G} = \left(G,H,L,\partial_1,\partial_2, \triangleright_1, \triangleright_2,\{-,-\} \right)
\end{equation}
consists of \cite{baues1991combinatorial}, \cite[Appendix B]{Gagliardo:2025oio}: three Lie groups $G,H,L$, homomorphisms
\begin{equation}
    \partial_1: H\to G,
\end{equation}
\begin{equation}
    \partial_2: L \to H,
\end{equation}
such that the composition $\partial_1\circ \partial_2 = 0$ is the trivial homomorphism, actions
\begin{equation}
  \triangleright_1: G \times H \to H,  
\end{equation}
\begin{equation}
    \triangleright_2: G \times L \to L,
\end{equation}
and a homomorphism
\begin{equation}
    \{-,-\}: H\times H \to L,
\end{equation}
satisfying the following identities
\begin{eqnarray}
    \partial_1\left(g\triangleright_1 h \right) &=& g \partial_1 \left( h\right) g^{-1} ,
\\
   \partial_2\left(g\triangleright_2 \ell \right) &=& g\triangleright_1 \partial_2 \left( \ell\right),
   \\
    g\triangleright_2 \{ h_1,h_2\} &=& \{ g\triangleright_1 h_1, g\triangleright_1 h_2 \},
\\
    \partial_2\left( \{h_1,h_2\} \right) &=&  h_1h_2 h_1^{-1} \left(\partial_1(h_1)\triangleright_1 h_2^{-1} \right),
\\
  \{\partial_2\left( \ell_1\right),\partial_2\left(\ell_2 \right) \}  &=& \ell_1\ell_2\ell_1^{-1}\ell_2^{-1}, \label{eq:2xmod-bracket-identity}
    \\
    \{h_1h_2,h_3\} &=& \{ h_1,h_2 h_3 h_2^{-1}\} \left(\partial_1\left( h_1\right)\triangleright_2\{h_2,h_3\} \right),
\\
  \{h_1,h_2h_3\}  &=& \{h_1,h_2\} \{h_1,h_3\} \{ \left(  h_1h_3 h_1^{-1} \left(\partial_1(h_1)\triangleright_1 h_3^{-1}\right)\right)^{-1}, \partial_1\left(h_1\right)\triangleright_1 h_2\}
\\
  \{ \partial_2\left(\ell\right),h\}\{h,\partial_2\left(\ell\right)\}  &=& \ell \left( \partial_1\left(h\right) \triangleright_2 \ell^{-1}\right).
\end{eqnarray}

There is, furthermore, an induced $H$-action on $L$ defined as
\begin{eqnarray}\label{eq:3gp-haction}
    \triangleright:& H\times L & \to L,
    \\
    & (h,\ell) & \mapsto \ell \{\left(\partial_2\left(\ell\right)\right)^{-1},h \}.
\end{eqnarray}

\subsection{Čech data for principal $3$-bundle}

Let ${\cal G}$ be a smooth $3$-group presented as a $2$-crossed module of Lie groups $\left(G,H,L,\partial_1,\partial_2, \triangleright_1, \triangleright_2,\{-,-\} \right)$. For $\{U_i\}_{i \in {\cal I}}$ a good cover of a smooth manifold $\Sigma$, a principal $\cal G$ bundle consists of \cite[Section 3]{Gagliardo:2025oio} cochains
\begin{eqnarray}
    g_{ij}: U_{ij} \to G,
    \ \
    h_{ijk} : U_{ijk} \to H,
    \ \
    \ell_{ijkl} : U_{ijkl} \to L,
\end{eqnarray}
satisfying the identities
\begin{eqnarray}
    g_{ik} &=& \partial_1\left(h_{ijk} \right)g_{ij} g_{jk},
    \\
    h_{ikl} h_{ijk} \partial_2\left(\ell_{ijkl}\right) &=& h_{ijl}\left(g_{ij}\triangleright_1 h_{jkl}\right),
\end{eqnarray}
\begin{gather}
    \ell_{ijkl} \left(\left(g_{ij}\triangleright_1 h_{jkl}^{-1}\right)\triangleright \ell_{ijlm}\right)\left(g_{ij}\triangleright_2\ell_{jklm} \right) 
    \\= \left(h^{-1}_{ijk}\triangleright \ell_{iklm} \right)\{h^{-1}_{ijk},g_{ik}\triangleright_1 h^{-1}_{klm} \} \left(\left(g_{ij}g_{jk}\triangleright_1h^{-1}_{klm}\right) \triangleright \ell_{ijkm}\right) \nonumber.
\end{gather}

Given a pair of $\cal G$ bundles $(g_{ij},h_{ijk},\ell_{ijkl})$, $(g'_{ij},h'_{ijk},\ell'_{ijkl})$, a \textit{gauge transformation}
\begin{equation}
\left(g_i,h_{ij},\ell_{ijk}\right): \left(g_{ij},h_{ijk},\ell_{ijkl}\right) \xrightarrow{\sim} \left(g'_{ij},h'_{ijk},\ell'_{ijkl}\right)
\end{equation}
consists of cochains
\begin{eqnarray}
    g_{i}: U_{i} \to G,
    \ \
    h_{ij} : U_{ij} \to H,
    \ \
    \ell_{ijk} : U_{ijk} \to L,
\end{eqnarray}
satisfying the identities
\begin{equation}
    g'_{ij} = g_i^{-1} \partial_1\left( h_{ij}\right)g_{ij}g_j,
\end{equation}
\begin{equation}
    h'_{ijk} = g_i^{-1} \triangleright_1 \left(h_{ik}h_{ijk}\partial_2\left(\ell_{ijk}\right)\left(g_{ij}\triangleright_1 h_{jk}^{-1} \right) h_{ij}^{-1}\right),
\end{equation}
\begin{gather}
 \ell'_{ijkl} = g_i^{-1} \triangleright_2 \Big(\left( h_{ij}g_{ij}\triangleright_1 h_{jk}\right)\triangleright \big( \ell_{ijk}^{-1} h_{ijk}^{-1}\triangleright\{h_{ijk},\left(g_{ij}g_{jk}\right)\triangleright_1 h_{kl}\}^{-1} \label{eq:ell-gauge-transformation}
 \\
 \times \left(\left(g_{ij}g_{jk} \right)\triangleright_1 h_{kl} \right)\triangleright \Big( \left(h_{ijk}^{-1}\triangleright \ell_{ikl}^{-1} \right)\ell_{ijkl}\nonumber
 \\ 
 \times \left( g_{ij}\triangleright_1 h_{jkl}^{-1}\right)\triangleright \Big(\ell_{ijl}\left( \left( g_{ij}\triangleright_1 h_{jl}^{-1}\right)h_{ij}^{-1}\right)\triangleright \{h_{ij},g_{ij}\triangleright h_{jl} \}^{-1} \nonumber
 \\
 \times h_{ij}^{-1}\triangleright \{ h_{ij},g_{ij}\triangleright_1 h_{jkl}\}^{-1}\Big)\nonumber
 \\
 \times \left(h_{ij}^{-1}\triangleright\{h_{ij},\partial_2\left(g_{ij}\triangleright_2 \ell_{jkl}\right) \}^{-1} \right) \left( g_{ij}\triangleright_2 \ell_{jkl}\right)\left(h_{ij}^{-1} \triangleright \{h_{ij},\left(g_{ij}g_{jk}\right)\triangleright_1 h_{kl}^{-1}\}^{-1} \right)\nonumber
 \\
 \times h_{ij}^{-1} \triangleright \{ h_{ij}, g_{ij}\triangleright_1 h^{-1}_{jl}\}^{-1}\Big)
 \Big)    . \nonumber
\end{gather}

\bibliographystyle{utphys}
\bibliography{draft}

\end{document}